\theoremstyle{plain}
\newtheorem{thm}{Theorem}[section]
\newtheorem{assu}{Assumption}
\newtheorem{cor}[thm]{Corollary}
\newtheorem{prop}[thm]{Proposition}
\newtheorem{lm}[thm]{Lemma}
\newtheorem{dfn}[thm]{Definition}
\newtheorem{remark}[thm]{Remark}
\newenvironment{sproof}{%
  \proof}{\endproof}
\title{On the general principle of the mean-field approximation for many-boson dynamics}
\author{Clément Rouffort}
\begin{document}


\def\N{\mathbb{N}}
\def\Z{\mathscr{Z}}
\def\R{\mathbb{R}}
\def\C{\mathbb{C}}
\def\P{\mathbb{P}}
\def\Tr{\mathrm{Tr}}
\def\Id{\mathrm{Id}}
\def\Im{\mathrm{Im}}
\def\Re{\mathrm{Re}}
\def\b{\overset{\sim}{b}}
\newcommand{\zed}{\mathcal{Z}_{0}}
\newcommand{\zeds}{\mathcal{Z}_{s}}
\newcommand{\zedsi}{\mathcal{Z}_{\sigma}}

\maketitle

\begin{abstract}
The mean-field approximations of  many-boson dynamics are known to be  effective in many physical relevant situations.
The  mathematical justifications of  such approximations rely generally  on  specific considerations which depend too much on the model and  on the initial states of the system which are required to be well-prepared.  In this article, using the method of Wigner measures, we prove  in a fairly complete generality  the accuracy of the mean-field approximation. Roughly speaking, we show that
the dynamics of a  many-boson system are well approximated, in the limit of a large number of particles,  by a one particle mean-field equation if the following general principles are satisfied:
\begin{itemize}
\item The Hamiltonian is in a mean-field regime (i.e.: The interaction and the free energy parts are of the same order with respect to the number of particles).
\item The interaction  is relatively compact with respect to a one particle and it  is dominated by the free energy part.
\item  There exists at most one  weak solution for the  mean-field equation for each initial condition.
\end{itemize}
The convergence towards  the mean field limit is described in terms of Wigner (probability) measures and it holds  for any initial  quantum states with a finite free energy average.  The main novelty of this article lies  in the use of fine properties of uniqueness  for the  Liouville equations in infinite dimensional spaces.
\end{abstract}

\textit{Keywords:} Mean-field theory, second quantization, transport properties, Wigner measures, nonlinear PDEs, Schr\"odinger and Hartree equations.

\section{Introduction}

The mean field theory provides effective and simple  approximations for complex systems composed of a large number of  interacting  particles or components. Generally speaking the theory is based on a principle of
averaging  of  the interaction effects exerted  on  each single particle by the others ones. From a historical point of view, the mean field theory  has its origins in the statistical mechanics of the early twentieth  century. At present, it has widespread applications in cross-cutting themes as diverse as nuclear physics,  neuroscience, and artificial intelligence (see e.g.~\cite{Negele:1982aa,MR0526368,zbMATH01149420}).

The mathematical foundations of the mean field theory are also so diverse and touching  various domains of mathematics like stochastic processes, game theory, variational calculus and nonlinear analysis
(see e.g.~\cite{MR2762362,MR1673235,MR3360742,MR3719544,MR3304746,MR3748494,MR3475664}). In particular, there  are  a large number of methods, techniques and results related to this subject (see e.g.~\cite{MR3170216,MR2313859,MR3737034,MR3695801,MR3394620,MR3461029,MR3427937}).  Therefore, it is  worth trying to unify  some of the aspects of this theory and to put them in a clear-cut form. In fact, one of the interesting questions is the validity of the mean field approximation which in principle  should emerge  as a simple consequence of symmetries and scales considerations without further specification of a model or a well prepared state of  many-body systems.

The main purpose of the present article is to prove, under general considerations, the effectiveness of the mean field approximation for quantum many-body  systems composed of identical particles. Such a many-body problem is usually described by a Hamiltonian taking the form
\begin{equation}
\label{2_eq.intro.1}
\mathbb{H}_N=\sum_{j=1}^N T_j+\frac{1}{N} \sum_{1\leq j,k\leq N} V_{j,k}\,,
\end{equation}
 where $T_j$ is interpreted as  the free energy of the $j$th particle and $V_{j,k}$ refers to the interaction between the $j$-th and $k$-th particles. For simplicity, the multiparticles interactions involving more than three bodies are not considered here.  Generally speaking, the mean field theory is used to approximate either the dynamics or the equilibrium and ground states of  many-body systems. The analysis is mostly made  through the  study of  the correlation functions of  time-independent  or  time-dependent  states. This article deals precisely with the second aspect and shows that the mean field approximation of the
Hamiltonian dynamics generated by \eqref{2_eq.intro.1} relies  conceptually in three principles:
\begin{itemize}
  \item Symmetry: The states describing the system \eqref{2_eq.intro.1} are invariant  with respect to any exchange of particles.   More precisely, $\mathbb{H}_N$ is considered as an operator (or a quadratic form) acting on a symmetric tensor product $\otimes^N_s\mathfrak{H}$ of a given Hilbert space $\mathfrak{H}$.  This means that one deals here with bosonic particles. The fermionic case is somewhat different and will be considered elsewhere.
  \item Mean-field scale: The interaction part in \eqref{2_eq.intro.1} is formally  of order $N$
  as the free energy part. The factor $1/N$ in front of the interaction ensures that the two parts are of the same order.
  \item Weak interaction:  The interaction effects are dominated by the main flow of the free energy
   thanks to some relative compactness properties of the interaction.   Such condition will be made precise in the subsequent section (see assumptions  \eqref{2_D1} and \eqref{2_D2}).
\end{itemize}
As it will be more clear later on, it is reasonable to believe that the above requirements are the basic conditions under which the dynamics of \eqref{2_eq.intro.1} can be approximated, as $N\to\infty$, by a one-particle mean field equation taking the form
\begin{equation}
\label{2_eq.intro.2}
i\partial_t u= T u+V_{av}(u)\,,
\end{equation}
where $V_{av}$ is a nonlinear vector field obtained as an average of the interaction part in $\mathbb{H}_N$. So, the aim here is to prove rigourously such generic principle in the basis of the symmetry and scale considerations mentioned above, no matter which specific model and specific states are considered.

\bigskip
\emph{Overview of the literature:} The mathematical study of the mean field theory for quantum systems dates back to the works of Hepp \cite{MR0332046} and Ginibre-Velo \cite{MR539736,MR530915}. In these pioneering  articles the authors rely on coherent structures. In the physical literature however such concepts are known well before (see for instance \cite{MR0223148}). Stimulated by the work of Spohn \cite{MR578142} who adapted the BBGKY approach of classical statistical mechanics to the framework of quantum systems, several authors studied the mean field approximation of many-body  Schr\"odinger operators  by means of the same method, see e.g.~\cite{MR1869286,MR1926667}. Subsequently, in a series of articles by Erd\H{o}s-Schlein-Yau  the mean field theory is used for the  derivation of non-linear schr\"odinger and Gross-Pitaevskii equations which are related to the celebrate phenomenon of  Bose-Einstein condensation \cite{MR2257859,MR2276262,MR2525781,MR2680421}. Following these works, the BBGKY approach became quite popular and the subject attracted
a much growing  attention specially from the mathematical physics and nonlinear PDEs communities
see e.g.~\cite{MR3210237,MR3385343,MR3165917,MR3246038,MR3551830,MR3500833,MR3506807}. Other approaches
were also elaborated   and others aspects studied thoroughly, see e.g.~ \cite{MR3360742,MR3293448,MR3013052,MR3395127,MR3466843,MR3170216,MR3310520,MR3743758}.

\bigskip
\emph{Overview of the main result:}
The effectiveness of the mean field approximation is proved by means of the Wigner measures method elaborated in
the series of papers \cite{MR3379490,MR2802894,MR2513969,MR2465733}.  Such method is based on two steps, namely a convergence and a uniqueness arguments similar to the BBGKY approach. The main difference lies in the
main quantities that are analyzed. While the BBGKY method uses reduced density matrices the Wigner measures method uses Fourier-Wigner transform of the time dependent  quantum states. The relationship between the two methods
is recently studied in \cite{Ammari:2018aa} where  the advantages of the latter approach is highlighted.
The method of Wigner measures was  used for the study of Schr\"odinger many-boson dynamics with singular  potentials of  Coulomb type in \cite{MR3379490}.  Subsequently, the latter result was improved in \cite{MR3661404} where a quite general framework is presented. Here, we build upon the work of Q.~Liard and prove that the mean field approximation actually relies only  in  some elementary principles.  In particular, we use for the convergence step the same assumptions (A1)-(A2) and (D1) or (D2) as in Liard's work and for the uniqueness step we get rid of  cumbersome assumptions in his result.

\section{Preliminaries and main results}
\label{2_prmres.sec}

We introduce  in the following paragraph  an abstract framework suitable for the study of many-boson dynamics and define some useful  notations. The main results of this article are stated thereafter. Several relevant examples are discussed in the subsequent section.

\medskip
The following notations are often used. If $K$  is an operator on a Hilbert space  then  the domain of $K$ is denoted by $D(K)$. If additionally $K$ is self-adjoint then the form domain of $K$, which is equal to $D(|K|^{\frac{1}{2}})$, is denoted by $Q(K)$.
The space of trace-class operators from a Hilbert space $\mathfrak{H}_1$ into another one $\mathfrak{H}_2$ is denoted by $\mathscr{L}^1(\mathfrak{H}_1,\mathfrak{H}_2)$  and the one of compact operators is denoted by $\mathscr{L}^\infty(\mathfrak{H}_1,\mathfrak{H}_2)$. If $\mathfrak{H}_1=\mathfrak{H}_2$, we  denote these spaces   by $\mathscr{L}^1(\mathfrak{H}_1)$ and $\mathscr{L}^\infty(\mathfrak{H}_1)$ respectively.

\medskip
Let $X$ a Hausdorff  topological space then $\mathscr{B}(X)$ is the Borel $\sigma$-algbera on $X$.
The set of all Borel probability measures on the  measurable space $(X, \mathscr{B}(X))$ will be denoted by
$\mathfrak{P}(X)$. If $X,Y$ are two Hausdorff topological spaces, $T:X\to Y$ is a Borel  map  and $\mu\in\mathfrak{P}(X)$ then the push-forward (or the image ) measure of $\mu$ by the map $T$ is a Borel probability measure on $Y$,
denoted by  $  T_\sharp \mu\in\mathfrak{P}(Y)$, and defined as
$$
\forall \mathcal{A}\in \mathscr{B}(Y), \qquad  T_\sharp \mu(\mathcal{A})=\mu\big(T^{-1}(\mathcal{A})\big)\,.
$$

\bigskip
\textit{Many-bosons Hamiltonian:} Let $\Z$ be a infinite dimensional separable complex Hilbert space. We denote $\langle .,. \rangle_{\Z}$ the scalar product on $\Z$ and we consider it to be anti-linear on the left part. The Fock space associated to $\Z$ is the following  direct sum of Hilbert tensor products,
\[
\Gamma(\Z) = \bigoplus_{n =0}^{\infty} \Z^{\otimes n}\,,
 \]
with the convention $ \Z^{\otimes 0}=\C$.
 For  any permutation $\sigma \in \mathfrak{S}_{n}$ of $n$ elements, one defines an unitary operator $\Pi_{\sigma} : \Z^{\otimes n} \longrightarrow \Z^{\otimes n} $ satisfying the  identity,
\[
\Pi_{\sigma} ( f_{1} \otimes ... \otimes f_{n} ) = f_{\sigma(1)} \otimes ... \otimes f_{\sigma(n)} \,,
\]
for any  $(f_{1},...,f_{n}) \in \Z^{n}$. Hence, using these operators one  defines an  orthogonal projection,
\begin{equation}
\label{2_eq.projsym}
P_{n} = \frac{1}{n!} \sum_{ \sigma \in \mathfrak{S}_{n}} \Pi_{\sigma}\,,
\end{equation}
which is a projection on the closed subspace of the $n$-fold symmetric tensor product $P_{n}  \Z^{\otimes n}$ denoted in all the  sequel by
$$
\bigvee^{n} \Z=P_{n}  \Z^{\otimes n}.
$$
 The bosonic Fock space associated to $\Z$ is the Hilbert space defined by,
\[
\Gamma_{s}(\Z) = \bigoplus_{n=0}^{\infty} \bigvee^{n} \Z ~.
\]
The dynamics of many-boson systems are given through operators acting on the above bosonic Fock spaces. Furthermore, the states of many-boson systems are convex combinations of normalized vector (or pure) states in $\Gamma_{s}(\Z)$. The use of Fock spaces is   natural for the study of
many-boson systems and furthermore  the second quantization techniques are quite useful for the mean-field theory (see e.g.~\cite{MR2465733}).

\bigskip
Consider from now on  an operator $A$ on $\Z$ and suppose for all the sequel that:
\begin{assu}
\begin{equation}
\label{2_A1}\tag{A1}
A \text{ is a non negative self-adjoint operator on } \Z.
\end{equation}
\end{assu}

\noindent
The free Hamiltonian without interaction of a many-boson system associated to the one-particle operator $A$ is
\[
H_{N}^{0} = \sum_{i=1}^{N} A_{i}\,.
\]
Here $N$ refers to the number of particles and  $A_i$ to the operators acting on $\Z^{\otimes N}$ and  defined by
\[
 A_{i} = 1^{\otimes (i-1)} \otimes A \otimes 1^{\otimes (N-i)}\,,
 \]
where $A$ in the right hand side acts in the $i^{th}$ component of the tensor product. It is known that $H_{N}^{0}$ is a non-negative self-adjoint operator on $\bigvee^{N} \Z$. A typical example for the couple $(\Z, A)$ is $\Z=L^{2}(\mathbb{R}^{d})$ and $A=-\Delta_x$. In this case, $\bigvee^{N} \Z$ is the space $L^{2}_{s}(\mathbb{R}^{dN})$ of symmetric square integrable functions  and  the operator $A_{i}$ is the Laplacian operator $- \Delta_{x_{i}}$  in the $x_i$ variable.

\medskip
In order to define a two particles interaction in a general abstract setting, we consider a symmetric quadratic form $q$ on $Q(A_{1}+A_{2}) \subset  \Z^{\otimes 2}$. Note that $A_{1}+A_{2}$ is considered as an operator on $\Z^{\otimes 2}$ verifying in particular,
\[
\forall(z_{1},z_{2}) \in D(A)^{2} ~,~ (A_{1}+A_{2})(z_{1} \otimes z_{2}) = (Az_{1})\otimes z_{2} + z_{1} \otimes (Az_{2})\,.
\]
Remark also that the subspace $Q(A_{1}+A_{2})$ contains non-symmetric vectors. In this article we consider the general following assumption \eqref{2_A2} satisfied by $q$ :

\begin{assu}
$q$ is a symmetric sesquilinear form on $Q(A_{1}+A_{2})$ satisfying:
\begin{equation}
\label{2_A2}\tag{A2}
 \exists 0 < a <1 ~,~ b > 0 ~,~ \forall u \in Q(A_{1}+A_{2}) ~,~ |q(u,u)| \leq a\, \langle u , (A_{1}+A_{2})u \rangle_{\otimes^{2} \mathcal{Z}} + b \,||u||_{\otimes^{2} \mathcal{Z}}^{2}\,.
 \end{equation}
\end{assu}

The assumptions \eqref{2_A1}-\eqref{2_A2} allow to consider the quadratic form $q$ as a bounded operator $\widetilde{q}$ acting from $Q(A_{1}+A_{2})$ equipped with the graph norm into its dual $Q^{'}(A_{1}+A_{2})$. Recall that the graph norm  on $Q(A_{1}+A_{2})$ is given by:
\[
||u||_{Q(A_{1}+A_{2})}^{2} = \langle u, (A_{1}+A_{2}+Id)\, u \rangle_{\Z^{\otimes 2 }}\,.
\]
Then, one can write:
\begin{equation}
\label{2_eq.qtilde}
 \forall(u,v) \in Q(A_{1}+A_{2}) ~,\quad q(u,v) = \langle u, \widetilde{q}\,v \rangle_{\Z^{\otimes 2 }} \,,
\end{equation}
where  the right hand side is actually a  duality bracket extending the inner product of $\Z^{\otimes 2}$.
Starting from the sesquilinear form $q$ one can construct a two-body interaction. In fact, define for any $p \in \mathbb{N}$ and $1 \leq i,j \leq p$,
\begin{equation}
\label{2_qformij}
q_{i,j}^{(p)}(\phi_{1}\otimes...\otimes \phi_{p}, \psi_{1} \otimes ... \otimes \psi_{p}) = q(\phi_{i}\otimes\phi_{j},\psi_{i}\otimes\psi_{j}) \prod_{k \neq i,j} \langle \phi_{k},\psi_{k} \rangle
\end{equation}
for any $(\phi_{1},...,\phi_{p},\psi_{1},...,\psi_{p}) \in Q(A)^{2p}$. Then as shown in \cite[Lemma 3.1]{MR3661404}, one can extend each $q_{i,j}^{(p)}$ to a unique symmetric quadratic form on $Q(H_{N}^{0})$. Each form $q_{i,j}^{(p)}$ represents the two-body interaction between the $i$-th and $j$-th particles. Such abstract construction covers all the known relevant examples of interactions.

\bigskip

The many-boson Hamiltonian in the mean field regime  is considered as the symmetric quadratic form,
 \begin{equation}
\label{2_HN}
 H_{N} = \sum_{i=1}^{N} A_{i} + \frac{1}{N} \sum_{1 \leq i,j \leq N} q_{i,j}^{(N)}\,.
 \end{equation}
   Under the assumptions \eqref{2_A1}-\eqref{2_A2}, it is shown in  \cite[Proposition 3.4 ]{MR3661404} that there exists a unique self-adjoint operator denoted also by $H_{N}$ and associated to the quadratic form \eqref{2_HN}. Moreover,
   the many-boson Hamiltonian $H_{N}$ satisfies $Q(H_{N})=Q(H_{N}^{0})$. In fact, the interaction part is
relatively form bounded with respect to $H_N^0$ and hence the self-adjointness follows by  the standard
KLMN theorem. This means in some sense that the many-boson Hamiltonian is a perturbation of the free Hamiltonian and the
interaction is dominated by $H_N^0$. Such requirement for the many-body systems ensures the existence and uniqueness of dynamics  and it is very standard  in quantum mechanics.

As explained in the introduction, the Wigner measures method consists in two steps, namely convergence and uniqueness. The convergence part relies in a principle reflecting a sort of weak interaction  expressed by a relative compactness of the interaction with respect to the free Hamiltonian in a one particle variable. In fact, we have two distinguished cases where  in one hand the compactness comes from the free part (i.e., $A$ has a compact resolvent) and in the other hand  the interaction is relatively compact in one variable. Below these requirements are  given precisely. Notice that we only need one of the assumptions to be satisfied.
\begin{assu}
$A$ has a compact resolvent and there exists a subspace $D$ dense in $Q(A)$ such that for any $\xi \in D$,
\begin{equation}
 \label{2_D1}\tag{A3}
 \begin{aligned}
 \lim\limits_{ \lambda \to + \infty} || \langle \xi | \otimes (A+1)^{-\frac{1}{2}} \,P_{2}\, \widetilde{q}\,(A_{1}+A_{2}+ \lambda)^{-\frac{1}{2}}||_{\mathscr{L}(\bigvee^{2} \Z,\Z)} = 0\,,
 \\
 \lim\limits_{ \lambda \to + \infty} || \langle \xi | \otimes (A+\lambda)^{-\frac{1}{2}}\, P_{2}\, \widetilde{q}\,(A_{1}+A_{2}+ 1)^{-\frac{1}{2}}||_{\mathscr{L}(\bigvee^{2} \Z,\Z)} = 0\,.
\end{aligned}
\end{equation}
\end{assu}
\noindent
Recall that $P_2$ is the orthogonal projection on the symmetric tensor product $\bigvee^2\Z$ given in \eqref{2_eq.projsym} and $\tilde{q}$ is related to the quadratic form $q$ according to \eqref{2_eq.qtilde}.
\begin{assu}
There exists a subspace $D$ dense in $Q(A)$ such that for any $\xi \in D$,
\begin{equation}
\label{2_D2}\tag{A4}
\langle \xi | \otimes (A+1)^{-\frac{1}{2}}\, P_{2} \,\widetilde{q}\,(A_{1}+
A_{2}+ 1)^{-\frac{1}{2}} \in \mathscr{L}^{\infty}\big(\bigvee^{2} \Z,\Z\big)\,.
\end{equation}
\end{assu}

\bigskip
\emph{Mean-field equation:}
The mean-field approximation reduces the complicate dynamics of  many-boson systems  to simpler one particle evolution equations called the mean-field equations.  Depending in the different choices of the quadratic form $q$, one obtains different mean-field equations. For instance, taking the form $q$ to be  a two-body delta interaction in one dimension defined by
\[
q(z^{\otimes2},z^{\otimes2})= \langle z^{\otimes2}, \lambda \,\delta(x_{1}-x_{2}) z^{\otimes2} \rangle = \lambda ||z||_{L^{4}(\mathbb{R})}^{4} \,,
\]
then the mean field equation in this case is the cubic nonlinear Schrödinger equation,
\begin{equation}
\label{2_eq.nlsh1}
i\partial_{t}\varphi=-\Delta \varphi+\lambda |\varphi|^2 \varphi\,.
\end{equation}
In this case, the self-adjoint operator $A$ corresponds to the operator $-\Delta_{x}$. Note that several other examples  are provided  in Section \ref{2_sec.examples}.
 
 We give below an abstract way of defining the correct mean-field equation  corresponding to the many-boson Hamiltonian \eqref{2_HN}.
Let $(Q(A),||.||_{Q(A)})$ be the form domain of the operator $A$ equipped with the graph norm,
\[
||u||_{Q(A)}^{2} = \langle u,(A+1)u \rangle ~,~ u \in Q(A),
\]
and $Q^{'}(A)$ its dual with respect to the inner product of $\Z$.  Since these spaces will be used throughout all the article, we make  the following shorthand notations for such Hilbert rigging:
\begin{equation}
\Z_1=(Q(A),||.||_{Q(A)}) \subset \Z \subset \Z_{-1}=(Q'(A),||.||_{Q'(A)})\,.
\end{equation}
Using the quadratic form $q$, we can define a quartic monomial $q_{0}$ associated to $q$,
\begin{equation}
\label{2_intr.q0}
\forall z \in \Z_1 ~,~ q_{0}(z) = \frac{1}{2}q(z\otimes z, z \otimes z) \,.
\end{equation}
The particularity of this monomial is that it is Gâteaux differentiable on $\Z_1$ and the map $u \longrightarrow \partial_{\overline{z}} q_{0}(z)[u]$ is an anti-linear continuous form on $Q(A)$ and hence $ \partial_{\overline{z}} q_{0}(z)$ can be identified with a vector belonging to $\Z_{-1}$ by the Riesz theorem. We can then define the mean-field equation as,
\begin{equation}
\quad \quad \quad
\label{2_eq.mf}
\left\{
\begin{aligned}
&i\partial_{t}\gamma(t)=A \gamma(t)+ \partial_{\overline{z}}q_{0}(\gamma(t)),&\\
&\gamma(0)=x_0 \in \Z_1\,.&
\end{aligned}
\right.
\end{equation}
This is a semi-linear evolution equation which is usually reinterpreted and studied in the interaction representation.  In fact, if one considers a solution
$\tilde\gamma$ of \eqref{2_eq.mf} then  $\gamma(t)= e^{it A} ~\tilde\gamma(t)$ is formally a solution of the following initial value problem,
\begin{equation}
\label{2_int.IVP}
  \left\{
    \begin{aligned}
    &&\dot{\gamma}(t) \ = v(t,\gamma(t)) \\
    &&\gamma(0) \ = x_{0} \in \Z_{1}, \\
    \end{aligned}
  \right.
\end{equation}
where $v:\R\times \Z_{1}\to \Z_{-1} $ is the non-autonomous vector field given by
\begin{equation}
\label{2_int.v}
 v(t,z):= -ie^{itA} \,\partial_{\overline{z}}q_{0}(e^{-itA}z)\,.
\end{equation}
The following definition introduces the notions of weak solutions for the  initial value problem \eqref{2_int.IVP}  and their  uniqueness. Such definition is motivated by the properties of the vector field $v$ which satisfies in particular the  bound,
    \begin{equation}
    \label{2_int.bndv}
   \exists C > 0 ~,~ \forall (t,z) \in \R \times \Z_{1} ~,~ ||v(t,z)||_{\Z_{- 1}} \leq C\, (||z||_{\Z_{1}}^{2}.||z||_{\Z}^{2}+1) \,,
    \end{equation}
proved in the Appendix B in Lemma \ref{2_Controle de v}.

\begin{dfn}
\label{2_defweaksol}
A weak solution  of the initial value problem \eqref{2_int.IVP}, defined  on  a time interval $0\in I$, is a function $t \in I \longrightarrow \gamma(t)$ belonging to the space $L^{2}(I,\Z_{1}) \cap L^{\infty}(I,\Z) \cap W^{1,\infty}(I,\Z_{-1})$ and  satisfying \eqref{2_int.IVP} for a.e $t \in I$.
In addition, we say that the initial value problem \eqref{2_int.IVP} satisfies the weak uniqueness property  if \eqref{2_int.IVP}  admits at most one weak solution for each initial datum $x_0\in \Z_1$. 
\end{dfn}
Recall that $W^{1,\infty}(I,\Z_{-1})$ here denotes the Sobolev space composed of classes of functions in $L^\infty(I,\Z_{-1})$ with  distributional first derivatives in  $L^\infty(I,\Z_{-1})$. Remember that an element  $\gamma\in W^{1,\infty}(I,\Z_{-1})$ is an absolutely continuous curve in $\Z_{-1}$  with almost everywhere defined derivatives in $\Z_{-1}$ satisfying $\dot\gamma\in L^\infty(I,\Z_{-1})$.
It also makes sense to require an initial condition in \eqref{2_int.IVP},   since the weak solutions are in particular  continues curves valued in $\Z_{-1}$.

A consequence of the above bound \eqref{2_int.bndv} is that it makes sense to consider weak solutions of  the initial value problem \eqref{2_int.IVP} as given by Definition \ref{2_defweaksol} and actually the problem is   equivalent  to the integral equation,
\begin{equation}
\label{2_inteq}
\gamma(t)=x_0+\int_0^t v(s,\gamma(s)) \, ds
\end{equation}
for any $t\in I$.  In fact, if $\gamma$ is a weak solution then the function  $s\in I \to ||v(s,\gamma(s))||_{\Z_{-1}}$ belongs to $L^1(I,ds)$ thanks to the bound \eqref{2_int.bndv} and hence $\gamma$ satisfies  \eqref{2_inteq}. Conversely,  if $\gamma$ is a curve in $L^{2}(I,\Z_{1}) \cap L^{\infty}(I,\Z) \cap W^{1,\infty}(I,\Z_{-1})$ verifying \eqref{2_inteq} then $\gamma$ is an absolutely continuous function satisfying \eqref{2_int.IVP}.

 Remark that the nonlinear Hamiltonian equation \eqref{2_eq.mf} admits the following formal conserved quantities:
\begin{itemize}
\item The charge $||z||_{\Z}$ (following from anti-linear consideration on the vector $ \partial_{\overline{z}} q_{0}(z)$).
\item The classical energy $h(z) = \langle z,Az \rangle + q_{0}(z)= \langle z,Az \rangle + \frac{1}{2}q(z \otimes z, z \otimes z)$.
\end{itemize}
 It is worth noticing that our main result (Theorem \ref{2_main.thm}) does not require existence of solutions nor the conservation of charge and energy for the mean-field equation \eqref{2_int.IVP}. In this respect, it improves significantly  the result of Q.~Liard \cite{MR3661404}. Note also that the nonlinear  equations \eqref{2_eq.mf} and \eqref{2_int.IVP} are  gauge invariant with respect to the unitary group $U(1)$. Such symmetry was recently used in \cite{Ammari:2018aa} to relate the Hartree and Gross–Pitaevskii hierarchies to Liouville's equations.

\bigskip
\textit{Quantum states and Wigner measures:} The mean-field problem is studied here through the Wigner measures method elaborated by Z.~Ammari and F.~Nier in \cite{MR2465733,MR2513969,MR2802894,MR3379490}. The idea of Wigner measures comes from the finite-dimensional semi-classical analysis  and it  has been generalized to infinite dimensional spaces in the latter references. One of the advantages of such method is the possibility to study the mean-field approximation for any initial quantum states without appealing to a coherent structure (like coherent or factorized states) or well-prepared states (like states which are asymptotically factorized).
We recall that a factorized state on $\bigvee^{n} \Z$ is a state of the form $| \phi^{\otimes n} \rangle\langle  \phi^{\otimes n} |$ with $\phi \in \Z$, $||\phi||_{\Z}=1$ and that a coherent state is a state on $\Gamma_s(\Z)$ created from the vacuum by the action of the Weyl operator.

In the sequel, we will consider sequences of normal states on $\bigvee^{N} \Z$ labeled by $N\in\N$ where $N$ represents the number of particles. Remember that a normal state $\rho_{n}$ on $\bigvee^{n} \Z$ is a non-negative  normalized element of the Schatten space of trace class operators  $\mathscr{L}^{1}(\bigvee^{n} \Z)$, i.e.: $\rho_n\geq 0$ and ${\rm Tr}[\rho_{n}]=1$. Before giving  the definition of Wigner measures, we recall briefly the definition of the Weyl operator. For any vector $z \in \Z$, the Weyl operator
$\mathcal{W}(z)$ is the unitary operator,
$$
\mathcal{W}(z)=e^{i \Phi(z)},
$$
 where $\Phi(z) = \frac{1}{\sqrt{2}} (a(z) + a^{*}(z))$  is the field operator and $a(z), a^{*}(z)$ are the annihilation-creation operators
satisfying the canonical commutation relations,
$$
[a(z), a(y)]=0=[a^*(z), a^*(y)]\,, \qquad  [a(z), a^*(y)]=\langle z, y\rangle \,1\,.
$$
For more details about Weyl operators and field operators, we may refer the reader to \cite{MR3060648,MR1441540,MR2465733}.

\begin{dfn}
Let $(\rho_{n})_{n \in \N}$ be a sequence of normal states on $\bigvee^{n} \Z$. The set $\mathcal{M}(\rho_{n} ,~ n \in \N)$ of Wigner measures of $(\rho_{n})_{n \in \N}$ is the set of all Borel probability measures $\mu$ on $\Z$ such that there exists an extraction $\varphi$ satisfying:
\[
\forall \xi \in \Z ~,~ \lim\limits_{ n \to + \infty} \Tr[ \rho_{\varphi(n)} \,\mathcal{W}\big(\sqrt{2\varphi(n)}\pi \xi\big)] = \int_{\Z} e^{2i\pi Re \langle \xi, z \rangle}\, d\mu(z) \,,
\]
where $\mathcal{W}(\sqrt{2\varphi(n)}\pi \xi)$ is the Weyl operator recalled above.
\end{dfn}
It is known that the set of wigner measures is non trivial, see \cite{MR1441540}, and modulo an extraction one can suppose that
 $\mathcal{M}(\rho_{n} ~,~ n \in \N)$  is reduced to a singleton. Usually, for the mean-field problem there is no loss of generality  in assuming a priori  the latter simpler situation. One can compare the set of Wigner measures to the ensemble of  real bounded sequences which generally could have several limit points. In fact, if one wants to prove a property for bounded sequences it is enough sometimes  to prove it only for convergent sequences.

\bigskip
\emph{Main result:}
Consider a sequence  $( \rho_{N})_{N \in \mathbb{N}}$ of normal states on $\bigvee^{N} \Z$. The time evolution of such states through the many-boson dynamics of $H_N$ are given by:
\begin{equation}
\label{2_rhot}
\rho_{N}(t) = e^{-it H_{N}} \rho_{N}  e^{it H_{N}} \,.
\end{equation}
According to the self-adjointness of the operators $H_{N}$, for all $t \in \R$, $\rho_{N}(t)$ is still a normal state. Working in the interaction representation, one considers the time evolved states:
\begin{equation}
\label{2_inter.rho}
\widetilde{ \rho}_{N}(t) = e^{itH_{N}^{0}} \rho_{N}(t) e^{-itH_{N}^{0}} ~.
\end{equation}
Our result on the convergence of the many-boson dynamics towards  the mean-field equation can be summarized as follows.  We prove under some general assumptions that if $ \mathcal{M}(\rho_{N} ,~ N \in \N)=\{\mu_0\}$ at initial time $t=0$ then at any later time the Wigner measures set of  ${ \rho}_{N}(t)$ and $\widetilde{ \rho}_{N}(t)$ are   singletons,
$$
 \mathcal{M}(\rho_{N} ,~ N \in \N)=\{\mu_t\}\, \quad \text{ and } \quad
  \mathcal{M}(\widetilde{ \rho}_{N}(t) ,~ N\in \N)=\{\tilde\mu_t\}\,,
$$
such that  $\mu_t$ and $\tilde\mu_t$  are Bore probability measures  on $\Z$ related for all times according to  the relation,
\begin{equation}
\label{2_mutimu}
\mu_t=(e^{-i t A})_{\sharp} \widetilde \mu_t\,.
\end{equation}
Moreover, there exists a Borel set $\mathcal{G}_0\subset \Z_1$  satisfying $\mu_0(\mathcal{G}_0)=1$ such that for any $x_0\in\mathcal{G}_0$ the initial value problem \eqref{2_int.IVP} admits a unique global weak solution $u(\cdot)$ with the initial condition given by $x_0$ and for any time $t\in \R$  the mapping
\begin{eqnarray*}
\phi(t): \mathcal{G}_0 &\rightarrow & \Z\\
x_0 &\rightarrow & u(t)
\end{eqnarray*}
is well defined, Borel and verifies  the important relation,
$$
\mu_t=\big(e^{-it A}\circ \phi(t)\big)_{\sharp} \mu_0.
$$
Such formulation of the mean-field limit in terms of Wigner measures  is known to imply the convergence of reduced density matrices, see for instance  \cite{MR2802894}.

\medskip
 We now state precisely our main theorem.

\begin{thm}
\label{2_main.thm}
Consider the many-boson Hamiltonian $H_N$ given by \eqref{2_HN} and  assume \eqref{2_A1}-\eqref{2_A2} and one of the assumptions \eqref{2_D1} or \eqref{2_D2}. Furthermore, suppose that the initial value problem \eqref{2_int.IVP} satisfies the  uniqueness property of weak solutions stated in Definition  \ref{2_defweaksol}.
Let $(\rho_{N})_{N \in \mathbb{N}}$ be a sequence of normal states on $\bigvee^{N} \Z$ such that:
\[
 \exists C > 0 ~,~ \forall N \in \mathbb{N} ~,~ \Tr[ \rho_{N} \,H_{N}^{0} ] \leq CN ~,~
\]
and such that the initial set of Wigner measures of $(\rho_{N})_{N \in \mathbb{N}}$ is a singleton, i.e.:
\[
\mathcal{M}( \rho_{N} ~,~ N \in \N) = \lbrace \mu_{0} \rbrace ~.~
\]
Then:
\begin{itemize}
\item For any time $t\in \R$, the set of Wigner measures of the sequence  $(\rho_{N}(t))_{N \in \mathbb{N}}$ given by $ \rho_{N}(t)=e^{-it H_{N}} \rho_{N}  e^{it H_{N}}$  is a singleton, i.e.:
\[
\mathcal{M}( \rho_{N} (t)~,~ N \in \N) = \lbrace \mu_{t} \rbrace ~.~
\]
\item  There exists a Borel set $\mathcal{G}_0\subset \Z_1$  satisfying $\mu_0(\mathcal{G}_0)=1$ such that for any initial condition $x_0\in\mathcal{G}_0$ the initial value problem \eqref{2_int.IVP} admits a unique global weak solution $u(\cdot)$  and for any time $t\in \R$  the mapping
\begin{eqnarray*}
\phi(t): \mathcal{G}_0 &\rightarrow & \Z\\
x_0 &\rightarrow & u(t)
\end{eqnarray*}
is well defined and Borel.
\item The Wigner measure $\mu_t$ is  identified for all times  as,
$$
\mu_t=\big(e^{-it A}\circ \phi(t)\big)_{\sharp} \mu_0.
$$
\end{itemize}
\end{thm}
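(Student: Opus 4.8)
The overall strategy is the two-step scheme of the Wigner measures method: a \emph{convergence} step in which the limit points of the Fourier--Wigner transforms of $\widetilde\rho_N(t)$ are shown to solve an associated Liouville (continuity) equation, followed by a \emph{uniqueness} step identifying these limits through the characteristic flow of the vector field $v$ from \eqref{2_int.v}. The starting point is to propagate the a priori energy bound. Since $H_N$ is conserved along the dynamics and, by \eqref{2_A2} and the KLMN construction recalled before \eqref{2_HN}, the interaction is relatively form-bounded with respect to $H_N^0$ so that $H_N^0$ and $H_N$ are form-comparable up to an $O(N)$ additive constant, the hypothesis $\Tr[\rho_N H_N^0]\le CN$ yields a uniform bound $\Tr[\widetilde\rho_N(t)\,H_N^0]\le C'N$ for all $t$, using that $H_N^0$ commutes with $e^{itH_N^0}$.

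This uniform free-energy bound is the analytic engine of the convergence step. It guarantees that the families of Wigner measures of $\widetilde\rho_N(t)$ and $\rho_N(t)$ are nonempty and tight, and, through the standard correspondence between the free-energy expectation and the second moment of the limiting measure, that each limit point $\widetilde\mu_t$ is carried by the form domain $\Z_1$ with $\int_{\Z_1}\|z\|_{\Z_1}^2\,d\widetilde\mu_t<\infty$. I would then differentiate in $t$ the expectations $\Tr[\widetilde\rho_N(t)\,\mathcal W(\sqrt{2N}\pi\xi)]$ of Weyl observables and pass to the limit in the resulting Duhamel identity; here the relative-compactness hypotheses \eqref{2_D1} or \eqref{2_D2} are exactly what is needed to control the nonlinear interaction contribution and close the limit. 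The outcome is that every limit point $(\widetilde\mu_t)_{t\in\R}$ is a weak solution, in duality with smooth cylindrical test functions, of the Liouville equation $\partial_t\widetilde\mu_t+\mathrm{div}\big(v(t,\cdot)\,\widetilde\mu_t\big)=0$; the bound \eqref{2_int.bndv} of Lemma \ref{2_Controle de v} together with the $\Z_1$-integrability makes the flux locally integrable, so this equation is meaningful.

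The decisive step --- and the one I expect to be the main obstacle --- is to upgrade the \emph{weak uniqueness of the ODE} \eqref{2_int.IVP} into \emph{uniqueness of the measure-valued solution} of this infinite-dimensional Liouville equation, and thereby into the push-forward identity. For this I would invoke a superposition principle of Ambrosio type adapted to the rigging $\Z_1\subset\Z\subset\Z_{-1}$: any weak solution $(\widetilde\mu_t)_t$ satisfying the integrability bounds above admits a probabilistic representation $\widetilde\mu_t=(\mathrm{ev}_t)_\sharp\,\eta$, where $\mathrm{ev}_t$ is evaluation at time $t$ and $\eta$ is a probability measure on the path space concentrated on the set of weak solutions of \eqref{2_int.IVP} in the sense of Definition \ref{2_defweaksol}. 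The weak-uniqueness hypothesis then forces $\eta$ to be carried, over each initial datum, by a single trajectory, so that on a Borel set $\mathcal G_0\subset\Z_1$ of full $\mu_0$-measure the solution map $\phi(t):x_0\mapsto\gamma(t)$ is well defined; its Borel measurability follows from a measurable-selection argument applied to the solution set. This gives $\widetilde\mu_t=\phi(t)_\sharp\mu_0$ and, undoing the interaction representation via \eqref{2_mutimu}, the claimed identity $\mu_t=(e^{-itA}\circ\phi(t))_\sharp\mu_0$.

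The genuinely delicate points are internal to this last step: verifying the hypotheses of the superposition principle in infinite dimensions --- in particular the integrability of $\|v(t,\cdot)\|_{\Z_{-1}}$ along the representing measure and the concentration of $\eta$ on curves lying in $L^2(I,\Z_1)\cap L^\infty(I,\Z)\cap W^{1,\infty}(I,\Z_{-1})$ --- and the passage from uniqueness-for-$\mu_0$-almost-every-datum to a globally defined Borel flow. Since the resulting identification of $\widetilde\mu_t$ is independent of the extracted subsequence, every limit point coincides; hence $\mathcal M(\widetilde\rho_N(t))$ and, via \eqref{2_mutimu}, $\mathcal M(\rho_N(t))$ are singletons, which is the first assertion of the theorem and completes the proof.
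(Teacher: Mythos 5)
Your proposal is correct and follows essentially the same route as the paper: propagation of the free-energy bound, extraction and convergence of the Weyl/Duhamel quantities to the characteristic--Liouville equation under \eqref{2_D1} or \eqref{2_D2}, and then the Ambrosio-type superposition principle combined with weak uniqueness of \eqref{2_int.IVP} to build the Borel flow $\phi(t)$ and identify $\widetilde\mu_t=\phi(t)_\sharp\mu_0$, with subsequence-independence giving the singleton property. The only cosmetic difference is that the paper realizes your ``measurable-selection argument'' concretely via the injective Borel image theorem (Lemma \ref{2_parthasarathy}) applied to the projection of the Borel solution set $\mathcal{F}$.
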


\begin{remark}
A consequence of the main Theorem \ref{2_main.thm} is the existence of a global generalized  flow for the initial value problems \eqref{2_int.IVP} and \eqref{2_eq.mf}.    This means that the assumptions \eqref{2_A1}-\eqref{2_A2} and \eqref{2_D1} or \eqref{2_D2}, imply the existence of global weak solutions for the initial value problem \eqref{2_int.IVP}.  Such observation seems  new and its proof is quite different from the similar  classical  result  in \cite[Theorem 3.3.9]{MR2002047}.	

\end{remark}

\begin{sproof}
Here we briefly sketch the key points in the  proof of the main theorem.
\begin{itemize}
\item[(i)] Consider an initial  sequence of normal states $(\rho_{N})_{N \in \N}$. It is interesting to  work with the
interaction representation. Indeed, the well-defined quantum dynamics leads to a sequence of evolved states $(\widetilde{\rho}_{N}(t))_{N \in \N} ~,~ t \in \R$, given by \eqref{2_inter.rho}. Moreover, one easily checks that
the set of Wigner measures of $(\rho_{N}(t))_{N \in \N}$ and $(\widetilde{\rho}_{N}(t))_{N \in \N} $ are related
according to the relation \eqref{2_mutimu}. So, it is enough to focus only in  the latter sequence.
A priori the set of Wigner measures of  $\widetilde{\rho}_{N}(t)$ may not be a singleton and their measures are not a priori  known or identified.
\item[(ii)] Using a standard extraction argument recalled in Proposition \ref{2_mu boule}, one can prove that for any subsequence of  the  states $({\rho}_{N})_{N \in \N}$, there exists an extraction $\psi$  such that for all times
$\mathcal{M}( \widetilde{\rho}_{\psi(N)}(t) ~,~ N \in \N) = \lbrace \widetilde\mu_{t} \rbrace $.
\item[(iii)]   Applying the convergence argument of \cite{MR3661404} recalled in Proposition \ref{2_rho charac}, one shows that the obtained  curve of probability measures   $t \in \R \to \widetilde\mu_{t}\in\mathfrak{P}(\Z)$ satisfies a fortiori the  characteristic equation \eqref{2_eqchar}.
\item[(iv)] Such  characteristic equation \eqref{2_eqchar} is proved to be equivalent to  a Liouville equation \eqref{2_Liou} in Section \ref{2_liouv.sec}.
\item[(v)] Then  in Section \ref{2_Probrep.sec}  a result is proved on the uniqueness of solutions  for a Liouville equation and a generalized flow for the initial value problem \eqref{2_int.IVP}  is constructed.
\item[(vi)] So, any Wigner measures $\widetilde\mu_t$ of the states $(\widetilde\rho_{N}(t))_{N \in \N}$ should be the unique solution, at time $t$, of the Liouville equation with the initial condition $\mu_0$. Moreover, $\widetilde\mu_t$ is identified as the push-forward  of  $\mu_0$ by the generalized flow of the initial value problem \eqref{2_int.IVP}.
\item[(vii)] The simple relations between $(\rho_{N}(t))_{N \in \N}$, $(\widetilde{\rho}_{N}(t))_{N \in \N}$ and
$\mu_t$, $\widetilde\mu_t$   yield the final result.
\end{itemize}
\end{sproof}

\bigskip
\emph{Overview of the article:}
The paper is organized as follows.  Several  examples satisfying the statement of  Theorem
\ref{2_main.thm} are listed in Section \ref{2_sec.examples}.  In the next Section \ref{2_liouv.sec},  we introduce  the Liouville equation and we prove its  equivalence to a characteristic evolution  equations. Such result holds true for general initial value problems and the related vector field need not be the one of the mean-field equation.  In Section \ref{2_conv.sec},  we consider the set of Wigner measures of time evolved  quantum states and recall the convergence results of Q.~Liard \cite{MR3661404}. Moreover, we prove that the Fourier transform of these Wigner probability measures satisfy the characteristic evolution  equation of Section \ref{2_liouv.sec} with the vector field given by \eqref{2_int.v}.  Hence, one  concludes that the Wigner measures at hand  verify a Liouville's equation.  Finally, we prove  in Section \ref{2_Probrep.sec} a uniqueness result for Liouville's equations and construct a generalized flow
for the initial value problem  \eqref{2_int.IVP}.   This section relies on probabilistic representation ideas  related to the kinetic theory and the optimal transport theory \cite{MR2129498, MR2400257}. Surprisingly, such techniques  turn to be remarkably efficient in solving the mean field  problem. Indeed, they allow to prove elegantly  the  uniqueness of solutions of Liouville's equations in infinite dimensional spaces, see \cite{Ammari:2018aa,MR3721874,MR3379490}.  For the reader's convenience, some useful results are collected  in  the Appendix \ref{2_appA} concerning the  measurability of certain maps and sets.  In Appendix  \ref{2_appB}, we prove a  useful bound satisfied by the
vector field \eqref{2_int.v} related to the mean-field equation \eqref{2_int.IVP}.

\section{Examples}
\label{2_sec.examples}
Several relevant examples are provided in this section for the illustration of our main Theorem \ref{2_main.thm}.  They concern
the LLL model,  the compound Bose gases model, the non-relativistic and the semi-relativistic models of  quantum mechanics.
Before listing these examples, we give a simple argument that ensures the uniqueness proprety  for the initial value problem \eqref{2_int.IVP} with the vector field $v$ given by \eqref{2_int.v}.  In particular, the proposition  below shows that our main Theorem \ref{2_main.thm} applies to  all the examples of \cite{MR3661404} where the condition  \eqref{2_est}  is verified.

\medskip
Recall  that $q_0$ is the quadratic form defined in \eqref{2_intr.q0}.
\begin{prop}
Suppose that for any $M > 0$ there exists $C(M) > 0$ such that:
\begin{equation}
\label{2_est}
||\partial_{\overline{z}}q_{0}(x)-\partial_{\overline{z}}q_{0}(y)||_{\Z} \leq C(M) (||x||_{\Z_{1}}^{2}+||y||_{\Z_{1}}^{2})||x-y||_{\Z} ~,~
\end{equation}
for all $(x,y) \in \Z_{1}^{2}$ such that $(x,y) \in B_{\Z}(0,1)$. Then the initial value problem
\eqref{2_int.IVP} with the vector field $v$ in \eqref{2_int.v} satisfies the uniqueness property of Definition \ref{2_defweaksol}.
\end{prop}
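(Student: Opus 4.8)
The plan is to prove weak uniqueness by a Gronwall argument carried out entirely in the norm of $\Z$, exploiting the fact that the local Lipschitz estimate \eqref{2_est} gives a gain of regularity for the difference of two nonlinearities. Let $\gamma_1,\gamma_2$ be two weak solutions of \eqref{2_int.IVP} on a common interval $I\ni 0$ with the same initial datum $x_0\in\Z_1$, and set $w=\gamma_1-\gamma_2$. By Definition \ref{2_defweaksol} both solutions satisfy the integral equation \eqref{2_inteq}, so on subtracting them the initial data cancel and
\[
w(t)=\int_0^t\big(v(s,\gamma_1(s))-v(s,\gamma_2(s))\big)\,ds ,
\]
an identity that a priori holds in $\Z_{-1}$.

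The key observation is that, although $v$ is only $\Z_{-1}$-valued, the difference $v(s,\gamma_1(s))-v(s,\gamma_2(s))$ actually takes values in $\Z$. To see this, note that $e^{-isA}$ commutes with $(A+1)^{1/2}$, hence it is unitary on $\Z$ and isometric on $\Z_1$; in particular $\|e^{-isA}\gamma_i(s)\|_{\Z}=\|\gamma_i(s)\|_{\Z}$ and $\|e^{-isA}\gamma_i(s)\|_{\Z_1}=\|\gamma_i(s)\|_{\Z_1}$. Because $\gamma_i\in L^\infty(I,\Z)$ there is $M>0$ with $\|\gamma_i(s)\|_{\Z}\le M$ for a.e.\ $s\in I$, so the pair $x=e^{-isA}\gamma_1(s)$, $y=e^{-isA}\gamma_2(s)$ lies in the $\Z$-ball of radius $M$ and \eqref{2_est} applies to it, with $\|x-y\|_{\Z}=\|e^{-isA}w(s)\|_{\Z}=\|w(s)\|_{\Z}$. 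Using the representation \eqref{2_int.v} together with the unitarity of $e^{isA}$ on $\Z$, this yields the pointwise bound
\[
\|v(s,\gamma_1(s))-v(s,\gamma_2(s))\|_{\Z}=\big\|\partial_{\overline{z}}q_{0}(x)-\partial_{\overline{z}}q_{0}(y)\big\|_{\Z}\le C(M)\big(\|\gamma_1(s)\|_{\Z_1}^{2}+\|\gamma_2(s)\|_{\Z_1}^{2}\big)\,\|w(s)\|_{\Z}.
\]

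I would then close the estimate by Gronwall's lemma. Writing $g(s)=\|\gamma_1(s)\|_{\Z_1}^{2}+\|\gamma_2(s)\|_{\Z_1}^{2}$, the membership $\gamma_i\in L^2(I,\Z_1)$ gives $g\in L^1(I)$, while $w\in L^\infty(I,\Z)$ makes $s\mapsto\|w(s)\|_{\Z}$ bounded and measurable. Hence the integrand above is in $L^1(I)$, so the $\Z_{-1}$-valued integral identity for $w$ is in fact a $\Z$-valued Bochner integral and
\[
\|w(t)\|_{\Z}\le C(M)\,\Big|\int_0^t g(s)\,\|w(s)\|_{\Z}\,ds\Big|,\qquad t\in I .
\]
Since $w(0)=0$ and the kernel $g$ is only $L^1(I)$, the integral form of Gronwall's inequality with an $L^1$ kernel forces $\|w(t)\|_{\Z}=0$ for every $t\in I$ (running the argument forward and backward in time from $0$). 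Therefore $\gamma_1=\gamma_2$, which is exactly the weak uniqueness property of Definition \ref{2_defweaksol}.

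The main obstacle is the regularity mismatch in the second step: the vector field $v$ is only $\Z_{-1}$-valued and obeys the crude growth bound \eqref{2_int.bndv}, so a naive difference estimate lives in $\Z_{-1}$ and cannot be iterated. The role of hypothesis \eqref{2_est} is precisely to upgrade the difference of nonlinearities to $\Z$, the space on which the free propagator $e^{\pm isA}$ acts isometrically; only there does the quadratic factor $\|\gamma_i\|_{\Z_1}^{2}$ decouple as an $L^1(I)$ weight, which is exactly the integrability needed for the $L^1$-kernel Gronwall lemma. The one point deserving care is the interchange of topologies, namely justifying that the $\Z_{-1}$-valued integral identity may be read as a $\Z$-valued one once its integrand has been shown to be Bochner integrable in $\Z$.
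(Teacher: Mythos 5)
Your proof is correct and follows essentially the same route as the paper: subtract the integral equations, use \eqref{2_est} to control the difference of the nonlinearities in the $\Z$-norm with the $L^{1}$ weight $g(s)=\|\gamma_{1}(s)\|_{\Z_{1}}^{2}+\|\gamma_{2}(s)\|_{\Z_{1}}^{2}$, and conclude by Gronwall with an $L^{1}$ kernel. You are in fact somewhat more careful than the paper on two points the paper glosses over, namely the conjugation by the unitary group $e^{\pm isA}$ before applying \eqref{2_est} and the justification that the $\Z_{-1}$-valued integral identity upgrades to a $\Z$-valued Bochner integral.
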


\begin{proof}

Let $\gamma_{1} \in L^{2}({I},\Z_{1}) \cap L^{\infty}({I},\Z)\cap W^{1,\infty}(\overline{I},\Z_{- 1})$ and $\gamma_{2} \in L^{2}(\overline{I},\Z_{1}) \cap L^{\infty}(\overline{I},\Z) \cap W^{1,\infty}(\overline{I},\Z_{- 1})$ be two weak solutions of \eqref{2_int.IVP} with the same initial condition at time $t_{0}=0 \in I$. Then, the following equality holds true in $C(\overline{I},\Z_{- 1})$:

\[ \gamma_{1}(t)-\gamma_{2}(t) = \int_{t_{0}}^{t} v(s,\gamma_{1}(s))-v(s,\gamma_{2}(s))\,ds \]
\medskip
And so, as $\gamma_{1}$ and $\gamma_{2}$ are in $\Z$ for all $t \in \overline{I}$ (see remark \eqref{2_rmkcontinuitycurve}) we have:
\begin{eqnarray*}
|| \gamma_{1}(t)-\gamma_{2}(t)||_{\Z} &\leq& \int_{t_{0}}^{t} ||v(s,\gamma_{1}(s))-v(s,\gamma_{2}(s))||_{\Z}
\,ds \\
&\leq& \int_{t_{0}}^{t} ||\partial_{\overline{z}}q_{0}(\gamma_{1}(s))-\partial_{\overline{z}}q_{0}(\gamma_{2}(s))||_{\Z} \,ds\,.
\end{eqnarray*}
We now have to remind that $\gamma_{1}(t)$ and $\gamma_{2}(t)$ belong to the closed ball ${B}_{\Z}(0,1)$ for all $t \in \overline{I}$ and so the assumption \eqref{2_est} gives the existence of a constant $C(1) > 0$ such that:

\[ || \gamma_{1}(t)-\gamma_{2}(t)||_{\Z} \leq \int_{t_{0}}^{t} C(1)(||\gamma_{1}(s)||_{\Z_{1}}^{2}+||\gamma_{2}^{s}||_{\Z_{1}}^{2})||\gamma_{1}(s)-\gamma_{2}(s)||_{\Z} \,ds ~,~
\]
and if we denote $g(s) = ||\gamma_{1}(s)||_{\Z_{1}}^{2}+||\gamma_{2}^{s}||_{\Z_{1}}^{2}$ then $g \in L^{1}(\overline{I},\Z_{1})$ and then

\[ || \gamma_{1}(t)-\gamma_{2}(t)||_{\Z} \leq C(1)  \int_{t_{0}}^{t}  g(s) ||\gamma_{1}(s)-\gamma_{2}(s)||_{\Z} \,ds\,.
\]
The Gronwall's lemma shows  that $\gamma_{1}=\gamma_{2}$ on $\overline{I}$. Hence the weak uniqueness property of weak solutions of \eqref{2_int.IVP}  is satisfied according to the  Definition \ref{2_defweaksol}.
\end{proof}

Our main Theorem \ref{2_main.thm} applies to all the examples given below. The main assumptions to be checked are \eqref{2_A1}-\eqref{2_A2} and \eqref{2_D1} or \eqref{2_D2}. They are actually  not difficult to prove and we refer the reader to \cite{MR3661404} for more details. Actually, the only ingredients we need to specify are the Hilbert space $\Z$, the one-particle operator  $A$ and the two-particle interaction $q$.

\bigskip
\noindent
\emph{The LLL model:}
Such model appears for instance in \cite{MR2570761} where a system of  trapped bosons undergoing rapid rotation is studied and an effective Hamiltonian  with respect to the Lowest Landau Level is derived. The latter Hamiltonian describes a many-boson system with
\begin{equation}
\Z=\mathcal{H}L^2(\C, e^{-|z|^2}  L(dz))\,, \qquad
A=z\cdot \partial_z\,,\qquad
q=\delta_{12}\,,
\end{equation}
where $\Z$ is the Segal-Bargman space of holomorphic functions that are  square integrable with respect to the gaussian measure $\gamma=e^{-|z|^2}  L(dz)$ ( with $ L(dz)$ is the Lebesgue measure on $\C$), $A$ is the harmonic oscillator  in this Fock or Segal-Bargman  representation (see e.g.~\cite{MR983366}) and  $q$ is a bounded sesquilinear form  given explicitly by
\begin{equation*}
\langle F, \delta_{12} \,G\rangle_{\Z^{\otimes 2}}=
 \int_{\C} \overline{F(\xi,\xi)} \,
G(\xi,\xi) e^{-2|\xi|^2}  \, L(d\xi)  \,,
\end{equation*}
for any $F,G\in \Z^{\otimes 2}\equiv \mathcal{H}L^2 (\C^2, e^{-|\xi|^2} L(d\xi))$.  Hence, the assumptions \eqref{2_A1}, \eqref{2_A2}, \eqref{2_D1} and \eqref{2_est}  are trivially satisfied  since 
$q$ is a bounded sesquilinear form.

\bigskip
\noindent
\emph{The compound Bose gases  model:}
This second example is inspired by the recent works in two compound interacting Bose gases \cite{MR3711618,MR3681700}. The many-body Hamiltonian is specified by the following choice:
\begin{equation*}
\Z=L^2(\R^d,\C)\oplus L^2(\R^d,\C)\,,\qquad A=S\oplus S\,,
\end{equation*}
where $S$ is the Laplace operator in $d$-dimension,
$$
S=-\Delta_x\,.
$$
In order to introduce a two particle interaction $q$, note that the space $\Z^{\otimes 2}$ can be identified with $L^2(\R^{2d},\C)\otimes \C^2\otimes \C^2$ and that any vector $u\in\Z^{\otimes 2}$ admits the  decomposition,
$$
u=\sum_{i,j=1,2} u^{(i,j)}(x,y) \,e_i\otimes e_j\,,
$$
where $\{e_1,e_2\}$ is the canonical O.N.B of $\C^2$ and $u^{(i,j)}\in L^2(\R^{2d},\C)$.
Moreover, the sesquilinear form $q$ is given precisely as,
\begin{eqnarray*}
q(u,u)=\sum_{i,j=1,2} \langle u^{(i,j)}, V^{(i,j)}(x-y) \,u^{(i,j)}\rangle_{L^2(\R^{2d},\C)}\,,
\end{eqnarray*}
 where $V^{(i,j)}$ are real-valued functions satisfying
 $$
 V^{(i,j)}\in L^{\alpha}(\R^{d})+L_{0}^{\infty}(\R^{d})\,, \qquad \text{ with } \; \alpha\geq \max(1,\frac{d}{2})
 \quad ( \alpha>1 \text{ if } d=2).
$$
  Here $L_{0}^{\infty}(\R^{d})$ denotes the space of bounded measurable functions converging to $0$ at infinity.
  This class includes in particular  the Yukawa and Coulomb potentials $V^{(i,j)}(x)=g\frac{e^{-\lambda |x|}}{|x|^{\beta}}$, $\lambda\geq 0$, $g\in\R$, $0<\beta <2$ and $d=3$. Again, it is easy to check the assumptions \eqref{2_A1}, \eqref{2_A2}   and the estimate \eqref{2_est} using Young, H\"older and Sobolev inequalities. Moreover, the assumption  \eqref{2_D2} follows from the compactness of the operator
$ (1+S)^{-1/2} V^{(i,j)}(x) (1+S)^{-1/2}$, see e.g.~\cite[Lemma 3.10]{MR3379490}.

\bigskip
\noindent
\emph{The non-relativistic models:}  Non-relativistic models of quantum mechanics are widely used and studied. Here we give three examples: The first is a delta interaction model  in one dimension appearing in nuclear physics and was studied for instance in \cite{MR2953701,MR2331036}. The second is a confined many-boson system while the third is the typical  translation invariant model of quantum mechanics.

\begin{itemize}
\item  \emph{Delta interaction:}
\begin{equation*}
\Z=L^2(\R, dx)\,,\quad  A=-\Delta_x+V(x)\,,\quad q=\lambda\, \delta(x-y)\,,
\end{equation*}
where $\lambda\in\R$, $V\in L^1_{loc}(\R)$, $V\geq 0$ and $\lim_{|x|\to\infty}V(x)=+\infty$ is a confining potential and $q$ is the delta potential defined explicitly for sufficiently regular $u\in \Z^{\otimes 2}=L^2(\R^2)$  as
$$
q(u,w)=\int_{\R} \overline{u(x,x)} w(x,x) \,dx\, .
$$
\item  \emph{Confined many-boson system:}
\begin{equation*}
\Z=L^2(\R^d, dx)\,,\quad  A=-\Delta_x+V(x)\,,\quad q= W(x-y)\,,
\end{equation*}
where $V\in L^1_{loc}(\R^d)$, $V\geq 0$ and $\lim_{|x|\to\infty}V(x)=+\infty$ is a confining potential and
$W\in L^\alpha(\R^d)+L^\infty(\R^d)$ such that $\alpha\geq \max(1,\frac d 2)$ and ($\alpha>1$ if $d=2$).
\item  \emph{Translation invariant system:}
 \begin{equation*}
\Z=L^2(\R^d, dx)\,,\quad  A=-\Delta_x\,,\quad q= W(x-y)\,,
\end{equation*}
where $W\in L^\alpha(\R^d)+L_0^\infty(\R^d)$ such that $\alpha\geq \max(1,\frac d 2)$ and ($\alpha>1$ if $d=2$).
\end{itemize}
In the first two cases the assumptions \eqref{2_A1}, \eqref{2_A2},  \eqref{2_est}  and   \eqref{2_D1}
are satisfied using Young, H\"older and Sobolev inequalities. In the last case one verifies \eqref{2_D2}  thanks to the compactness of $ (1-\Delta_x)^{-1/2} W(x) (1-\Delta_x)^{-1/2}$.


\bigskip
\noindent
\emph{The semi-relativistic model:}
The semi-relativistic  many-boson Hamiltonian is described by the following choice:
\begin{equation*}
\Z=L^2(\R^3,dx)\,, \qquad A=\sqrt{-\Delta_{x}+m^2}+V(x)\,, \qquad q=\frac{\kappa}{|x-y|} \,,
\end{equation*}
where $|\kappa|<\kappa_{cr}$ with  $\kappa_{cr}^{-1}:=\displaystyle 2\lim_{\alpha\to+\infty}  ||\frac{1}{|x|} (-\Delta+\alpha)^{-\frac{1}{2}}||$, $m\geq 0$ and $V$ is real-valued measurable function  satisfying $V=V_1+V_2$ with:
\begin{eqnarray*}
&&V_1\in L^1_{loc}(\R^3), \, V_1\geq 0,\, \lim_{|x|\to \infty} V_1(x)=+\infty\,,\\
&&V_2 \text{ is form bounded by } \sqrt{-\Delta} \text{ with a relative bound less than } 1\,.
\end{eqnarray*}
The assumptions \eqref{2_A1},\eqref{2_A2}, \eqref{2_est} and \eqref{2_D1} are checked
in \cite[Example 5]{MR3661404}.

\section{Liouville equation}
\label{2_liouv.sec}
The Wigner measures method for the mean field problem is based on the analysis of a Liouville equation (or a continuity equation) which is  derived naturally from the dynamics of the quantum many-body system.  In this section, we briefly recall the notion of Liouville equation related to a general initial value problem and we prove its equivalence with a characteristic evolution equation. Due to the independent interest that presents such result, we keep this part as general as possible. So, in this section we do not assume that the initial value problem \eqref{2_int.IVP}  is specifically the one given by the vector field in \eqref{2_int.v}. Instead, we  consider the vector field $v : \mathbb{R} \times \Z_{1} \to \Z_{-1}$ as a Borel map which is bounded on bounded sets.

The Liouville equation associated to the initial value problem \eqref{2_int.IVP} is formally  given  by the equation:
\[
\partial_{t} \mu_{t} + \nabla^{T} (v(t,.) \cdot\mu_{t}) = 0\,.
 \]
To give a rigorous  meaning to the above identity we use smooth cylindrical functions as tests functions and interpret the equation in a distributional  sense. Below, we recall briefly the sense given to this equation. To see further details about cylindrical functions and Liouville equation, see for instance \cite{MR3721874,MR2129498,MR3379490}.

\bigskip

Consider $\Z_{- 1}$ as a real Hilbert space denoted by $\Z_{- 1, \mathbb{R}}$ and endowed with the scalar product $Re \langle .,. \rangle_{\Z_{-1}}$. For $n \in \N$, we define $\mathbb{P}_{n}$ as the set of all projections $\pi : \Z_{- 1, \mathbb{R}} \to \mathbb{R}^{n}$ such that:
\[
\pi:x \to \pi(x) = ( \langle x,e_{1} \rangle_{\Z_{- 1, \mathbb{R}}},..., \langle x,e_{n} \rangle_{\Z_{- 1, \mathbb{R}}}) \,,
\]
where $(e_{1},...,e_{n})$ is an orthonormal family of $\Z_{- 1, \mathbb{R}}$.
The set of cylindrical functions on $\Z_{- 1}$, denoted $ \mathscr{C}_{0,cyl}^{\infty}( \Z_{- 1})$, is by definition the space of functions $\phi = \psi \circ \pi$ where $\pi \in \mathbb{P}_{n}$ for some $n \in \mathbb{N}$ and $ \psi \in \mathscr{C}_{0}^{\infty}(\mathbb{R}^{n})$.
Furthermore, for  a given open bounded interval $I$, the set of cylindrical functions on $I \times \Z_{- 1}$, denoted $ \mathscr{C}_{0,cyl}^{\infty}(I \times \Z_{- 1})$, is defined as the set of functions $\phi$ such that :
\[
 \forall (t,x) \in I \times \Z_{- 1} ~,~ \phi(t,x) = \phi(t,\pi(x)) ~,~
 \]
where $\pi \in \mathbb{P}_{n}$ for some $n \in \mathbb{N}$ and $\phi \in \mathscr{C}_{0}^{\infty}(I \times \mathbb{R}^{n})$.

\bigskip

Let $\lbrace \mu_{t} \rbrace_{ t \in I}$ be a family of Borel probability measures on $\Z_{- 1}$. We say that  $\mu_{t}$ satisfies the Liouville equation on $I$, associated to the vector field $v$, if:
\begin{equation}
\label{2_Liou}
     \forall \phi \in \mathscr{C}_{0,cyl}^{\infty}(I \times \Z_{- 1}) ~,~ \int_{I} \int_{\Z_{- 1}} \partial_{t} \phi(t,x) + {Re} \langle v(t,x), \nabla_{\Z_{-1}, \mathbb{R}} ~\phi(t,x) \rangle_{\Z_{- 1}} ~d\mu_{t}(x)dt=0 \,,
\end{equation}
where  $\nabla_{\Z_{-1}, \mathbb{R}}$ denotes the $\mathbb{R}$-gradient in $\Z_{- 1,\mathbb{R}}$ and $ \mu_{t}$ is supposed also to be a Borel probability measures on  $\Z_{1}$ for a.e.~ $t\in I$.  Of course up to now the equation  \eqref{2_Liou} may not make sense and  some further requirements are need to make it consistent. In particular,  a reasonable bound on the vector field $v$ and a certain continuity with respect to time of the curve  $t\in I\to \mu_{t}$   are  more or less  necessary.  For this, we introduce the following useful topologies in the set of Borel probability measures  $\mathfrak{P}(\Z_{-1})$.

\begin{dfn}
\label{2_defnarowcv}
Let $ (\mu_{n})_{n \in \N}$ be a sequence of measures in $\mathfrak{P}(\Z_{-1})$ and let $\mu \in \mathfrak{P}(\Z_{-1})$.
\begin{itemize}
\item We say that $ (\mu_{n})_{n \in \N}$ converges strongly narrowly to $\mu$ and we denote it $ \mu_{n} \to \mu$ if:
\[
\forall f \in \mathscr{C}_{b}(\Z_{-1}) ~,\quad \int_{\Z_{-1}} f d\mu_{n} \longrightarrow  \int_{\Z_{-1}} f d\mu \,,
\]
where  $ \mathscr{C}_{b}(\Z_{-1})$ denotes the space of bounded continuous functions on $\Z_{-1}$ endowed with it norm topology.
\item We say that $ (\mu_{n})_{n \in \N}$ converges weakly narrowly to $\mu$ and we denote it $\mu_{n} \rightharpoonup \mu$ if:
\[
 \forall f \in \mathscr{C}_{b}(\Z_{-1,w}) ~,\quad \int_{\Z_{-1}} f d\mu_{n} \longrightarrow  \int_{\Z_{-1}} f d\mu\,,
 \]
where  $ \mathscr{C}_{b}(\Z_{-1,w})$ denotes the space of bounded continuous functions on $\Z_{-1}$ endowed with the weak norm
$$
||u||_{w}^2=\sum_{k=0}^\infty \frac{1}{2^k} |\langle u, e_k\rangle_{\Z_{-1}}|^2\,
$$
where $\{e_k\}_{k\in\N}$ is an orthonormal basis of $\Z_{-1}$.
\end{itemize}
\end{dfn}
Regarding the strong (resp.~weak) topology on $\mathfrak{P}(\Z_{-1})$, one defines a family of measures $\lbrace \mu_{t} \rbrace_{t \in I}$ to be strongly (resp.~weakly) narrowly continuous if, for any $t_{0} \in I$, for any $(t_{n})_{n \in \mathbb{N}} \in I^{\N}$ such that $t_{n} \to t_{0}$ then $\mu_{t_{n}}$ converges  strongly (resp.~weakly) narrowly  to $\mu_{t_{0}}$.

 In the sequel, the set of measures $\lbrace \mu_{t} \rbrace_{t \in \R}$ that we shall  consider arises as the Wigner measures of time-dependent  quantum states. Our aim is to show that these measures satisfy the Liouville eqaution with some reasonable assumptions. Commonly, and according to \cite{Ammari:2018aa}, the theorems we will use later on requests the following conditions:
\begin{itemize}
\item \textit{concentration:}  For a.e.~ $t \in I ~,~ \mu_{t} \in \mathfrak{P}(\Z_{1})$.
\item \textit{norm control of the vector field relatively to the measures:}
\begin{equation}
\label{2_condliou}
\int_{I} \int_{\Z_{1}} || v(t,x) ||_{\Z_{- 1}} d\mu_{t}(x)dt < + \infty ~.
 \end{equation}
\item \textit{regularity of the measures:} $t \in I \to \mu_{t}$ is weakly narrowly continuous in $\mathfrak{P}(\Z_{- 1})	$.
\end{itemize}

The following proposition shows the equivalence between two mathematical formulations of a transport phenomena, namely a Liouville or continuity  equation and  a characteristic equation.  While the first deals with probability measures the second considers
their characteristic functions using Fourier transform.  This result  will be useful latter on to related the quantum dynamics of many-boson systems to a Liouville equation in the limit of a large number of particles.

\begin{prop}\label{2_Liouville and characteristic equation}

Let $I$ be an open bounded time interval and $ v : \R \times \Z_{1} \to \Z_{- 1}$ be a vector field such that  $v$ is Borel and bounded on bounded sets. Consider $t\in I\to\mu_{t}$ a curve in $\mathfrak{P}(\Z_{- 1})$ satisfying:
\begin{itemize}
\item  $t \in I \to \mu_{t}$ is weakly narrowly continuous in $\mathfrak{P}(\Z_{- 1})	$.
\item For a.e.~$t \in I$, $\mu_{t} \in \mathfrak{P}(\Z_{1})$.
\item The following bound holds true,
$$ \int_{I} \int_{\Z_{-1}} || v(t,x) ||_{\Z_{- 1}} \,d\mu_{t}(x)dt < + \infty ~. $$
\end{itemize}
Then the following assertions are equivalent :
\begin{itemize}
\item[(i)]  $ \lbrace \mu_{t} \rbrace_{t \in I}$ is a solution of the Liouville equation \eqref{2_Liou}.
\item[(ii)]  $ \lbrace \mu_{t} \rbrace_{t \in I}$ solves the following characteristic equation:
$\forall (t,t_{0}) \in I^{2}$, $ \forall y \in \Z_{1}$,
\begin{equation}
\label{2_eqchar2}
    \mu_{t}(e^{2i\pi Re\langle y,.\rangle_{\Z}}) = \mu_{t_{0}}(e^{2i\pi Re\langle y,.\rangle_{\Z}}) +
     2i\pi \int_{t_{0}}^{t} \mu_{s} \big(e^{2i\pi Re \langle y,x \rangle_{\Z}} Re\langle v(s,x),y \rangle_{\Z}\big) \,ds ~,~
\end{equation}
where we have used the notation $\mu_{t}(e^{2i\pi Re \langle y,.\rangle_{\Z}}) = \int_{\Z_{1}}e^{2i\pi Re \langle y,x \rangle_{\Z}} d\mu_{t}(x)  $.
\end{itemize}
\end{prop}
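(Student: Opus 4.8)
My proof plan is to read the characteristic equation \eqref{2_eqchar2} as exactly what the Liouville equation \eqref{2_Liou} produces when tested against the bounded cylindrical exponentials $x\mapsto e^{2i\pi\Re\langle y,x\rangle_\Z}$, $y\in\Z_1$, and to recover \eqref{2_Liou} conversely by Fourier-inverting an arbitrary cylindrical test function into a superposition of such exponentials. The algebraic bridge between the two formulations is the rigging $\Z_1\subset\Z\subset\Z_{-1}$: for $y\in\Z_1$ the real-linear functional $\ell_y(x)=\Re\langle y,x\rangle_\Z$ is continuous and bounded below/above in pairing on $\Z_{-1}$, its $\Z_{-1}$-gradient is $\nabla_{\Z_{-1},\R}\ell_y=(A+1)y\in\Z_{-1}$, and for every $w\in\Z_{-1}$ one has the compatibility identity $\Re\langle w,(A+1)y\rangle_{\Z_{-1}}=\Re\langle w,y\rangle_\Z$. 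This is precisely the identity converting the gradient pairing $\Re\langle v,\nabla_{\Z_{-1},\R}\phi\rangle_{\Z_{-1}}$ in \eqref{2_Liou} into the $\Z$-pairing $\Re\langle v,y\rangle_\Z$ in \eqref{2_eqchar2}. I would also first record that, since $\ell_y$ is weakly continuous, the bounded function $\Phi_y:=e^{2i\pi\ell_y}$ belongs to $\mathscr{C}_b(\Z_{-1,w})$, so weak narrow continuity of $t\mapsto\mu_t$ makes $t\mapsto F_y(t):=\mu_t(\Phi_y)$ continuous; this is what gives meaning to the integration-by-parts in time below.

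For (i) $\Rightarrow$ (ii), fix $y\in\Z_1$ and $\chi\in\mathscr{C}^\infty_0(I)$ and test \eqref{2_Liou} against $\chi(t)\cos(2\pi\ell_y(x))$ and $\chi(t)\sin(2\pi\ell_y(x))$. These are cylindrical in the single direction $(A+1)y$ but not compactly supported, so I would insert a smooth cutoff $\theta(\ell_y(x)/R)$, compute the two gradient contributions (the genuine one producing $\mp2\pi\sin/\cos(2\pi\ell_y)\,\Re\langle v,y\rangle_\Z$ via the compatibility identity, the spurious one carrying a factor $\theta'(\ell_y/R)/R$), and let $R\to\infty$. The spurious term is dominated by $R^{-1}\|\theta'\|_\infty\|y\|_{\Z_1}\int_I\int_{\Z_1}\|v(t,x)\|_{\Z_{-1}}\,d\mu_t(x)\,dt$, which vanishes by \eqref{2_condliou}; dominated convergence, again via \eqref{2_condliou} and $|\Phi_y|\le1$, handles the remaining terms. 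Combining the cosine and sine identities yields $\int_I\chi'(t)F_y(t)\,dt=-2i\pi\int_I\chi(t)\,\mu_t\big(\Phi_y\,\Re\langle v,y\rangle_\Z\big)\,dt$; since this holds for all $\chi$ and the right-hand integrand lies in $L^1(I)$, $F_y$ is absolutely continuous with the stated derivative, whose integral from $t_0$ to $t$ is exactly \eqref{2_eqchar2}.

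For (ii) $\Rightarrow$ (i), let $\phi(t,x)=\psi(t,\pi(x))$ with $\pi\in\mathbb{P}_n$ built from an orthonormal family $(e_1,\dots,e_n)$ of $\Z_{-1,\R}$ and $\psi\in\mathscr{C}^\infty_0(I\times\R^n)$. Taking the partial Fourier transform $\hat\psi(t,\eta)$ in the $\R^n$-variable and setting $e(\eta)=\sum_j\eta_j e_j\in\Z_{-1}$, $y(\eta)=(A+1)^{-1}e(\eta)\in\Z_1$, I can write $\phi(t,x)=\int_{\R^n}\hat\psi(t,\eta)\,\Phi_{y(\eta)}(x)\,d\eta$ and, using $\nabla_{\Z_{-1},\R}\ell_{y(\eta)}=e(\eta)$ together with the compatibility identity, $\Re\langle v,\nabla_{\Z_{-1},\R}\phi\rangle_{\Z_{-1}}=\int_{\R^n}\hat\psi(t,\eta)\,2i\pi\,\Phi_{y(\eta)}(x)\,\Re\langle v,y(\eta)\rangle_\Z\,d\eta$. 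Substituting these into the left-hand side of \eqref{2_Liou}, interchanging the $\eta$-integral with the $(t,x)$-integration, and integrating by parts in $t$ (licit since $F_{y(\eta)}$ is absolutely continuous with derivative given by \eqref{2_eqchar2} and $\psi(\cdot,\eta)$ is compactly supported in $I$), the two resulting $t$-integrals cancel for every fixed $\eta$ by \eqref{2_eqchar2}, giving $0$; hence (i) holds.

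The main obstacle is to justify the Fubini interchange and the integration by parts in the converse, and symmetrically the cutoff limit in the forward direction; both rest on a single quantitative control. Writing $\|y(\eta)\|_{\Z_1}=\|(A+1)^{-1}e(\eta)\|_{\Z_1}=\|e(\eta)\|_{\Z_{-1}}=|\eta|$, the relevant integrand is bounded by $|\hat\psi(t,\eta)|\,|\eta|\,\|v(t,x)\|_{\Z_{-1}}$, whose triple integral against $d\mu_t\,dt\,d\eta$ factorizes into $\big(\sup_{t\in I}\int_{\R^n}|\hat\psi(t,\eta)|\,|\eta|\,d\eta\big)\cdot\int_I\int_{\Z_1}\|v\|_{\Z_{-1}}\,d\mu_t\,dt$. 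The first factor is finite because $\psi$ is smooth and compactly supported, so $\hat\psi(t,\cdot)$ is Schwartz uniformly in $t$, and the second is finite by \eqref{2_condliou}. This is the one genuinely delicate point, and it is exactly where all three standing hypotheses — concentration on $\Z_1$, the norm control \eqref{2_condliou}, and weak narrow continuity — are used in concert; everything else is the bookkeeping of the rigging identities and of Fourier inversion.
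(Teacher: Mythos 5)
Your proposal is correct and follows essentially the same route as the paper's proof: the forward direction tests the Liouville equation against truncated cosine/sine cylindrical functions in a single direction and removes the cutoff using the bound \eqref{2_condliou}, then upgrades the resulting distributional identity to an absolutely continuous $W^{1,1}$ statement; the converse recovers \eqref{2_Liou} by Fourier-inverting cylindrical test functions into the exponentials $e^{2i\pi\Re\langle y,\cdot\rangle_{\Z}}$. The only differences are bookkeeping ones --- you parametrize throughout by $y\in\Z_1$ via the rigging identity $\Re\langle w,(A+1)y\rangle_{\Z_{-1}}=\Re\langle w,y\rangle_{\Z}$ where the paper works with $z\in\Z_{-1}$ and substitutes $y=A^{-1}z$ at the end, and your converse treats general time-dependent cylindrical $\phi(t,x)$ directly by a partial Fourier transform and integration by parts in $t$, whereas the paper handles products $\chi(t)\psi(x)$ and invokes density --- and your explicit quantitative justification of the Fubini interchange via $\sup_t\int|\hat\psi(t,\eta)|\,|\eta|\,d\eta<\infty$ is if anything more careful than the paper's.
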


\begin{proof}

This proof is similar to  the one in \cite[Proposition 3.7]{Ammari:2018aa}. The main difference concerns the hypothesis made on the curve $\lbrace \mu_{t} \rbrace_{t \in I}$ which are less restrictive in our case (we do not require $\mu_{t}(B_{\Z_{1}}(0,R))=1$ for some $R>0$). For the sake of clarity  we give a detailed proof.

Suppose that $ \lbrace \mu_{t} \rbrace_{t \in I}$ is a solution of the Liouville equation \eqref{2_Liou} satisfying the assumptions of the proposition. First note that The Liouville equation makes sense under the
assumptions stated above. Consider now a test function $\phi(t,x) = \chi(t)\phi_{m}(x)$ with $\chi \in \mathscr{C}_{0}^{\infty}(I)$ and $\phi_{m}$ given by:
\[
\phi_{m}(x)= cos(2\pi Re \langle z,x \rangle_{\Z_{- 1}}) \psi(\frac{Re \langle z,x \rangle_{\Z_{- 1}}}{m}) \,,
\]
for a fixed $z \in \Z_{- 1}$ and $\psi \in  \mathscr{C}_{0}^{\infty}(\R)$, $ 0 \leq \psi \leq 1$ and $\psi$ equal to $1$ in a neighborhood of $0$. With this choice, the functions $\phi_{m}$ pointwisely converge to $cos(2\pi Re \langle z,. \rangle_{\Z_{- 1}})$ as $m\to\infty$. One can easily check that for every $m \in \N$, $\phi$ is a smooth cylindrical function in $ \mathscr{C}_{0,cyl}^{\infty}(I \times \Z_{- 1})$. Since the curve $ \lbrace \mu_{t} \rbrace_{t \in I}$ is a solution of the Liouville equation \eqref{2_Liou}, then
\[
\int_{I} \int_{\Z_{1}} \chi^{'}(t)\phi_{m}(x) + Re \langle v(t,x),\nabla \phi_{m}(x) \rangle_{\Z_{- 1}} \chi(t) \,d\mu_{t}(x)dt = 0
\,.
\]
Remark that the above integral is well-defined since
\[
\int_{I} \int_{\Z_{1}} || v(t,x) ||_{\Z_{- 1}}\, d\mu_{t}(x)dt < + \infty ~.~
\]
The gradient of $\phi_{m}$ can be calculated easily and one has:
\[
\nabla \phi_{m}(x) = -2 \pi sin(2\pi Re \langle z,x \rangle_{\Z_{- 1}})\,\psi(\frac{Re \langle z,x \rangle_{\Z_{- 1}}}{m}) \cdot z +
\]
\[ cos(2\pi Re \langle z,x \rangle_{\Z_{- 1}})\frac{1}{m} \psi^{'}(\frac{Re \langle z,x \rangle_{\Z_{- 1}}}{m})\cdot z \in \Z_{- 1}
\]
Replacing this expression in the previous distributional equation and taking the limit $m\to\infty$ then the dominated convergence theorem leads to:
\[
\int_{I} \chi^{'}(t) \int_{\Z_{1}} cos(2\pi Re \langle z,x \rangle_{\Z_{- 1}}) d\mu_{t}(x)dt = \int_{I} \chi(t) \int_{\Z_{1}} 2 \pi sin(2\pi Re \langle z,x \rangle_{\Z_{- 1}}) Re \langle v(t,x),x \rangle_{\Z_{- 1}} d\mu_{t}dt.
\]
Using the same argument with $\phi_{m}(x) =  sin(2\pi Re \langle z,x \rangle_{\Z_{- 1}}) \psi(\frac{Re \langle z,x \rangle_{\Z_{- 1}}}{m})$ one obtains a similar integral equation with sinus. A sum of these two equations leads to the integral equation:
\[
\int_{I} \chi^{'}(t) \int_{\Z_{1}} e^{2i\pi Re \langle z,x \rangle_{\Z_{- 1}}} d\mu_{t}(x)dt = -2i\pi
\int_{I} \chi(t) \int_{\Z_{1}} Re \langle v(t,x),z \rangle_{\Z_{- 1}} e^{2i\pi Re \langle z,x \rangle_{\Z_{-1}}} d\mu_{t}(x)dt\,.
\]
If we set
\[
m(t):= \int_{\Z_{1}} e^{2i\pi Re \langle z,x \rangle_{\Z_{- 1}}} d\mu_{t}(x) ~,~
\]
Then the previous equation becomes in a distributional sense,
\[
\frac{d}{dt} m(t) = 2i\pi  \int_{\Z_{1}} Re \langle v(t,x),z \rangle_{\Z_{- 1}} e^{2i\pi Re \langle z,x \rangle_{\Z_{-1}}} d\mu_{t}(x) \,.
\]
So that $m \in W^{1,1}(I,\mathbb{C})$ and $m$ is an absolutely continuous function over $I$. Hence,  the fundamental theorem of analysis holds true:
\[
\forall (t,t_{0}) \in I^{2} ~,~ m(t) = m(t_{0}) + \int_{t_{0}}^{t} m^{'}(s) ds \,.
\]
This equation is nothing less than the characteristic equation \eqref{2_eqchar2} if we set $y=A^{- 1} z \in \Z_{1}$:
\[
\int_{\Z_{1}} e^{2i\pi Re \langle y,x \rangle _{\Z}} d\mu_{t}(x) = \int_{\Z_{1}} e^{2i\pi Re \langle y,x \rangle_{\Z}} d\mu_{t_{0}}(x) +  2i\pi \int_{t_{0}}^{t} \int_{\Z_{1}} Re \langle v(t,x),y \rangle_{\Z} e^{2i\pi Re \langle y,x \rangle_{\Z}} d\mu_{s}(x) ds .
  \]

\bigskip
Conversely, suppose that the set of measures $ \lbrace \mu_{t} \rbrace_{t \in I}$ solves the characteristic equation \eqref{2_eqchar2}. Let $\psi \in \mathscr{C}_{0,cyl}^{\infty}(\Z_{- 1})$. Then there exists an orthogonal projection $p$ on a finite dimensional subspace of $\Z_{- 1}$ such that $\psi(x) = \phi(p(x))$ where $\phi \in \mathscr{C}_{0}^{\infty}(p(\Z_{- 1}))$. So, the function $\psi$ is sufficiently  smooth and  one can use inverse Fourier transform to write:
\[
\psi(x) = \int_{p(\Z_{- 1})} cos(2 \pi Re \langle z,x \rangle_{\Z_{- 1}}) \mathcal{F}_{R}(\psi)(z) + sin(2 \pi Re \langle z,x \rangle_{\Z_{- 1}}) \mathcal{F}_{I}(\psi)(z) dL(z) ~,~
\]
where $dL$ denotes the Lebsegue measure on $p(\Z_{- 1})$ and
\[
\mathcal{F}_{R}(\psi)(z) =  \int_{p(\Z_{- 1})} cos(2 \pi Re \langle z,x \rangle_{\Z_{- 1}}) \psi(x) dL(x) ~,~
\]
\[
\mathcal{F}_{I}(\psi)(z) =  \int_{p(\Z_{- 1})} sin(2 \pi Re \langle z,x \rangle_{\Z_{- 1}}) \psi(x) dL(x) ~.~
 \]
Splitting the characteristic equation \eqref{2_eqchar2} into real and imaginary part yields:
\[
A = \int_{\Z_{- 1}}  cos(2 \pi Re \langle z,x \rangle_{\Z_{- 1}}) d\mu_{t}(x) = \int_{\Z_{- 1}}  cos(2 \pi Re \langle z,x \rangle_{\Z_{- 1}}) d\mu_{t_{0}}(x) +
\]
\[
\int_{t_{0}}^{t} \int_{\Z_{- 1}} Re(\langle v(s,x),z \rangle_{\Z_{- 1}})(-2\pi sin(2\pi Re \langle z,x \rangle_{\Z_{- 1}})) d\mu_{s}(x)ds   \,,
\]
\[ B = \int_{\Z_{- 1}}  sin(2\pi Re \langle z,x \rangle_{\Z_{- 1}}) d\mu_{t}(x) = \int_{\Z_{- 1}}  sin(2\pi Re \langle z,x \rangle_{\Z_{- 1}}) d\mu_{t_{0}}(x) + \]
\[
\int_{t_{0}}^{t} \int_{\Z_{- 1}} Re(\langle v(s,x),z \rangle_{\Z_{- 1}})(2\pi cos(2\pi Re \langle z,x \rangle_{\Z_{- 1}})) d\mu_{s}(x)ds \,.
\]
Now, a simple computation of $\int_{p(\Z_{- 1})}   (\mathcal{F}_{R}(\psi) \times A +  \mathcal{F}_{I}(\psi) \times B ) dL(z) $ yields
\[
\int_{\Z_{- 1}} \psi(x) d\mu_{t}(x) = \int_{\Z_{- 1}} \psi(x) d\mu_{t_{0}}(x) + \int_{t_{0}}^{t} \int_{\Z_{- 1}} Re( \langle v(s,x),\nabla \psi(x) \rangle_{\Z_{- 1}}) d\mu_{s}(x)ds ~.~
 \]
This equality shows in the distributional sense that:
\[
 \frac{d}{dt} \int_{\Z_{- 1}} \psi(x) d\mu_{t}(x) = \int_{\Z_{- 1}} Re \langle v(t,x),\nabla \psi(x)
 \rangle_{\Z_{-1}} d\mu_{t}(x) \,.
 \]
Multiplying the last  equality by $\chi(t)$ for $\chi \in \mathscr{C}_{0}^{\infty}(I)$ and integrating with respect to time, one obtains that $t\in I\to \mu_{t}$ satisfies the Liouville equation \eqref{2_Liou} for any tests functions of the form $\phi(t,x) = \chi(t)\psi(x)$ with $\psi \in \mathscr{C}_{0,cyl}^{\infty}(\Z_{- 1})$. As this kind of function is dense in $ \mathscr{C}_{0,cyl}^{\infty}(I \times \Z_{- 1})$, one concludes that $t\in I\to \mu_{t}$  is a solution of the Liouville equation \eqref{2_Liou} for any function $\phi \in \mathscr{C}_{0,cyl}^{\infty}(I \times \Z_{- 1})$.

\end{proof}

\section{Convergence}
\label{2_conv.sec}

In this section, we review the convergence results  proved in \cite{MR3661404}. This part relates the quantum dynamics of many-boson systems as $N\to\infty$  to a characteristic evolution equation and it is based on the analysis of Wigner measures of time-evolved quantum states.  In fact, consider a sequence  $\lbrace \rho_{N} \rbrace_{N \in \mathbb{N}}$ of normal states on $\bigvee^{N} \Z$ with the time evolution,
\[
\rho_{N}(t) := e^{-it H_{N}} \rho_{N}  e^{it H_{N}} \,, \qquad  \widetilde{ \rho}_{N}(t) := e^{itH_{N}^{0}} \rho_{N}(t) e^{-itH_{N}^{0}} ~,
\]
and assume that
\[
\mathcal{M}( \rho_{N} ~,~ N \in \N) = \lbrace \mu_{0} \rbrace ~.~
\]
We show here that for any subsequence of $\lbrace \rho_{N} \rbrace_{N \in \mathbb{N}}$ there exists an extraction $\psi$ such that for all times,
$$
\mathcal{M}( \widetilde\rho_{\psi(N)} ~,~ N \in \N) = \lbrace \widetilde\mu_{t}\rbrace ~.~
$$
 Moreover, the measures $\lbrace\widetilde\mu_t\rbrace_{t\in \R}$ satisfy the characteristic equation introduced in  Section \ref{2_Probrep.sec} with the vector field $v$ given in \eqref{2_int.v}.
Such result on  $\lbrace\widetilde\mu_t\rbrace_{t\in \R}$ is obtained  through the analysis of the quantities of the form
\begin{equation}
\label{2_Tauq}
\mathcal{I}_{N}(t) =\Tr[\widetilde\rho_{N}(t) \,\mathcal{W}(\sqrt{2N} \pi \xi)]\,,
\end{equation}
 according to the following scheme. Firstly, one shows that the quantities $\mathcal{I}_{N}(t)$ satisfy a Duhamel formula given in Proposition \eqref{2_Duhamel}. Secondly, using a standard extraction argument one can show that for any extraction  $\varphi$ one can find a sub-extraction $\psi$ such that for all times $\mathcal{I}_{\psi(N)}(t)$ converges  to a characteristic function of a Borel probability measure $\widetilde\mu_t$ (see Proposition \ref{2_mu boule}). Using such sub-extraction $\psi$ and taking the limit $N\to\infty$ in the Duhamel formula of Proposition \eqref{2_Duhamel}, one obtains a characteristic equation satisfied by the Wigner measures $\widetilde\mu_t$. The key argument employed in the latter convergence is recalled in Proposition \ref{2_keycv}. Finally, the limit equation satisfied by the Wigner measures is identified with the characteristic equation in Section \ref{2_Probrep.sec} related to the vector field $v$ given in \eqref{2_int.v}  (see Lemma  \ref{2_equiv charac}).

We consider the operator $A$ and the quadratic form $q$ of Section \ref{2_prmres.sec} and assume that
the assumptions \eqref{2_A1}-\eqref{2_A2} are satisfied.
For $z \in \Z_{1}$ and $s \in \R$ we define the monomial $q_{s}$:
\[
q_{s}(z) := \frac{1}{2} q( (e^{-isA}z)^{\otimes 2} ,(e^{-isA}z)^{\otimes 2}) \,.
\]
A simple computation yields, for any $ \xi \in \Z_{1}$ and $\varepsilon > 0$:
\[
q_{s}(z+i\varepsilon \pi \xi)-q_{s}(z) = \sum_{j=1}^{4} \varepsilon^{j-1} q_{j}(\xi,s) ~,~
\]
with the monomials $(q_{j}(\xi,s)[z])_{j=1,2,3,4}$ defined by:
\[
q_{1}(\xi,s)[z] = -2\pi Im ~ q(z_{s}^{\otimes 2}, P_{2}\, \xi_{s} \otimes z_{s})  ,\quad  q_{2}(\xi,s)[z] = -\frac{\pi^{2}}{2} Re ~ q (z_{s}^{\otimes 2},\xi_{s}^{\otimes 2}) + 2 \pi^{2} q(P_{2}\, \xi_{s} \otimes z_{s} , P_{2}\, \xi_{s} \otimes z_{s}) ,~
\]

\[
q_{3}(\xi,s)[z] = \pi^{3} Im ~ q(\xi_{s}^{\otimes 2},P_{2} \xi_{s} \otimes z_{s} ) ~ ~ , \qquad~ ~ q_{4}(\xi,s)[z] = \frac{\pi^{4}}{4} q( \xi_{s}^{\otimes 2}, \xi_{s}^{\otimes 2}) ~,~
 \]
where $\xi_{s}$ and $z_{s}$ are the time evolution of $\xi$ and $z$ given as,
\[
\xi_{s} = e^{-isA} \xi ~,~ \qquad z_{s} = e^{-isA} z \,.
\]
According to the $\varepsilon$-dependence of these monomials, one understands easily why the most important term is $q_{1}$ while taking the mean-field limit $\varepsilon=\frac 1 N \to 0$ since $q_{1}$ is the first order in the Taylor expansion.

The following result is proved using similar notations in \cite[Proposition 5.1]{MR3661404}.  It is the key argument behind the convergence of the mean-field approximation and the only place where one needs the properties of relative compactness
\eqref{2_D1} or  \eqref{2_D2}.

\begin{prop}
\label{2_keycv}

Let $\lbrace \rho_{N} \rbrace_{N \in \mathbb{N}}$ be a sequence of normal states on $\bigvee^{N} \Z$  such that $\mathcal{M}( \rho_{N} ~,~ N \in \mathbb{N}) = \lbrace \mu \rbrace$ and suppose that :
\begin{equation}
\label{2_bounda}
\exists C > 0 ~,~ \forall N \in \mathbb{N} ~,~\Tr[ \rho_{N}\, H_{N}^{0} ] \leq CN \,.
\end{equation}
Assume \eqref{2_A1}-\eqref{2_A2} and suppose either \eqref{2_D1} or \eqref{2_D2} is true. Then, for any $\xi \in \Z_1$ and for every $s \in \mathbb{R}$,
\begin{equation}
\label{2_rhoN}
     \lim\limits_{N \to + \infty} \Tr[ \rho_{N}\, \mathcal{W}(\sqrt{2N}\pi \xi) \,[q_{1}(\xi,s)]^{Wick} ] = \int_{\Z} e^{2i\pi Re(\xi,z)} q_{1}(\xi,s)[z] \,d\mu(z),
\end{equation}
where $q_{1}(\xi,s)[z] = -2\pi Im q(z_{s}^{\otimes 2}, P_{2} \xi_{s} \otimes z_{s})$, $z_{s} = e^{-isA}z$ and $\xi_{s}= e^{-isA}\xi$.
\end{prop}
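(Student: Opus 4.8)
The plan is to recognize the contraction operator appearing in \eqref{2_D1}--\eqref{2_D2} as the kernel of the Wick observable $[q_1(\xi,s)]^{Wick}$, to invoke the standard Wigner-measure convergence theorem for observables with a compact kernel (as developed in \cite{MR2465733,MR3379490}), and to control the unbounded part of $q$ uniformly in $N$ by combining the relative compactness hypothesis \eqref{2_D1} or \eqref{2_D2} with the a priori energy bound \eqref{2_bounda}. As a preliminary I would record the consequences of the setting. Since $\rho_N$ is a state on the $N$-particle sector, the number operator equals $N$ there and $\mu$ is carried by the closed unit ball of $\Z$, so $\|z\|_{\Z}\le 1$ for $\mu$-a.e.~$z$. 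Moreover, the lower semicontinuity of the free-energy functional along the convergence of Wigner measures gives $\int_{\Z}\langle z,Az\rangle\,d\mu(z)\leq C$, so $\mu$ is carried by $\Z_1$ and, by the form bound \eqref{2_A2}, the right-hand side of \eqref{2_rhoN} is absolutely convergent. On the quantum side the same bound controls, uniformly in $N$, the averages of $H_N^0$ and of the number operator against $\rho_N$, which is what permits discarding the high-energy contributions.

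I would then rewrite the symbol $q_1(\xi,s)[z]=-2\pi\,\Im\,q(z_s^{\otimes 2},P_2\xi_s\otimes z_s)$, using $q(u,w)=\langle u,\tilde q w\rangle$ and the symmetry $P_2z_s^{\otimes 2}=z_s^{\otimes 2}$, and insert the energy weights $(A+1)^{-1/2}$ and $(A_1+A_2+1)^{-1/2}$, so that its kernel becomes precisely the operator $\langle\xi_s|\otimes(A+1)^{-1/2}P_2\,\tilde q\,(A_1+A_2+1)^{-1/2}$ of the assumptions, the removed weights $(A+1)^{1/2}$ and $(A_1+A_2+1)^{1/2}$ being absorbed into creation/annihilation factors controlled, against $\rho_N$, by the energy bound. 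Under \eqref{2_D2} this kernel is directly compact; under \eqref{2_D1}, the compactness of the resolvent of $A$ together with the vanishing as $\lambda\to\infty$ provides the corresponding relative compactness. Approximating this compact kernel (concretely, by inserting a spectral cut-off $\chi_R=\mathbf 1_{[0,R]}(A)$, using that $\chi_R\to\Id$ strongly), the standard theorem yields for each fixed truncation
\[
\lim_{N\to\infty}\Tr[\rho_N\,\mathcal{W}(\sqrt{2N}\pi\xi)\,[q_{1,R}(\xi,s)]^{Wick}] = \int_{\Z} e^{2i\pi \Re\langle\xi,z\rangle}\, q_{1,R}(\xi,s)[z]\, d\mu(z)\,,
\]
while the classical truncation error $\int(q_1-q_{1,R})\,d\mu$ vanishes as $R\to\infty$ by dominated convergence, the integrand being bounded by $C_\xi(\|z\|_{\Z_1}^2+\|z\|_{\Z_1})$ for $\mu$-a.e.~$z$ thanks to $\|z\|_{\Z}\le 1$ and the finite moment of $\mu$.

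The core of the argument, and the only place where \eqref{2_D1} or \eqref{2_D2} genuinely enters, is the uniform-in-$N$ smallness of the quantum truncation error $[q_1(\xi,s)]^{Wick}-[q_{1,R}(\xi,s)]^{Wick}$. I would write the difference of kernels as $\tilde q$ composed with $(\Id-\chi_R)$ in one slot and reinstate the factors $(A+\lambda)^{\pm1/2}$ and $(A_1+A_2+\lambda)^{\pm1/2}$ to recognize exactly the operators controlled by the assumptions: under \eqref{2_D2} the tail $(\Id-\chi_R)(A+1)^{-1/2}=O(R^{-1/2})$ makes the truncated operator norm-close to the full one, and under \eqref{2_D1} the vanishing of the relevant norms as $\lambda\to\infty$ plays the analogous role. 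Pairing these operator bounds with $\Tr[\rho_N H_N^0]\leq CN$ — which, through \eqref{2_A2}, controls the creation/annihilation contractions produced by the Wick calculus — bounds the quantum error uniformly in $N$. I expect this to be the main obstacle, because the Weyl operator $\mathcal{W}(\sqrt{2N}\pi\xi)$ shifts the field operators by $c$-numbers of size $O(\sqrt N)$, and one must verify that these shifts combine with the $1/N$ mean-field prefactor so as to keep the cut-off estimates uniform; this bookkeeping is carried out as in \cite[Proposition 5.1]{MR3661404}. Combining the convergence for fixed $R$, the uniform-in-$N$ smallness of the quantum error, and the classical convergence as $R\to\infty$ through a $3\varepsilon$ argument then yields \eqref{2_rhoN}.
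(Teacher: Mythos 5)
The paper does not prove Proposition \ref{2_keycv} itself but simply cites \cite[Proposition 5.1]{MR3661404}, and your sketch reconstructs exactly the strategy of that cited proof: identify the kernel of $[q_1(\xi,s)]^{Wick}$ with the contraction operator of \eqref{2_D1}/\eqref{2_D2}, truncate in energy, use the a priori bound \eqref{2_bounda} to control the quantum truncation error uniformly in $N$, and conclude by a $3\varepsilon$ argument. Your proposal is correct in outline and takes essentially the same approach, deferring the same Wick/Weyl bookkeeping to the same reference the paper relies on.
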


Here we have used the Wick quantization of the monomial $q_{1}$. This operator, $[q_{1}(\xi,s)]^{Wick}$, is taking sense in the Fock space $\Gamma_s(\Z)$. For the standard definition of Wick operators and some of their properties, we refer the reader to \cite[Definition 2.1]{MR2465733} and to \cite[Appendix A]{MR3661404}.

\medskip
We state now a result satisfied by the time dependent quantities $\mathcal{I}_{N}(t) = \Tr[  \widetilde{ \rho}_{N}(t) \mathcal{W}(\sqrt{2N}\pi \xi) ]$. It corresponds to a sort of Duhamel formula or a first order Taylor expansion of $\mathcal{I}_{N}(t)$ with respect to the time variable.

\begin{prop}
\label{2_Duhamel}
We conserve the previous assumptions and notations.
Let $N \in \N$, then for any $\xi \in \Z_{1}$ the map $t \in \mathbb{R} \to \mathcal{I}_{N}(t)$ is of class  $\mathscr{C}^{1}$ and satisfies for all  $t \in \mathbb{R}$,
\begin{equation}
\label{2_IN}
    \mathcal{I}_{N}(t) = \mathcal{I}_{N}(0) + i \int_{0}^{t} \Tr\bigg[ \widetilde{ \rho}_{N}(t) \,\mathcal{W}(\sqrt{2N}\pi \xi) \; [\sum_{j=1}^{4} \varepsilon^{j-1} (q_{j}(\xi,s))^{Wick}]\bigg] \,ds\,,
\end{equation}
where the $q_{j}(\xi,s)$ , $j=1,..,4$ are the monomials given above.
\end{prop}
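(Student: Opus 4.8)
\begin{sproof}
The strategy is to pass to the interaction representation, write down the Heisenberg-type evolution of $\widetilde\rho_N(t)$, differentiate $\mathcal{I}_N(t)=\Tr[\widetilde\rho_N(t)\,\mathcal{W}(\sqrt{2N}\pi\xi)]$ under the trace, and finally convert the resulting commutator into the expansion in the monomials $q_j$ by means of the Wick calculus. Denote by $W_N:=H_N-H_N^0=\frac1N\sum_{1\le i,j\le N}q_{i,j}^{(N)}$ the interaction and by $\widetilde W_N(t):=e^{itH_N^0}\,W_N\,e^{-itH_N^0}$ its interaction-picture version. Since $\widetilde\rho_N(t)=e^{itH_N^0}e^{-itH_N}\rho_N e^{itH_N}e^{-itH_N^0}$, inserting $\frac{d}{dt}\rho_N(t)=-i[H_N,\rho_N(t)]$ and splitting $H_N=H_N^0+W_N$, the $H_N^0$-contribution cancels the derivative of the conjugating exponentials, so that
\begin{equation*}
\frac{d}{dt}\widetilde\rho_N(t)=-i\,[\widetilde W_N(t),\widetilde\rho_N(t)].
\end{equation*}
Writing $\mathcal{W}:=\mathcal{W}(\sqrt{2N}\pi\xi)$, which is unitary and bounded, and using the cyclicity of the trace, this yields
\begin{equation*}
\frac{d}{dt}\mathcal{I}_N(t)=-i\,\Tr\big[\widetilde\rho_N(t)\,[\,\mathcal{W},\widetilde W_N(t)\,]\big].
\end{equation*}

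It then remains to evaluate the commutator, for which two facts from the Wick calculus of \cite{MR2465733,MR3661404} are used. First, conjugation by the free evolution $e^{itH_N^0}$ is the lift to the Fock space of the one-particle flow $e^{itA}$, so it transports Wick symbols along $e^{-itA}$; hence $\widetilde W_N(t)$ is, up to the mean-field scaling $\frac1N$, the Wick quantization of the evolved interaction symbol, namely the monomial $q_t$ recalled above. Second, the Weyl operator $\mathcal{W}$ implements a phase-space translation, so that $\mathcal{W}^*\,\widetilde W_N(t)\,\mathcal{W}$ is the Wick quantization of the translated symbol; subtracting $\widetilde W_N(t)$ and Taylor-expanding along the translation produces exactly the four monomials $q_1,\dots,q_4$ introduced before the statement, the $j$-th one carrying the weight $\varepsilon^{j-1}$ (with $\varepsilon=\frac1N$) once the mean-field factor of the interaction is taken into account. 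This is the combinatorial identity underlying the very definition of the $q_j$, carried out in detail in \cite[Proposition 5.1]{MR3661404}, and it gives
\begin{equation*}
[\,\mathcal{W},\widetilde W_N(t)\,]=-\,\mathcal{W}\sum_{j=1}^4\varepsilon^{j-1}\,[q_j(\xi,t)]^{Wick}.
\end{equation*}
Substituting into the derivative of $\mathcal{I}_N$ turns the right-hand side into $i\,\Tr[\widetilde\rho_N(t)\,\mathcal{W}\sum_{j}\varepsilon^{j-1}[q_j(\xi,t)]^{Wick}]$, and integrating from $0$ to $t$ gives the announced Duhamel formula.

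The genuine difficulty is not the algebra but its rigorous justification, since $H_N$, $H_N^0$ and $W_N$ are unbounded and the $[q_j]^{Wick}$ are only relatively bounded Wick monomials. First I would record the uniform energy bound $\Tr[\widetilde\rho_N(t)\,H_N^0]=\Tr[\rho_N(t)\,H_N^0]\le C'N$, valid for all $t$: it follows from conservation of the total energy $\Tr[\rho_N(t)H_N]=\Tr[\rho_N H_N]$ together with the two-sided KLMN comparison of $H_N$ and $H_N^0$ granted by assumption \eqref{2_A2}. Combined with the relative boundedness of the $q_j(\xi,s)$ with respect to $H_N^0$ and the unitarity of $\mathcal{W}$, this ensures that every trace above is absolutely convergent and that the integrand is continuous in $s$; this already delivers the $\mathscr{C}^1$ regularity of $t\mapsto\mathcal{I}_N(t)$. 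To legitimately differentiate under the trace I would diagonalize $\rho_N=\sum_k\lambda_k|\psi_k\rangle\langle\psi_k|$ with $\psi_k$ in the form domain $Q(H_N)=Q(H_N^0)$, work with the associated quadratic forms, and use the strong differentiability of $t\mapsto e^{-itH_N}\psi_k$ on a core together with the relative bounds to move the time-derivative inside the absolutely convergent series and trace. All these estimates are precisely those carried out in \cite[Proposition 5.1 and Appendix A]{MR3661404}, on which the present computation is patterned.
\end{sproof}
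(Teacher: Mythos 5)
Your proposal is correct and follows essentially the same route as the argument this proposition is quoted from (Liard's Proposition 5.1, via the Ammari--Nier Wick/Weyl calculus): differentiate $\mathcal{I}_N$ in the interaction picture, use cyclicity of the trace to reduce to the commutator $[\mathcal{W},\widetilde W_N(t)]$, translate the Wick symbol under Weyl conjugation and Taylor-expand to produce the monomials $q_j$ with weights $\varepsilon^{j-1}$, and justify the manipulations by the propagated energy bound together with the relative bounds on the Wick monomials. Your derivation also (correctly) yields $\widetilde\rho_N(s)$ inside the $ds$-integral, which confirms that the $\widetilde\rho_N(t)$ appearing in the displayed formula \eqref{2_IN} is a typo.
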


In order to take the limit $N\to\infty$ in \eqref{2_IN},  one needs to extract subsequences that converge for all times. This is done using a standard diagonal extraction argument in \cite{MR2802894,MR3379490}  and \cite[Proposition 5.2]{MR3661404}. Here we recall such result and  we highlight  the important properties of the Wigner measures.
\begin{prop}\label{2_mu boule}
We conserve the previous assumptions and notations.  For any subsequence  of  $\lbrace \rho_{N} \rbrace_{N \in \mathbb{N}}$
there exists an extraction $\psi$ and a curve $t \in \R \to \widetilde\mu_{t}$ of Borel probability measures in $\mathfrak{P}(\Z)$ such that:
\[
\forall t \in \R ~,~  \mathcal{M}( \rho_{\psi(N)}(t) ~,~ N \in \mathbb{N}) = \lbrace\widetilde \mu_{t} \rbrace\,.
\]
Moreover, for every $t \in \R$ :
\begin{itemize}
\item $\widetilde\mu_{t}$ is supported on the unit closed ball $B_{\Z}(0,1)$ of $\Z$.
\item  $\widetilde\mu_{t}$ is supported on $\Z_{1}$.
\item  $\int_{\Z} ||z||_{\Z_{1}}^{2} \,d\widetilde\mu_{t}(z) \leq C$ with $C$ the constant in \eqref{2_bounda}.
\end{itemize}
\end{prop}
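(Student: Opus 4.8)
The plan is to combine, at each fixed time, the existence theory of Wigner measures under a uniform number bound with a diagonal extraction that is made uniform in $t$ thanks to the Duhamel formula of Proposition \ref{2_Duhamel}. Throughout I work in the interaction representation, so that the relevant sequence is $(\widetilde\rho_{N}(t))_{N}$ and its Wigner measures are the $\widetilde\mu_{t}$. \emph{First}, I would propagate the bound \eqref{2_bounda} to all times. The full Hamiltonian is conserved along \eqref{2_rhot}, i.e.\ $\Tr[\rho_{N}(t)H_{N}]=\Tr[\rho_{N}H_{N}]$, and by \eqref{2_A2} the interaction $\frac{1}{N}\sum_{i,j}q^{(N)}_{i,j}$ is relatively form bounded with respect to $H_{N}^{0}$ with relative bound $a<1$ (see \cite[Prop.~3.4]{MR3661404}), whence $H_{N}^{0}\le (1-a)^{-1}(H_{N}+b'N)$ as quadratic forms for some $b'>0$. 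Tracing against $\rho_{N}(t)$ and using $\Tr[\rho_{N}H_{N}]\le ((1+a)C+b')N$ produces a constant $C'=C'(C,a,b')$ with $\Tr[\rho_{N}(t)H_{N}^{0}]\le C'N$ for every $t$. Since $H_{N}^{0}$ commutes with $e^{itH_{N}^{0}}$ one has $\Tr[\widetilde\rho_{N}(t)H_{N}^{0}]=\Tr[\rho_{N}(t)H_{N}^{0}]\le C'N$, so the interaction picture carries the same uniform energy bound.

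\emph{Second}, for each fixed $t$ I would apply the Wigner measure existence theorem for sequences with a uniform number bound \cite{MR1441540,MR2465733}: the states live in the $N$-particle sector, so the number operator $\mathcal N$ (equal to $N\,\Id$ on $\bigvee^{N}\Z$) satisfies $\Tr[\widetilde\rho_{N}(t)\mathcal N]=N$, and $\mathcal M(\widetilde\rho_{N}(t),N\in\N)$ is non-empty with every element a Borel probability measure on $\Z$ carried by the closed unit ball $B_{\Z}(0,1)$. The propagated bound $\Tr[\widetilde\rho_{N}(t)H_{N}^{0}]\le C'N$ together with the lower semicontinuity of the energy functional under Wigner convergence gives $\int_{\Z}\|z\|_{\Z_{1}}^{2}\,d\nu(z)\le C'$ for every $\nu\in\mathcal M(\widetilde\rho_{N}(t),N)$, which in particular forces $\nu$ to be supported on $\Z_{1}$. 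These are exactly the three listed properties, with the constant understood as the propagated one.

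\emph{Third}, the crux is to produce a single extraction $\psi$ valid for all $t\in\R$ at once. For fixed $\xi\in\Z_{1}$ the Duhamel formula \eqref{2_IN} gives $\mathcal I_{N}(t)-\mathcal I_{N}(s)=i\int_{s}^{t}\Tr[\widetilde\rho_{N}(r)\,\mathcal W(\sqrt{2N}\pi\xi)\,\sum_{j=1}^{4}\varepsilon^{j-1}(q_{j}(\xi,r))^{Wick}]\,dr$, and the integrand is bounded uniformly in $N$ and $r$ by a constant depending only on $\xi$ and $C'$; this is precisely the content of the number-operator estimates for the Wick monomials $q_{j}$ combined with the energy bound that underlie Proposition \ref{2_keycv} (the $q_{1}$ term being the $O(1)$ leading term and the remaining terms carrying extra powers of $\varepsilon=1/N$). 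Hence, for each fixed $\xi$, the family $(\mathcal I_{N}(\cdot))_{N}$ is uniformly bounded and uniformly equicontinuous on $\R$. Given an arbitrary subsequence, I would fix countable dense subsets $\{\xi_{j}\}\subset\Z_{1}$ and $\{t_{k}\}\subset\R$ and perform a diagonal extraction over $\{(\xi_{j},t_{k})\}$ to obtain $\psi$ along which $\mathcal I_{\psi(N)}^{\,\xi_{j}}(t_{k})$ converges for all $j,k$; the $t$-equicontinuity then upgrades this to convergence of $\mathcal I_{\psi(N)}^{\,\xi_{j}}(t)$ to a continuous limit for every $j$ and every $t\in\R$.

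\emph{Finally}, for each fixed $t$ I would identify the limit as the characteristic function of a single measure $\widetilde\mu_{t}$. The moment bound $\int\|z\|_{\Z_{1}}^{2}\,d\nu\le C'$ yields tightness of the associated Wigner/Husimi measures (in case \eqref{2_D1} through the compact embedding $\Z_{1}\hookrightarrow\Z$, and in case \eqref{2_D2} through the weak compactness of the unit ball of $\Z$), so the Ammari--Nier machinery guarantees that the limits along the dense set $\{\xi_{j}\}$ determine a unique Borel probability measure $\widetilde\mu_{t}$ and that $\mathcal I_{\psi(N)}^{\,\xi}(t)\to\int_{\Z}e^{2i\pi\Re\langle\xi,z\rangle_{\Z}}\,d\widetilde\mu_{t}(z)$ for \emph{every} $\xi\in\Z$, not only the dense ones; consequently $\mathcal M(\widetilde\rho_{\psi(N)}(t),N)=\{\widetilde\mu_{t}\}$ for all $t$, and the three bullet properties are inherited from the second step. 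The main obstacle is this passage from countably many times to the full continuum: a naive diagonalization controls only a countable set of times, while the scaled Weyl characteristic functions are \emph{not} equicontinuous in $\xi$ uniformly in $N$ (the scaling $\sqrt{2N}\pi\xi$ makes the $\xi$-derivative of $\mathcal I_{N}$ of order $N$). The Duhamel formula is what rescues the argument by supplying $t$-equicontinuity uniform in $N$, and the infinite-dimensionality is then absorbed by invoking tightness from the energy moment bound to treat the continuum of test vectors $\xi$ at each fixed time.
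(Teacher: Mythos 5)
Your proposal is correct and follows essentially the same route as the paper: propagate the free-energy bound \eqref{2_bounda} to all times using conservation of $H_{N}$ and the relative form bound from \eqref{2_A2}, perform the Ammari--Nier/Liard diagonal extraction over countable dense sets of test vectors and times, upgraded to all $t\in\R$ by the equicontinuity supplied by the Duhamel formula, and obtain the unit-ball support, the $\Z_{1}$-support and the second-moment bound from the convergence of Wick observables. The only differences are ones of emphasis: the paper delegates the extraction step to \cite{MR2802894,MR3661404} and instead details the truncation/Fatou argument behind what you call lower semicontinuity of the energy functional, and it silently keeps the same constant $C$ after propagating the energy bound, whereas you correctly observe that the propagated constant is a priori larger.
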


\begin{proof}
The extraction argument follows the one in \cite[Proposition 3.3]{MR2802894} and  \cite[Proposition 5.2]{MR3661404}.
We prove first that for every $t \in \R$, $\widetilde\mu_{t}$ is supported on the closed ball $B_{\Z}(0,1)$. Let $K \in \mathscr{L}^{\infty}(\Z)$ be a compact operator, then according to the previous work of \cite[Theorem 6.13 and Corollary 6.14]{MR2465733}
(see also \cite[Proposition A.5]{MR3661404}) one gets, for $m \in \N$:
\begin{equation}
\label{2_cvwick}
\int_{\Z} \langle z,Az \rangle^{m} \,d\widetilde\mu_{t}(z) = \lim\limits_{k \to \infty} \frac{1}{N_k^m} \,\Tr[\widetilde\rho_{N_{k}}(t)\,d\Gamma(K)^{m}] ~,~
\end{equation}
where $d\Gamma(K)$ is the standard  second quantization of the operator $K$. The Hilbert space $\Z$ is separable so we can find an increasing sequence of compact operators $(K_{n})_{n \in \N}$ such that $K_{n} \to \Id_{\Z}$ strongly. Then, for $n \in \N$ :
\[
\int_{\Z} \langle z,K_{n}z \rangle^{m} \,d\widetilde\mu_{t}(z) = \lim\limits_{k \to \infty}
\frac{1}{N_k^m} \,\Tr[\rho_{N_{k}}(t) \,d\Gamma(K_{n})^{m}] \leq \lim\limits_{k \to \infty} \frac{1}{N_k^m}\,\Tr[\rho_{N_{k}}(t)\,d\Gamma(Id_{\Z})^{m}] = 1  \,,
\]
since, if $B$ and $C$ are two operators such that $B \leq C$, then $d\Gamma(B) \leq d\Gamma(C)$. In the other hand, we obtain:
\[
\lim\limits_{n \to \infty} \int_{Z}  \langle z,A_{n}z \rangle^{m} \,d\widetilde\mu_{t}(z) =  \int_{\Z} \langle z,z \rangle^{m} \,
d\widetilde\mu_{t}(z) = \int_{\Z} ||z||_{\Z}^{2m} \,d\widetilde\mu_{t}(z)  \,.
\]
Finally, we obtain the following bound,
\[
\forall m \in \N ~,~ \int_{\Z} ||z||_{\Z}^{2m} \,d\widetilde\mu_{t}(z) \leq 1\,,\
\]
which subsequently leads  to the equality:
\[
\widetilde\mu_{t}(B_{\Z}(0,1))=1\,.
  \]
The assumption \eqref{2_bounda} on the states $\rho_{N}$ can be extended to all times such that:
\[
 \forall t \in \R ~,~ \forall k \in \N ~,~ \Tr[ \rho_{N_{k}}(t) \,H_{N_{k}}^{0} ] \leq C N_{k}\,.
 \]
Then,  we have:
\[
\forall t \in \R ~,~ \forall k \in \N ~,~ \Tr[ \rho_{N_{k}}(t) \,d\Gamma(A) ] \leq C N_{k}  \,.
\]
Let $B \in \mathscr{L}^{\infty}(\Z)$, $B \leq A$, then a simple estimate yields:
\[
\forall t \in \R ~,~ \forall k \in \N ~,~ \Tr[ \rho_{N_{k}}(t) \,d\Gamma(B) ] \leq C N_{k}\,.
\]
And then using \eqref{2_cvwick}  and taking the limit $k \to\infty$ leads to the inequality:
\[
\forall t \in \R ~,~ \int_{\Z} \langle z,Bz \rangle\, d\widetilde\mu_{t}(z) \leq C  \,.
\]
As this inequality holds true for any compact operator $B$ bounded by $A$, we will use a truncation argument. Let $\chi \in \mathscr{C}_{0}^{\infty}(\R)$ such that $0 \leq \chi \leq 1$, $\chi \equiv 1 $ on $[-1;1]$ and $\chi \equiv 0 $ on $[-2;2]^{c}$. Then for $R >0$, $\chi(\frac{A}{R})A$ is a bounded operator bounded by $A$. Now, it is not difficult to see that a bounded  non-negative  operator can always be weakly approximated  from below by a  sequence of non-negative compact operators $B_n(R)\leq\chi(\frac{A}{R})A\leq A $. So that:
\[
\int_{\Z} \langle z,\chi(\frac{A}{R})Az \rangle \,d\widetilde\mu_{t}(z)=\lim_{n\to\infty} \int_{\Z} \langle z, B_n(R) z \rangle \, d\widetilde\mu_{t}(z)
 \leq C \,,
\]
by dominated convergence. For $z \in \Z$, the following limits hold true:
\[
\lim\limits_{R \to \infty} \langle z,\chi(\frac{A}{R})Az \rangle = \langle z,Az \rangle ~\text{ if } ~ z \in Q(A)=\Z_{1}\,,
  \]
\[
\lim\limits_{R \to \infty} \langle z,\chi(\frac{A}{R})Az \rangle = \infty ~\text{ if } ~ z \notin Q(A)=\Z_{1}  \,.
\]
Using Fatou's lemma, we get:
\[
\int_{\Z} \liminf_{R \to + \infty}  \langle z,\chi(\frac{A}{R})Az \rangle \,d\widetilde\mu_{t}(z) \leq   \liminf_{R \to + \infty}  \int_{\Z}  \langle z,\chi(\frac{A}{R})Az \rangle \,d\widetilde\mu_{t}(z) \leq C\,.
 \]
Hence $\mu_{t}$ is supported on $\Z_1=Q(A)$ for all $t \in \R$  with
\[
\int_{\Z} ||z||_{\Z_{1}}^{2}\,d\widetilde\mu_{t}(z) \leq C \,.
\]
\end{proof}

We recall now the result of convergence proved in \cite[Proposition 5.2]{MR3661404} which is obtained as a combinations
of the Duhamel formula of Proposition \ref{2_Duhamel}  with the convergence argument  of Proposition
 \ref{2_keycv} and the extraction argument  of Proposition  \ref{2_mu boule}.

\begin{prop}\label{2_rho charac}
Let $\lbrace \rho_{N} \rbrace_{N \in \mathbb{N}}$ be a sequence of normal states on $\bigvee^{N} \Z$ satisfying the same  assumptions as in Proposition \ref{2_keycv}.
Then for any subsequence of $\lbrace \rho_{N} \rbrace_{N \in \mathbb{N}}$ there exist an extraction $\psi$ and  a family of Borel probability measures $ \lbrace \widetilde\mu_{t} \rbrace_{t \in \mathbb{R}}$ on $\Z$ such that for all $t \in \mathbb{R}$,
\[
 \mathcal{M}(\widetilde{\rho}_{\psi(N)}(t) ~,~ N \in \mathbb{N} ) = \lbrace \widetilde\mu_{t} \rbrace ~,~
 \]
with  the following characteristic equation  satisfied for any $\xi \in \Z_1=Q(A)$,
\begin{equation}
\label{2_eqchar}
      \widetilde \mu_{t}(e^{2i\pi Re \langle \xi,z \rangle })= \mu_{0}(e^{2i\pi Re \langle \xi,z \rangle}) + i \int_{0}^{t} \widetilde\mu_{s}(e^{2i\pi Re \langle \xi,z \rangle }q_{1}(\xi,s)[z]) ds\,,
\end{equation}
where $q_{1}(\xi,s)[z] = -2\pi Im ~ q(z_{s}^{\otimes 2}, P_{2} \xi_{s} \otimes z_{s})$.
\end{prop}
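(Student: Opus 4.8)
The plan is to pass to the limit $N\to\infty$ in the exact Duhamel identity \eqref{2_IN} of Proposition \ref{2_Duhamel}, using the diagonal extraction of Proposition \ref{2_mu boule} to secure single Wigner measures at every time and the single convergence estimate of Proposition \ref{2_keycv} to handle the leading nonlinear term. First I would fix an arbitrary subsequence of $(\rho_N)_{N\in\N}$ and run the extraction argument of Proposition \ref{2_mu boule} on the interaction-picture states $\widetilde\rho_N(t)$; this argument is insensitive to the conjugation by $e^{itH_N^0}$ (which leaves $\Tr[\widetilde\rho_N(t)H_N^0]\le CN$ unchanged, since $H_N^0$ commutes with $e^{itH_N^0}$), so it produces an extraction $\psi$ and a curve $t\mapsto\widetilde\mu_t$ with $\mathcal{M}(\widetilde\rho_{\psi(N)}(t),N\in\N)=\{\widetilde\mu_t\}$, each $\widetilde\mu_t$ supported on $\Z_1\cap B_{\Z}(0,1)$ and obeying $\int_{\Z}\|z\|_{\Z_1}^2\,d\widetilde\mu_t\le C$. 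By the definition of the Wigner measure this gives, for every fixed $t$, $\mathcal{I}_{\psi(N)}(t)\to\widetilde\mu_t(e^{2i\pi\Re\langle\xi,z\rangle})$; and since $\widetilde\rho_N(0)=\rho_N$ one has $\widetilde\mu_0=\mu_0$, so also $\mathcal{I}_{\psi(N)}(0)\to\mu_0(e^{2i\pi\Re\langle\xi,z\rangle})$.

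Next I would take the limit term by term in the finite sum $\sum_{j=1}^4\varepsilon^{j-1}(q_j(\xi,s))^{Wick}$ appearing inside the integral of \eqref{2_IN}, with $\varepsilon=1/N$. The key point is that for each fixed time $s$ the states $(\widetilde\rho_{\psi(N)}(s))_N$ form a sequence of normal states with Wigner measure $\{\widetilde\mu_s\}$ and energy bounded by $CN$; hence Proposition \ref{2_keycv} applies verbatim at the time parameter $s$ and gives the pointwise-in-$s$ limit of the leading term,
\[
\lim_{N\to\infty}\Tr[\widetilde\rho_{\psi(N)}(s)\,\mathcal{W}(\sqrt{2N}\pi\xi)\,[q_1(\xi,s)]^{Wick}]=\widetilde\mu_s\big(e^{2i\pi\Re\langle\xi,z\rangle}\,q_1(\xi,s)[z]\big),
\]
exactly as in \eqref{2_rhoN}. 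The terms $j=2,3,4$ carry the prefactor $\varepsilon^{j-1}=N^{-(j-1)}$ with $j-1\ge1$, so they should be killed in the limit, provided the traces $\Tr[\widetilde\rho_{\psi(N)}(s)\,\mathcal{W}(\sqrt{2N}\pi\xi)\,(q_j(\xi,s))^{Wick}]$ grow like $o(N^{j-1})$; this is controlled by number-operator (Wick-calculus) estimates together with the relative form bound \eqref{2_A2} and the propagated energy control $\Tr[\widetilde\rho_N(s)H_N^0]\le CN$.

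To interchange the limit with the time integral on $[0,t]$ I would use dominated convergence. A convenient simplification is that $A$ commutes with $e^{-isA}$, so $\|\xi_s\|_{\Z_1}=\|\xi\|_{\Z_1}$ is independent of $s$ and the symbols $q_j(\xi,s)$ satisfy form bounds that are uniform in $s$; combining this with the $N$-uniform, time-locally-uniform energy bound yields a dominating function in $L^1_{\mathrm{loc}}(\R,ds)$. Passing to the limit then transforms \eqref{2_IN} into precisely the characteristic equation \eqref{2_eqchar}, with right-hand side $\mu_0(e^{2i\pi\Re\langle\xi,z\rangle})+i\int_0^t\widetilde\mu_s(e^{2i\pi\Re\langle\xi,z\rangle}q_1(\xi,s)[z])\,ds$, for every $\xi\in\Z_1$.

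I expect the main obstacle to be the pair of intertwined estimates of the second step: producing an $N$-uniform and $s$-uniform bound on the Wick traces that is at once sharp enough to dominate the time integrand (for the dominated-convergence step) and to force the higher-order ($j\ge2$) contributions to vanish. Everything else is bookkeeping assembling Propositions \ref{2_Duhamel}, \ref{2_mu boule} and \ref{2_keycv}; since these are quoted from \cite{MR3661404}, in practice this proof amounts to organizing the limit passage correctly rather than reproving the hard analytic input.
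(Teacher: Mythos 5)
Your proposal is correct and follows exactly the route the paper itself takes: the paper does not give a detailed proof of this proposition but cites \cite[Proposition 5.2]{MR3661404} and describes it as the combination of the Duhamel formula (Proposition \ref{2_Duhamel}), the diagonal extraction (Proposition \ref{2_mu boule}) applied to the interaction-picture states, and the key convergence (Proposition \ref{2_keycv}) for the leading term, with the $\varepsilon^{j-1}$-weighted terms vanishing and dominated convergence handling the time integral. Your identification of the residual analytic work (uniform-in-$s$ Wick/number-operator bounds for the $j\ge 2$ terms and for the dominating function) is precisely the content delegated to the cited reference.
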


\begin{lm}\label{2_equiv charac}
We conserve the previous assumptions and notations.  Then the measures $\{\widetilde\mu_t\}_{t\in\R}$ obtained in the
Proposition \ref{2_rho charac} satisfy  the characteristic equation \eqref{2_eqchar2},
\[ \widetilde\mu_{t}(e^{2i\pi Re \langle y,. \rangle_{\Z}}) = \mu_{0}(e^{2i\pi Re \langle y,.\rangle_{\Z}}) + 2i\pi \int_{0}^{t} \widetilde\mu_{s} (e^{2i\pi Re \langle y,. \rangle_{\Z}} Re \langle v(s,.),y \rangle_{\Z}) \,ds\,,
 \]
where  $v$ is the vector field $v(t,z):= -ie^{itA} \partial_{\overline{z}}q_{0}(e^{-itA}z)$ of the mean field equation \eqref{2_int.IVP}.
\end{lm}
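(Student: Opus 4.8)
The plan is to identify the two characteristic equations by reducing the whole statement to a single pointwise identity between their integrands. Comparing \eqref{2_eqchar} with \eqref{2_eqchar2} specialized to $y=\xi$, the only thing to verify is that
\[
q_{1}(\xi,s)[z] = 2\pi\,\mathrm{Re}\langle v(s,z),\xi\rangle_{\Z}
\]
holds for every $\xi\in\Z_{1}$, every $s\in\R$ and $\widetilde\mu_{s}$-a.e.~$z$. Once this is granted, multiplying by $i$ turns the integrand $e^{2i\pi\mathrm{Re}\langle\xi,z\rangle}q_{1}(\xi,s)[z]$ of \eqref{2_eqchar} into $2i\pi\,e^{2i\pi\mathrm{Re}\langle\xi,z\rangle}\mathrm{Re}\langle v(s,z),\xi\rangle_{\Z}$, which is precisely the integrand of \eqref{2_eqchar2}, so the lemma follows. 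Before carrying this out I would record that all integrals are meaningful: by Proposition \ref{2_mu boule} the measures $\widetilde\mu_{s}$ are carried by $\Z_{1}$ with $\int_{\Z}\|z\|_{\Z_{1}}^{2}\,d\widetilde\mu_{s}(z)\le C$, so $z_{s}=e^{-isA}z\in\Z_{1}$ almost everywhere and, together with the bound \eqref{2_int.bndv}, this yields the finiteness condition \eqref{2_condliou}.

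The first computational step is to unfold the anti-holomorphic gradient of the quartic monomial $q_{0}$ of \eqref{2_intr.q0}. Since $q$ is antilinear in its first argument and linear in its second, I would differentiate $q_{0}(z)=\tfrac12 q(z\otimes z,z\otimes z)$ along $z\mapsto z+\varepsilon u$ and retain the part that is antilinear in $u$, i.e.~the contribution of the first slot, obtaining
\[
\partial_{\overline{z}}q_{0}(z)[u]=\tfrac12\,q\big(u\otimes z+z\otimes u,\,z\otimes z\big)=q\big(P_{2}(u\otimes z),\,z\otimes z\big).
\]
This is an antilinear continuous form in $u$ on $Q(A)$, hence identified with a vector of $\Z_{-1}$ through the Riesz duality $\partial_{\overline{z}}q_{0}(z)[u]=\langle u,\partial_{\overline{z}}q_{0}(z)\rangle$ extending the inner product of $\Z$.

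Next I would insert this expression into the definition $v(s,z)=-i\,e^{isA}\partial_{\overline{z}}q_{0}(z_{s})$ of \eqref{2_int.v}. Using that $e^{isA}$ is unitary and commutes with $A$ (so it preserves the rigging $\Z_{1}\subset\Z\subset\Z_{-1}$ and passes across the duality bracket via $(e^{isA})^{*}=e^{-isA}$), and that the scalar product is antilinear on the left, I expect
\[
\langle v(s,z),\xi\rangle_{\Z}=i\,\langle\partial_{\overline{z}}q_{0}(z_{s}),\xi_{s}\rangle=i\,\overline{\partial_{\overline{z}}q_{0}(z_{s})[\xi_{s}]},
\]
with $\xi_{s}=e^{-isA}\xi$. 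The previous formula together with the Hermitian symmetry $q(a,b)=\overline{q(b,a)}$ then gives $\overline{\partial_{\overline{z}}q_{0}(z_{s})[\xi_{s}]}=q(z_{s}^{\otimes 2},P_{2}\,\xi_{s}\otimes z_{s})$, whence $\langle v(s,z),\xi\rangle_{\Z}=i\,q(z_{s}^{\otimes 2},P_{2}\,\xi_{s}\otimes z_{s})$. Taking real parts and using $\mathrm{Re}(iw)=-\mathrm{Im}(w)$ recovers $2\pi\,\mathrm{Re}\langle v(s,z),\xi\rangle_{\Z}=-2\pi\,\mathrm{Im}\,q(z_{s}^{\otimes 2},P_{2}\,\xi_{s}\otimes z_{s})=q_{1}(\xi,s)[z]$, which is the sought identity.

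The step I expect to require the most care is not analytic but a matter of bookkeeping: getting consistently right the Wirtinger (anti-holomorphic) derivative of $q_{0}$, the antilinearity convention of the bracket, the role of the symmetrizing projection $P_{2}$ and of the Hermitian symmetry of $q$, and checking that the duality pairing $\langle\partial_{\overline{z}}q_{0}(z_{s}),\xi_{s}\rangle$ is legitimate — which holds because $\xi\in\Z_{1}$ and $\widetilde\mu_{s}$ is supported on $\Z_{1}$. Once the pointwise identity is secured, the equivalence of the two characteristic equations is immediate and the lemma is proved.
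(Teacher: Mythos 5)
Your proposal is correct and follows essentially the same route as the paper: both reduce the lemma to the pointwise identity $q_{1}(\xi,s)[z]=2\pi\,\mathrm{Re}\langle v(s,z),\xi\rangle_{\Z}$ and verify it by unfolding $\partial_{\overline{z}}q_{0}$ and using the sesquilinearity and Hermitian symmetry of $q$ together with $P_{2}(\xi\otimes z)=P_{2}(z\otimes\xi)$. The only cosmetic difference is that the paper reduces to $s=0$ by unitarity of $e^{isA}$ before computing, whereas you carry the time dependence through; the bookkeeping of signs and conjugates in your computation is consistent with the paper's conventions.
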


\begin{proof}
Remember that the vector field $v(t,z):= -ie^{itA} \partial_{\overline{z}}q_{0}(e^{-itA}z)$ is continuous, bounded on bounded sets
and satisfies the estimate \eqref{2_int.bndv}. So,  the characteristic equation above makes sense. Now, it is  enough to prove that
$$
q_{1}(\xi,s)[z] = 2\pi Re  \langle v(s,z),\xi \rangle_{\Z} \,.
$$
Without losing generality, we prove such assertion for $t=0$.  Let $z \in \Z_{1}$ and $\xi \in \Z_{1}$, then:
\begin{eqnarray*}
&&Re( \langle v(0,z),\xi \rangle_{\Z}) = Re ( \langle \frac{-i}{2} [q(z\otimes . , z^{\otimes 2}) + q(.\otimes z , z^{\otimes 2} ) ], \xi \rangle_{\Z} ) \\
&&= \frac{1}{2} Re ( \langle -i q(z\otimes . , z^{\otimes 2}) + (q(.\otimes z , z^{\otimes 2}), \xi \rangle_{\Z} ) = \frac{1}{2} Re (-i q(z\otimes \xi, z^{\otimes 2})) + \frac{1}{2} Re (-i q(\xi \otimes z, z^{\otimes 2})) \\
&& = \frac{-1}{4} [i q(z\otimes \xi, z^{\otimes 2}) - i \overline{q(z\otimes \xi, z^{\otimes 2})} + i q(\xi \otimes z, z^{\otimes 2}) - i \overline{q(\xi \otimes z, z^{\otimes 2})} ]   \\
&&=  \frac{-i}{2} (q(P_{2} (z \otimes \xi), z^{\otimes 2}) - \overline{q(P_{2} (z \otimes \xi), z^{\otimes 2})}) = Im ( q(P_{2} z\otimes \xi, z^{\otimes 2}) ) \\
&& = -Im ( q(z^{\otimes 2}, P_{2} z \otimes \xi) )\,,
\end{eqnarray*}
where we used the sesquilinearity of the quadratic form $q$.
\end{proof}

\section{Probabilistic representation and uniqueness}
\label{2_Probrep.sec}

In this section, we show that the Liouville equation admits a unique solution
if the related  initial value problem \eqref{2_int.IVP}  verifies the uniqueness property of Definition \ref{2_defweaksol}. Such result improves the one of \cite{MR3721874} and has independent interest. The argument of uniqueness is inspired in one hand by a probabilistic representation for the solutions of the Liouville equation proved in finite dimension in \cite{MR2129498} and extended to infinite dimensional spaces in \cite{MR3379490,MR3721874}; and in the other hand in a  construction of a generalized flow for the initial value problem \eqref{2_int.IVP}  inspired by  \cite{MR2668627} and our recent work \cite{Ammari:2018aa}. So, such uniqueness argument  presented here combined with the convergence results of Section \ref{2_conv.sec}, allow finally to prove our main Theorem \ref{2_main.thm}.

\bigskip
We introduce some useful notations. Let $I$ be an open bounded time interval. We define the space
$$
\mathcal{X} = \Z_{-1} \times \mathscr{C}(\overline{I},\Z_{- 1}),
$$
endowed with the norm:
\[
||(x,\phi)||_{\mathcal{X}} = ||x||_{\Z_{-1}} + \sup_{t \in \overline{I}} ||\phi(t)||_{\Z_{-1}} \,.
\]
For $t \in I$ we define the evaluation map $e_{t}$ over $\mathcal{X}$ as,
\[
e_{t} : (x,\phi) \in \mathcal{X} \to \phi(t) \in \Z_{- 1} \,.
\]
We recall the result proved in \cite[Proposition 4.1]{MR3721874} which justifies the existence of a probability measure $\eta$ concentrated on the solutions of a given initial value problem.

\begin{prop}\label{2_mesureeta}
Consider $v: \R \times \Z_{1} \to \Z_{- 1}$ a Borel vector field such that $v$ is bounded on bounded sets.
Let $I$ be an open bounded time interval containing the origin.
Let $ t \in I \to \mu_{t} \in \mathfrak{P}(\Z_1)$ be a weakly narrowly continuous curve in $\mathfrak{P}(\Z_{- 1})$ satisfying the bound \eqref{2_condliou} and the Liouville equation \eqref{2_Liou} on $I$. Then there exists $\eta$ a Borel probability measure on the space $(\mathcal{X},||.||_{\mathcal{X}})$ satisfying:
\begin{itemize}

\item[(i)] $\eta$ is concentrated on the set of points $(x,\gamma) \in \mathcal{X}$ such that  $\gamma \in W^{1,1}(I,\Z_{- 1})$, $\gamma$ are solutions of the initial value problem \eqref{2_int.IVP} for almost every $t \in I$ and $\gamma(t) \in \Z_{1}$ for almost every $t \in I$ with $\gamma(0)=x_0 \in \Z_{1}$.

\item[(ii)] $ \mu_{t} = (e_{t})\sharp \eta$ for any $t \in I$.
\end{itemize}
\end{prop}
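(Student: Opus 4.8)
The plan is to obtain $\eta$ through a superposition principle for the Liouville (continuity) equation, reducing the infinite–dimensional statement to a family of finite–dimensional ones by means of the projections $\pi\in\mathbb{P}_{n}$ introduced in Section~\ref{2_liouv.sec}. First I would fix $\pi\in\mathbb{P}_{n}$ and set $\mu_{t}^{\pi}:=\pi_{\sharp}\mu_{t}\in\mathfrak{P}(\R^{n})$. Disintegrating $\mu_{t}$ along the fibres of $\pi$ as $d\mu_{t}(x)=d\mu_{t}^{\pi,y}(x)\,d\mu_{t}^{\pi}(y)$ and testing the Liouville equation \eqref{2_Liou} against cylindrical functions $\phi(t,x)=\chi(t)\,\psi(\pi(x))$ with $\chi\in\mathscr{C}_{0}^{\infty}(I)$ and $\psi\in\mathscr{C}_{0}^{\infty}(\R^{n})$, one checks that $t\mapsto\mu_{t}^{\pi}$ solves, in the distributional sense, the finite–dimensional continuity equation $\partial_{t}\mu_{t}^{\pi}+\nabla\!\cdot\!\big(b_{t}^{\pi}\,\mu_{t}^{\pi}\big)=0$ with averaged drift
\[
b_{t}^{\pi}(y)=\int_{\pi^{-1}(y)}\pi\big(v(t,x)\big)\,d\mu_{t}^{\pi,y}(x).
\]
Since $\pi$ is an orthogonal projection, $|\pi(w)|\le\|w\|_{\Z_{-1}}$, so Jensen's inequality and \eqref{2_condliou} give $\int_{I}\int_{\R^{n}}|b_{t}^{\pi}(y)|\,d\mu_{t}^{\pi}(y)\,dt\le\int_{I}\int_{\Z_{1}}\|v(t,x)\|_{\Z_{-1}}\,d\mu_{t}(x)\,dt<+\infty$, which is exactly the integrability required by the classical theory.

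At this stage I would invoke the finite–dimensional superposition principle (see \cite{MR2129498,MR2400257}): for each such $\pi$ it produces a probability measure on $\mathscr{C}(\overline{I},\R^{n})$ concentrated on absolutely continuous curves $\sigma$ solving $\dot{\sigma}(t)=b_{t}^{\pi}(\sigma(t))$ for a.e.\ $t$, whose time-$t$ marginal equals $\mu_{t}^{\pi}$. Letting $\pi$ range over an increasing sequence $(\pi_{n})_{n}$ attached to an orthonormal basis $(e_{k})_{k}$ of $\Z_{-1,\R}$ that exhausts the space yields a sequence of measures $\eta_{n}$ on the path spaces whose $n$-th coordinates solve the corresponding projected ODE with the prescribed marginals.

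The core of the argument is the tightness of $(\eta_{n})_{n}$, which I would establish in the weak topology of $\Z_{-1}$. For each fixed time the marginal $\mu_{t}$ is a Borel probability measure on the Polish space $\Z_{-1}$, hence inner regular and tight; simultaneously, the velocity control \eqref{2_condliou} produces, through the coordinate integral identity $\langle\sigma(t)-\sigma(s),e_{k}\rangle_{\Z_{-1}}=\int_{s}^{t}\langle v(r,\sigma(r)),e_{k}\rangle_{\Z_{-1}}\,dr$, a common modulus of continuity in each coordinate. Combining these two facts through an Arzel\`a--Ascoli criterion in the weak topology, following \cite[Proposition~4.1]{MR3721874}, furnishes a weakly narrowly convergent subsequence with limit a path measure which I realize on $\mathcal{X}$ via the map $\gamma\mapsto(\gamma(0),\gamma)$, calling it $\eta$. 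Passing to the limit in the marginal identities and using the weak narrow continuity of $t\mapsto\mu_{t}$ gives $(e_{t})_{\sharp}\eta=\mu_{t}$ for every $t\in I$, which is assertion (ii).

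Finally I would verify assertion (i), namely that $\eta$-almost every curve solves the genuine infinite–dimensional problem \eqref{2_int.IVP} and not merely its projections. The delicate point is that the drifts $b_{t}^{\pi_{n}}$ are conditional averages over the fibres of $\pi_{n}$, which shrink to points as $n\to\infty$; one must show that along the limiting paths these averages converge to $\langle v(r,\cdot),e_{k}\rangle_{\Z_{-1}}$, thereby identifying the limit drift with $v$ itself and producing $\gamma(t)=x_{0}+\int_{0}^{t}v(r,\gamma(r))\,dr$ for $\eta$-a.e.\ $(x_{0},\gamma)$, whence $\gamma\in W^{1,1}(I,\Z_{-1})$ with $\gamma(0)=x_{0}$. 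The concentration $\gamma(t)\in\Z_{1}$ for a.e.\ $t$ transfers from the hypothesis $\mu_{t}\in\mathfrak{P}(\Z_{1})$ by Fubini applied to the marginal identity. I expect this last identification to be the main obstacle: one has to control the fibre-averaged drifts $b_{t}^{\pi_{n}}$ uniformly in $n$ and along the weakly convergent curves, and to rule out any escape of mass onto curves that leave $\Z_{1}$ on a set of positive time-measure. This is precisely where the interplay between the velocity bound \eqref{2_condliou}, the Borel measurability and local boundedness of $v$, and the weak compactness of bounded sets of $\Z_{-1}$ must be handled carefully.
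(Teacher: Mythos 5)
First, a point of comparison: the paper does not prove this proposition at all --- it is recalled verbatim from \cite[Proposition 4.1]{MR3721874}, so there is no internal proof to measure you against. Your strategy (project onto finite-dimensional subspaces via the $\pi\in\mathbb{P}_{n}$, apply the finite-dimensional superposition principle of \cite{MR2129498} to the projected continuity equations with fibre-averaged drifts, establish tightness of the resulting path measures in the weak topology, and pass to the limit) is exactly the strategy of that reference and of \cite{MR3379490}. The steps you actually carry out are correct: the Jensen bound giving $\int_{I}\int_{\R^{n}}|b_{t}^{\pi}|\,d\mu_{t}^{\pi}\,dt<\infty$, the upgrade of the marginal identity from a.e.\ $t$ to all $t$ via weak narrow continuity, and the Fubini argument transferring the concentration $\mu_{t}(\Z_{1})=1$ to the statement that $\eta$-a.e.\ curve lies in $\Z_{1}$ for a.e.\ $t$.

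There is, however, a genuine gap where you yourself flag ``the main obstacle'': you never show that the limit measure $\eta$ is concentrated on solutions of the full equation $\gamma(t)=x_{0}+\int_{0}^{t}v(s,\gamma(s))\,ds$ rather than merely on curves satisfying every projected equation with the averaged drifts $b_{t}^{\pi_{n}}$. This is the substance of the proof, not a technical remainder. The standard way to close it is to bound, for fixed $k$ and $t$, the quantity
\[
\int_{\mathcal{X}}\Big(\Big|\langle \gamma(t)-\gamma(0),e_{k}\rangle_{\Z_{-1}}-\int_{0}^{t}\langle v(s,\gamma(s)),e_{k}\rangle_{\Z_{-1}}\,ds\Big|\wedge 1\Big)\,d\eta
\]
by the analogous quantity for $\eta_{n}$ (which vanishes, since $\eta_{n}$ is carried by solutions of the projected ODE) plus an error controlled by
\[
\int_{I}\int_{\Z_{-1}}\big\|v(s,x)-\mathbb{E}_{\mu_{s}}\big[v(s,\cdot)\,\big|\,\sigma(\pi_{n})\big](x)\big\|_{\Z_{-1}}\,d\mu_{s}(x)\,ds\,,
\]
which tends to $0$ by the martingale convergence theorem because the $\sigma$-algebras $\sigma(\pi_{n})$ increase to the Borel $\sigma$-algebra of $\Z_{-1}$ and $v\in L^{1}(d\mu_{s}\,ds;\Z_{-1})$ by \eqref{2_condliou}. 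Even then, since $v$ is only Borel, the functional above is not continuous on path space, so passing to the limit along the narrowly convergent subsequence requires an additional approximation of $v$ by bounded continuous (cylindrical) vector fields in $L^{1}(d\mu_{t}\,dt)$ together with lower semicontinuity. Without these two ingredients --- the martingale-convergence identification of the drift and the $L^{1}$ approximation of $v$ --- the argument does not yield assertion (i), and a limit measure concentrated only on solutions of all projections could a priori charge curves that fail to solve \eqref{2_int.IVP} itself.
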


\begin{remark}
\label{2_remv}
According to the Lemma \ref{2_Controle de v}, the vector field $v$ of the mean-field equation \eqref{2_int.IVP} given in
\eqref{2_int.v} satisfies the requested conditions of the above proposition. Moreover, as we will see later on, the family of measures $\widetilde\mu_{t}$ provided by Proposition \ref{2_mu boule} also verifies
the above hypothesis.
\end{remark}

The existence of such a measure $\eta$ is pretty important for us as it concentrates on trajectories of the initial value problem \eqref{2_int.IVP}. The most important implication is the existence of a flow for the initial value problem \eqref{2_int.IVP} if  the uniqueness property of Definition \ref{2_defweaksol} is satisfied.

\medskip
We introduce the set:
\[
\mathfrak{L}^{2,\infty}(\overline{I},\Z_{1}) = \lbrace u \in \mathscr{C}(\overline{I}, \Z_{- 1}) :  ||u(\cdot)||_{L^2(\bar I,\Z_{1})} + ||u(\cdot)||_{L^{\infty}(\overline{I},\Z)} < + \infty \rbrace \,.
\]
We recall a useful property of weakly continuous curves in $\Z_{- 1}$.
\begin{remark}\label{2_rmkcontinuitycurve}
\begin{itemize}
\item Let $\gamma \in \mathscr{C}(\overline{I},\Z_{- 1})$ such that $ ||\gamma(\cdot)||_{L^{\infty}(\overline{I},\Z)} < + \infty$, then $\gamma(t)$ belongs to $\Z$ for every $t \in \overline{I}$ and the function $\gamma:t\in\bar I \to \Z$ is weakly continuous, i.e.~$\gamma \in \mathscr{C}_{w}(\overline{I},\Z)$.

\item As a consequence of this, a curve $u \in \mathfrak{L}^{2,\infty}(\overline{I},\Z_{1})$ is always a $\Z$-valued weakly continuous function on $\bar I$.
\end{itemize}
\end{remark}

\begin{lm}
\label{2_Fmes}
Assume the same assumptions as in Proposition \ref{2_mesureeta}  and consider $v$ to be the vector field  given by \eqref{2_int.v}. Assume further  \eqref{2_A1} and \eqref{2_A2} then
\[
\mathcal{F} := \big\lbrace (x,\gamma) \in \Z_{1} \times \mathfrak{L}^{2,\infty}(\overline{I},\Z_{1}) ~;~ \forall t \in \overline{I} ~,~ \gamma(t) = x + \int_{0}^{t} v(s,\gamma(s))ds \big\rbrace   \,,
\]
is a Borel subset of $\mathcal{X}$ satisfying $\eta(\mathcal{F})=1$, where $\eta$ is the previous probability measure of
Proposition \ref{2_mesureeta}.
\end{lm}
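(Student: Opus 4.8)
The plan is to split the claim into two independent tasks: the Borel measurability of $\mathcal{F}$ as a subset of $\mathcal{X}=\Z_{-1}\times\mathscr{C}(\overline{I},\Z_{-1})$, and the full-measure statement $\eta(\mathcal{F})=1$. A preliminary reduction helps both. Since any $\gamma\in\mathscr{C}(\overline{I},\Z_{-1})$ is $\Z_{-1}$-continuous and, on the set where $\gamma\in\mathfrak{L}^{2,\infty}(\overline{I},\Z_{1})$, the map $s\mapsto v(s,\gamma(s))$ lies in $L^{1}(I,\Z_{-1})$ thanks to the bound \eqref{2_int.bndv} (see Lemma \ref{2_Controle de v}), both sides of the identity $\gamma(t)=x+\int_{0}^{t}v(s,\gamma(s))\,ds$ depend continuously on $t$. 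Hence the quantifier ``$\forall t\in\overline{I}$'' can be replaced by ``$\forall t\in D$'' for a fixed countable dense set $D\subset\overline{I}$, turning the defining condition of $\mathcal{F}$ into a countable intersection.

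For the measurability I would write $\mathcal{F}$ as an intersection of three Borel pieces. The condition $x\in\Z_{1}$ defines a Borel subset of $\Z_{-1}$ because $w\mapsto\|w\|_{\Z_{1}}$ is a lower semicontinuous $[0,+\infty]$-valued function on $\Z_{-1}$; the membership $\gamma\in\mathfrak{L}^{2,\infty}(\overline{I},\Z_{1})$ is Borel for the same reason, since $\gamma\mapsto\int_{I}\|\gamma(t)\|_{\Z_{1}}^{2}\,dt$ and $\gamma\mapsto\sup_{t\in D}\|\gamma(t)\|_{\Z}$ are countable integrals and suprema of the lower semicontinuous evaluations $\gamma\mapsto\|\gamma(t)\|_{\Z_{1}}$ and $\gamma\mapsto\|\gamma(t)\|_{\Z}$. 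The delicate piece is the integral identity: for fixed $t\in D$ one must show that $(x,\gamma)\mapsto\gamma(t)-x-\int_{0}^{t}v(s,\gamma(s))\,ds\in\Z_{-1}$ is Borel, which reduces to the joint measurability of $(s,\gamma)\mapsto v(s,\gamma(s))$ and a Fubini-type argument for the resulting Bochner integral. I expect this to be the main technical obstacle, since $v$ is only Borel and defined merely on $\Z_{1}$ while $\gamma$ a priori lives in $\Z_{-1}$; I would dispatch it by invoking the measurability lemmas collected in Appendix \ref{2_appA}.

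For $\eta(\mathcal{F})=1$ I would start from Proposition \ref{2_mesureeta}: for $\eta$-almost every $(x,\gamma)$ one has $\gamma\in W^{1,1}(I,\Z_{-1})$, $\gamma$ solves \eqref{2_int.IVP} for a.e.\ $t$, $\gamma(t)\in\Z_{1}$ for a.e.\ $t$, and $x=\gamma(0)\in\Z_{1}$. Absolute continuity of $W^{1,1}$ curves then upgrades the differential equation to the integral identity $\gamma(t)=x+\int_{0}^{t}v(s,\gamma(s))\,ds$ for every $t\in\overline{I}$. It remains to establish $\gamma\in\mathfrak{L}^{2,\infty}(\overline{I},\Z_{1})$ $\eta$-a.e. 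Using $\mu_{t}=(e_{t})_{\sharp}\eta$ together with Tonelli, $\int_{\mathcal{X}}\int_{I}\|\gamma(t)\|_{\Z_{1}}^{2}\,dt\,d\eta=\int_{I}\int_{\Z_{-1}}\|z\|_{\Z_{1}}^{2}\,d\mu_{t}(z)\,dt<+\infty$, the inner bound being the $\Z_{1}$-second moment estimate of the Wigner measures from Proposition \ref{2_mu boule}, so $\gamma\in L^{2}(I,\Z_{1})$ for $\eta$-a.e.\ $(x,\gamma)$. For the $L^{\infty}$ bound, the support property $\mu_{t}(B_{\Z}(0,1))=1$ gives, for each $t\in D$, that $\|\gamma(t)\|_{\Z}\leq1$ holds $\eta$-a.e.; intersecting over $D$ and using that $t\mapsto\|\gamma(t)\|_{\Z}$ is lower semicontinuous (the $\Z$-norm being l.s.c.\ on $\Z_{-1}$ and $\gamma$ being $\Z_{-1}$-continuous) extends this to $\|\gamma(t)\|_{\Z}\leq1$ for all $t\in\overline{I}$, whence $\gamma\in L^{\infty}(\overline{I},\Z)$ and, by Remark \ref{2_rmkcontinuitycurve}, $\gamma$ is a $\Z$-valued weakly continuous curve. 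Collecting these facts shows that $\eta$-a.e.\ point $(x,\gamma)$ lies in $\mathcal{F}$, which yields $\eta(\mathcal{F})=1$.
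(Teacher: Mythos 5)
Your overall architecture coincides with the paper's: replace the quantifier over $\overline{I}$ by a countable dense set, check that each membership condition is Borel, and then deduce $\eta(\mathcal{F})=1$ from the concentration property of Proposition \ref{2_mesureeta}. Your treatment of the set $\Z_{1}\times\mathfrak{L}^{2,\infty}(\overline{I},\Z_{1})$ via lower semicontinuity of the $\Z_{1}$- and $\Z$-norms on $\Z_{-1}$ is equivalent to the paper's device of the regularized norms $\varphi_{n}(u)$ built from $(A+1)(1+\frac{A}{n})^{-1}$, whose pointwise limit detects membership in $\mathfrak{L}^{2,\infty}$. However, the step you yourself identify as ``the main technical obstacle'' --- the Borel measurability of $(x,\gamma)\mapsto \gamma(t)-x-\int_{0}^{t}v(s,\gamma(s))\,ds$ --- is precisely where the paper spends most of its effort, and you only gesture at it. The paper's resolution is: reduce to scalar-valued maps $\psi_{2}^{\varphi,t}(u)=\int_{0}^{t}\mathrm{Re}\langle v(s,u(s)),\varphi\rangle_{\Z_{-1}}\,ds$ by pairing against a fixed $\varphi\in\Z_{-1}$; verify via Lemma \ref{2_Controle de v} that $f(s,u)=\mathrm{Re}\langle v(s,u(s)),\varphi\rangle_{\Z_{-1}}$ is continuous and $s$-integrable with the bound \eqref{2_int.bndv}; apply the monotone-class Lemma \ref{2_measurabilityfunction} to get measurability of $\psi_{2}^{\varphi,t}$; then pass from weak to strong measurability of the $\Z_{-1}$-valued integral by Pettis' theorem (separability of $\Z_{-1}$); and finally take the supremum over $t\in\mathbb{Q}\cap\overline{I}$ and use continuity in $t$ to recover the full supremum. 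If you intend to ``invoke the measurability lemmas of Appendix \ref{2_appA}'', this scalar reduction plus Pettis is the argument you need to write out; as it stands your proposal leaves the crux unproved.

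On the other hand, your argument for $\eta(\mathcal{F})=1$ is more detailed than the paper's, which simply asserts that the concentration set of Proposition \ref{2_mesureeta}(i) is contained in $\mathcal{F}$. You correctly observe that this inclusion requires, in addition, that $\eta$-a.e.\ curve lies in $\mathfrak{L}^{2,\infty}(\overline{I},\Z_{1})$ and that the a.e.-in-$t$ differential identity upgrades to the integral identity for every $t$; your derivation of $L^{2}(I,\Z_{1})$ via $\mu_{t}=(e_{t})_{\sharp}\eta$, Tonelli and the second-moment bound, and of $L^{\infty}(I,\Z)$ via the unit-ball support, is a sensible way to fill this in. Be aware, though, that the second-moment bound and the support on $B_{\Z}(0,1)$ come from Proposition \ref{2_mu boule} (i.e.\ from the measures being Wigner measures of the quantum states), and are not literally among the stated hypotheses of Lemma \ref{2_Fmes}; in the paper's application this is harmless, but as written your proof of full measure uses slightly more than the lemma assumes.
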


\begin{proof}

For $u \in \mathscr{C}(\overline{I}, \Z_{- 1})$ and $n \in \N^{*}$, we set:
\[
\varphi_{n}(u) = ( \int_{\overline{I}} ||(A+1)(1+\frac{A}{n})^{- 1})u(t)||_{\Z_{-1}}^{2}dt)^{\frac{1}{2}} + \sup_{t \in \overline{I}} ||(A+1)^{\frac{1}{2}}(1+\frac{A}{n})^{\frac{1}{2}}u(t)||_{\Z_{- 1}}\,.
\]
Using simple estimates one can prove that there exist constants $C_{1,n} > 0$ and $C_{2,n} > 0$ depending on $n$ such that:
\[
 \varphi_{n}(u) \leq C_{1,n} ||u||_{L^\infty(\bar I,\Z_{- 1})} ~,~
 \]
\[
\varphi_{n}(u) \geq C_{2,n} ||u||_{L^\infty(\bar I,\Z_{- 1})}  \,.
\]
So, $ \varphi_{n}(\cdot) $ is  an equivalent norm for $ ||\cdot||_{L^\infty(\bar I,\Z_{- 1})}$. Hence,  $ \varphi_{n}$ are continuous. Moreover, for  any fixed $u$, we have the following convergence:
\begin{eqnarray*}
\varphi_{n}(u) \to \varphi(u) &=&  ||u||_{L^2(\bar I,\Z_{1})} + ||u||_{L^{\infty}(\overline{I},\Z_{0})} ~\quad \text{ if }~ u \in \mathfrak{L}^{2,\infty}(\overline{I},\Z_{1})\,,\\
  \varphi_{n}(u) \to \varphi(u) &= &+ \infty ~\quad \text{ if } ~   u \notin \mathfrak{L}^{2,\infty}(\overline{I},\Z_{1})   \,.
\end{eqnarray*}
And so $\varphi$ is a measurable function on $ \mathscr{C}(\overline{I}, \Z_{- 1})$ as a pointwise limit of continuous functions. Hence, we obtain that $\Z_{1} \times\mathfrak{L}^{2,\infty}(\overline{I},\Z_{1})$ is a Borel  subset of $\mathcal{X}=\Z_{- 1} \times \mathscr{C}(\overline{I}, \Z_{- 1})$.

\medskip
\noindent
Let $\psi$ be the function defined as follows
\[
\begin{array}{lrcl}
 \psi :  &  \Z_{1} \times \mathfrak{L}^{2,\infty}(\overline{I},\Z_{1}) & \longrightarrow & \R \\
            &  (x,u) & \longmapsto     & \sup_{t \in \overline{I}} ||u(t) - x - \int_{0}^{t} v(\tau,u(\tau)) d\tau||_{\Z_{-1}}\,.
\end{array}
 \]
Our aim is to prove that $\psi$ is measurable, so that $\mathcal{F} = \psi^{ - 1}(\lbrace 0 \rbrace)$ is a Borel subset of $\mathcal{X}$. For that consider for $t\in\bar I$ the functions,
\[
\begin{array}{lrcl}
 \psi_1^t :  &  \Z_{1} \times \mathfrak{L}^{2,\infty}(\overline{I},\Z_{1}) & \longrightarrow & \Z_{- 1} \\
            &  (x,u) & \longrightarrow     &  u(t)-x\,,
\end{array}
\]
and
\[
\begin{array}{lrcl}
 \psi_{2}^t :  & \mathfrak{L}^{2,\infty}(\overline{I},\Z_{1}) & \longrightarrow & \Z_{- 1} \\
            &  u & \longmapsto     & \int_{0}^{t} v(s,u(s))ds\,.
\end{array}
\]
Note that the function $\psi_1^t$ is continuous and hence it is measurable. In order to prove that
$\psi_{2}^t$ is also measurable, we consider for $t \in \overline{I}$ and $ \varphi \in \Z_{- 1}$, the functions:
\[
\begin{array}{lrcl}
 \psi_{2}^{\varphi,t} :  & \mathfrak{L}^{2,\infty}(\overline{I},\Z_{1}) & \longrightarrow & \R \\
            &  u & \longmapsto     & \int_{t_{0}}^{t} Re \langle v(s,u(s)), \varphi \rangle_{\Z_{- 1}} ds\,.
\end{array}
\]
We show that $\psi_{2}^{\varphi,t}$ is measurable by applying Lemma \ref{2_measurabilityfunction}. Indeed, we take for $[a,b]$ the re-ordered interval $[0,t]$, for  $(M,d)$ the metric space $( \mathfrak{L}^{2,\infty}(\overline{I},\Z_{1}),||.||_{L^\infty(\bar I,\Z_{-1})})$ and for $f$ the function defined for any $(s,u) \in [0,t] \times  \mathfrak{L}^{2,\infty}(\overline{I},\Z_{1})$ by
\[
f(s,u) = Re \langle v(s,u(s)), \varphi \rangle_{\Z_{- 1}}   \,.
\]
Using Lemma \ref{2_Controle de v}, one shows that $f$ is actually continuous as a composition of the following functions,
\[
\begin{array}{ccccccc}
\R\times \Z_1 &\longrightarrow &  \R\times\Z_1   &\longrightarrow & \R\times \Z_{-1}  &\longrightarrow &
\Z_{-1}\\
(s,u) &\longrightarrow & (s,e^{-isA} u)  &\longrightarrow & (s,\partial_{\bar z}q_0(e^{-isA} u))
&\longrightarrow & v(s,u)\,.
\end{array}
\]
Moreover, using again Lemma \ref{2_Controle de v} one checks for any $u \in  \mathfrak{L}^{2,\infty}(\overline{I},\Z_{1})$,
\begin{eqnarray*}
\int_{0}^{t} |f(s,u)|ds &\leq& ||\varphi||_{\Z_{- 1}} \int_{0}^{t} ||v(s,u(s))||_{\Z_{- 1}}ds
\\
&\leq& C \,||\varphi||_{\Z_{- 1}} \int_{0}^{t} (||u(s)||_{\Z_{1}}^{2}||u(s)||_{\Z}^{2}+1)\,ds  \,,
\end{eqnarray*}
 with the latter integral  bounded by $||u||_{L^{\infty}(\overline{I},\Z)}^{2} \int_{\overline{I}} ||u(s)||_{\Z_{1}}^{2} + \lambda(\overline{I}) $. Hence, we can apply  Lemma \ref{2_measurabilityfunction} which ensures that $\psi_{2}^{\varphi,t}$ is a measurable map. Consequently, the map
\[
\begin{array}{lrcl}
 \psi_2^{t} :  & \mathfrak{L}^{2,\infty}(\overline{I},\Z_{1}) & \longrightarrow & \Z_{-1} \\
            &  u & \longmapsto     & \int_{0}^{t}  v(s,u(s)) ds	

\end{array}
\]
is weakly measurable for every fixed $t \in \overline{I}$. The Pettis theorem leads to the measurability of $ \psi_{2}^{t} $ since $\Z_{- 1}$ is a separable Hilbert space (see for instance \cite{MR1336382}). We may now combine the different results to conclude that for every $t \in \overline{I}$ fixed, the function
\[
\begin{array}{lrcl}
 \psi^{t} :  & \Z_{1} \times \mathfrak{L}^{2,\infty}(\overline{I},\Z_{1}) & \longrightarrow & \R \\
            &  (x,u) & \longmapsto     & ||u(t) - x - \int_{t_{0}}^{t}  v(s,u(s)) ds||_{\Z_{-1}}

\end{array}
\]
is measurable. So that
\[
\begin{array}{lrcl}
 \psi :  & \Z_{1} \times \mathfrak{L}^{2,\infty}(\overline{I},\Z_{1}) & \longrightarrow & \R \\
            &  (x,u) & \longmapsto     & \sup_{t \in \mathbb{Q} \cap \overline{I}} ||u(t) - x - \int_{0}^{t}  v(s,u(s)) ds||_{\Z_{-1}}\,,

\end{array}
\]
is also measurable from $\Z_{1} \times \mathfrak{L}^{2,\infty}(\overline{I},\Z_{1})$ to $\R$ since $\mathbb{Q}$ is a countable set. But, as $u \in \mathscr{C}(\overline{I},\Z_{- 1})$ and the map $t \to \int_{t_{0}}^{t} v(s,u(s)) ds \in \mathscr{C}(\overline{I},\Z_{- 1})$, the supremum runs over the whole time interval $\overline{I}$. Finally, we conclude that $ \mathcal{F}$ is a measurable subset of $\mathcal{X} = \Z_{-1} \times \mathscr{C}(\overline{I},\Z_{- 1})$.

\medskip
 According to the first point in the Proposition \ref{2_mesureeta}, there exists a negligible Borel subset $\mathcal{N}$ of $\mathcal{X}$ such that $ \mathcal{N}^{c} \subset \mathcal{F}$. Since $ \mathcal{F}$ is measurable, one  obtains that  $\eta( \mathcal{F}) = 1$.

\end{proof}

\begin{lm}\label{2_gto}

Assume the same assumptions as in Proposition \ref{2_mesureeta}  and consider $v$ to be the vector field  given by \eqref{2_int.v}. Furthermore, suppose  that  the initial value problem \eqref{2_int.IVP} satisfies the uniqueness property of weak solutions as stated in
Definition \eqref{2_defweaksol}.  Then the set
\[
\mathcal{G} := \lbrace x \in \Z_{1} ~,~ \exists \gamma \in \mathfrak{L}^{2,\infty}(\overline{I},\Z_{1}) ~s.t~ (x,\gamma) \in \mathcal{F} \rbrace  \,.
\]
is a Borel subset of $\Z_{1}$ satisfying $\mu_0(\mathcal{G})=1$.

\end{lm}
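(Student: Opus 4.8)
The plan is to realise $\mathcal{G}$ as the image of the Borel set $\mathcal{F}$ under a fixed continuous map and then to use the weak uniqueness hypothesis to make that map injective, so that the Lusin--Souslin theorem promotes the image (a priori only analytic) to a genuine Borel set.

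First I would identify $\mathcal{G}$ with $e_{0}(\mathcal{F})$. Indeed, for $(x,\gamma)\in\mathcal{F}$ the integral equation defining $\mathcal{F}$ gives $\gamma(0)=x+\int_{0}^{0}v(s,\gamma(s))\,ds=x$, so $e_{0}(x,\gamma)=\gamma(0)=x$; conversely each $x\in\mathcal{G}$ is witnessed by some $(x,\gamma)\in\mathcal{F}$ with $e_{0}(x,\gamma)=x$. Hence $\mathcal{G}=e_{0}(\mathcal{F})$, and on $\mathcal{F}$ the evaluation $e_{0}$ agrees with the continuous first--coordinate projection of $\mathcal{X}$ onto $\Z_{-1}$.

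The decisive step is the injectivity of $e_{0}$ on $\mathcal{F}$. If $(x,\gamma_{1}),(x,\gamma_{2})\in\mathcal{F}$ share the first coordinate $x$, then each $\gamma_{i}$ lies in $\mathfrak{L}^{2,\infty}(\overline{I},\Z_{1})$ and solves the integral equation \eqref{2_inteq} with datum $x$; by the equivalence recorded after Definition \ref{2_defweaksol} together with the bound \eqref{2_int.bndv}, each $\gamma_{i}$ is then a weak solution of \eqref{2_int.IVP} with the same initial condition. The assumed weak uniqueness property forces $\gamma_{1}=\gamma_{2}$, so $e_{0}|_{\mathcal{F}}$ is injective. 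Now $\mathcal{X}=\Z_{-1}\times\mathscr{C}(\overline{I},\Z_{-1})$ is a Polish space, being the product of a separable Hilbert space and a separable Banach space over a compact interval; $\mathcal{F}$ is Borel in $\mathcal{X}$ by Lemma \ref{2_Fmes}; and $e_{0}:\mathcal{X}\to\Z_{-1}$ is continuous, hence Borel, and injective on $\mathcal{F}$. The Lusin--Souslin theorem therefore yields that $\mathcal{G}=e_{0}(\mathcal{F})$ is a Borel subset of $\Z_{-1}$. Since $\mathcal{G}\subset\Z_{1}$ and the inclusion $\Z_{1}\hookrightarrow\Z_{-1}$ is continuous, the preimage of $\mathcal{G}$ under this inclusion, which is $\mathcal{G}$ itself, is Borel in $\Z_{1}$. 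This measurability is the heart of the matter: without uniqueness the projection of $\mathcal{F}$ is only guaranteed to be analytic, and it is precisely injectivity through Lusin--Souslin that restores Borel measurability.

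The mass statement is then immediate from the probabilistic representation. By Proposition \ref{2_mesureeta}(ii), $\mu_{0}=(e_{0})_{\sharp}\eta$, while $e_{0}(x,\gamma)=x\in\mathcal{G}$ for every $(x,\gamma)\in\mathcal{F}$ shows $\mathcal{F}\subset e_{0}^{-1}(\mathcal{G})$. Consequently
\[
\mu_{0}(\mathcal{G})=\eta\big(e_{0}^{-1}(\mathcal{G})\big)\geq\eta(\mathcal{F})=1,
\]
the last equality being Lemma \ref{2_Fmes}, whence $\mu_{0}(\mathcal{G})=1$. I expect the one point deserving care to be the verification that the members of $\mathcal{F}$ genuinely qualify as weak solutions in the precise regularity class for which the uniqueness hypothesis is posed; once injectivity is secured, the measurability of the projection is a clean application of descriptive set theory.
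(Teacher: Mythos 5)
Your proof is correct and follows essentially the same route as the paper: realise $\mathcal{G}$ as the injective image of the Borel set $\mathcal{F}$ under the (continuous) first-coordinate projection, use the weak uniqueness hypothesis to get injectivity, invoke the Lusin--Souslin theorem (the paper cites the same result in the form of Lemma \ref{2_parthasarathy}, taken from Parthasarathy's book), and conclude $\mu_0(\mathcal{G})=1$ from $\mu_0=(e_0)_\sharp\eta$ and $\eta(\mathcal{F})=1$. The caveat you flag at the end --- checking that members of $\mathcal{F}$ lie in the exact regularity class of Definition \ref{2_defweaksol} --- is the same point the paper passes over silently, so no divergence there either.
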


\begin{proof}

We use the Lemma \ref{2_parthasarathy} given in Appendix \ref{2_appA}  with the metric spaces $X_{1} = (\mathcal{X},||.||_{\mathcal{X}})$, $X_{2} = (\Z_{- 1} , ||.||_{\Z_{- 1}})$ which are complete respectively to their distances and separable and we set:

\[ \begin{array}{lrcl}
 \varphi :  & \mathcal{X} & \longrightarrow & \Z_{- 1} \\
            &  (x,u) & \longmapsto     & x.

\end{array} \]

\noindent
We consider $E_{1} = \mathcal{F}$ and $E_{2} = \mathcal{G}$. Clearly, the map $\varphi$ is continuous and the Lemma \ref{2_Fmes} ensures that the restriction $\varphi|_{\mathcal{F}} : \mathcal{F} \to \Z_{- 1}$ is a measurable map. Suppose now that there exist $(x_{1},u_{1}) \in \mathcal{F}$ and $(x_{2},u_{2}) \in \mathcal{F}$ such that $\varphi|_{\mathcal{F}}(x_{1},u_{1}) = \varphi|_{\mathcal{F}}(x_{2},u_{2})$. Then necessarily, we have $x_{1} = x_{2}$. Because otherwise, $u_1$ and $u_2$ would be two distinct  weak solutions of the initial value problem \eqref{2_int.IVP} with the same initial condition  $x_{1}=x_{2}$.  This shows that $\varphi|_{\mathcal{F}}$ is one-to-one and so $\mathcal{G}= \varphi|_{\mathcal{F}}(\mathcal{F})$ is a Borel subset of $X_{2} = \Z_{- 1}$. But, with the definition of  $\mathcal{G}$ we have that $\mathcal{G} \subset \Z_{1}$, so it is also a Borel subset of $\Z_{1}$.
Using Proposition \ref{2_mesureeta} (ii), one concludes 
$$
\mu_0(\mathcal{G})=(e_0)_\sharp\eta(\mathcal{G})=\eta(e_0^{-1}(\mathcal{G}))= \eta(\mathcal{F})=1\,.
$$

\end{proof}

\begin{lm}
\label{2_genflow}

Assume the same assumptions as in Proposition \ref{2_mesureeta}  and consider $v$ to be the vector field  given by \eqref{2_int.v}.
Suppose furthermore   that  the initial value problem \eqref{2_int.IVP} satisfies the uniqueness property of weak solutions according to
Definition \ref{2_defweaksol}.   Then for any initial condition $x_0\in \mathcal{G}$ there exists a unique weak solution $u(\cdot)$ to the initial value problem \eqref{2_int.IVP}. Moreover, for any time $t\in I$ the  maps

\[ \begin{array}{lrcl}
 \phi(t) :  & \mathcal{G} & \longrightarrow & \Z \\
            &  x_0 & \longmapsto     & u(t),

\end{array} \]

\noindent
where $u(\cdot)$ are such unique weak solutions  are well defined and measurable.

\end{lm}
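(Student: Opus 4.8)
The plan is to read off existence and uniqueness directly from the construction of $\mathcal{G}$, and then to obtain the measurability of $\phi(t)$ by exhibiting it as the composition of a Borel inverse with an evaluation map, finally upgrading the target space from $\Z_{-1}$ to $\Z$ by a Pettis-type argument.

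First I would settle the well-posedness. Fix $x_0\in\mathcal{G}$. By the very definition of $\mathcal{G}$ in Lemma \ref{2_gto}, there exists $u\in\mathfrak{L}^{2,\infty}(\overline{I},\Z_{1})$ with $(x_0,u)\in\mathcal{F}$, that is $u(t)=x_0+\int_0^t v(s,u(s))\,ds$ for every $t\in\overline I$. Since $v$ obeys the bound \eqref{2_int.bndv} and $u\in L^2(\overline I,\Z_1)\cap L^\infty(\overline I,\Z)$, the map $s\mapsto\|v(s,u(s))\|_{\Z_{-1}}$ is integrable on $I$, so $u$ solves the integral equation \eqref{2_inteq} and is therefore a weak solution of \eqref{2_int.IVP} with initial datum $x_0$; this yields existence. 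Uniqueness is exactly the standing hypothesis of Definition \ref{2_defweaksol}, so $u$ is \emph{the} unique weak solution and $\phi(t)(x_0):=u(t)$ is unambiguously defined. By Remark \ref{2_rmkcontinuitycurve} every element of $\mathfrak{L}^{2,\infty}(\overline I,\Z_1)$ is $\Z$-valued, hence $u(t)\in\Z$ and $\phi(t)$ indeed maps $\mathcal{G}$ into $\Z$.

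Next I would prove measurability, first as a map into $\Z_{-1}$. Recall from the proof of Lemma \ref{2_gto} that the restriction $\varphi|_{\mathcal{F}}\colon\mathcal{F}\to\Z_{-1}$, $(x,u)\mapsto x$, is an injective Borel map whose image is $\mathcal{G}$, injectivity being precisely the uniqueness property. Since $\mathcal{X}$ and $\Z_{-1}$ are Polish and $\mathcal{F}$ is Borel by Lemma \ref{2_Fmes}, the same Lemma \ref{2_parthasarathy} already used to show that $\mathcal{G}$ is Borel guarantees moreover that the inverse $\Psi:=(\varphi|_{\mathcal{F}})^{-1}\colon\mathcal{G}\to\mathcal{F}$, $x_0\mapsto(x_0,u)$, is Borel measurable. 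Composing with the continuous, hence Borel, evaluation map $e_t\colon(x,\gamma)\mapsto\gamma(t)$ shows that $x_0\mapsto u(t)=e_t(\Psi(x_0))$ is Borel measurable from $\mathcal{G}$ into $\Z_{-1}$.

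Finally I would upgrade the codomain from $\Z_{-1}$ to $\Z$ by the Pettis measurability theorem, which I expect to be the only genuinely delicate point. For $\zeta\in\Z_1$ the linear form $w\mapsto\langle\zeta,w\rangle$ is continuous on $\Z_{-1}$, since $\Z_1=(\Z_{-1})'$, and it extends the inner product of $\Z$; hence $x_0\mapsto\langle\zeta,u(t)\rangle_{\Z}$ is Borel. Approximating an arbitrary $\zeta\in\Z$ by elements of the dense subspace $\Z_1$ and passing to the pointwise limit, the map $x_0\mapsto\langle\zeta,u(t)\rangle_{\Z}$ is then Borel for every $\zeta\in\Z$; as $\Z$ is separable, Pettis's theorem gives that $\phi(t)\colon\mathcal{G}\to\Z$ is Borel measurable. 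The main obstacle is exactly this transfer of measurability across the Gelfand triple $\Z_1\subset\Z\subset\Z_{-1}$: the flow is produced only as a Borel map into the weak space $\Z_{-1}$, and recovering strong measurability into $\Z$ requires exploiting both the separability of $\Z$ and the continuity of the pairings $\langle\zeta,\cdot\rangle$ for $\zeta\in\Z_1$.
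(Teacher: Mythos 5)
Your proof is correct and follows essentially the same route as the paper: existence and uniqueness are read off from the definition of $\mathcal{G}$ and the standing uniqueness hypothesis, and measurability is obtained by inverting $\varphi|_{\mathcal{F}}$ via Lemma \ref{2_parthasarathy} and composing with the evaluation map $e_t$. The only (cosmetic) difference is the final codomain upgrade: where you run a Pettis-type weak-measurability argument using the pairings $\langle\zeta,\cdot\rangle$ for $\zeta\in\Z_1$ and the separability of $\Z$, the paper simply notes that $\phi(t)(\mathcal{G})\subset\Z$ by Remark \ref{2_rmkcontinuitycurve} and that $\Z$ is a Borel subset of $\Z_{-1}$ (whose norm-Borel structure coincides with the trace Borel structure, by Kuratowski's theorem for continuously embedded Polish spaces), so both arguments are valid and interchangeable.
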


\begin{proof}

The function $\varphi|_{\mathcal{F}}$ introduced in the proof of Lemma \ref{2_gto} is one-ton-one. So, it is a bijection  from $\mathcal{F}$ to $\mathcal{G}$ and its inverse map

\[ \begin{array}{lrcl}
 \varphi^{- 1} :  & \mathcal{G} & \longrightarrow & \mathcal{F} \\
                  &  x                  & \longmapsto     & (x,u),

\end{array} \]

\noindent
is well-defined and measurable thanks again to Lemma \ref{2_parthasarathy}. And so, the following composition

\[
\begin{array}{cccccc}
 \phi(t): & \mathcal{G}  & \overset{\varphi^{-1}}{\longrightarrow} & \mathcal{F}
  &\overset{e_t}{\longrightarrow} & \Z_{-1}\\
                &       x    & \longrightarrow  & (x,u) & \longrightarrow  & u(t)
\end{array}
\]

\noindent
is well-defined  and measurable since $\varphi^{- 1}$  and $e_{t}$ are measurable maps. Note that $\phi(t)(\mathcal{G}) \subset \Z$ due to the Remark \ref{2_rmkcontinuitycurve}  and remember that $\Z$ is a Borel subset of $\Z_{- 1}$, then the mapping $\phi(t) : \mathcal{G}\to \Z$ is  measurable.
\end{proof}

\medskip
We suppose for the sequel that the assumptions \eqref{2_A1} and \eqref{2_A2} are satisfied.
\begin{cor}

Suppose that the initial value problem \eqref{2_int.IVP} with the vector field $v$ in \eqref{2_int.v} satisfies the uniqueness property of Definition \eqref{2_defweaksol}. Let $ t \in \R \to \mu_{t} \in \mathfrak{P}(\Z_1)$ be a weakly narrowly continuous curve in $\mathfrak{P}(\Z_{- 1})$ satisfying the bound \eqref{2_condliou} and the Liouville equation \eqref{2_Liou}, with the vector field $v$ given in \eqref{2_int.v}, on any open bounded interval  $I$ containing the origin. Then there exist a Borel set 
$\mathcal{G}_0\subset\Z_1$ satisfying $\mu_0(\mathcal{G}_0)=1$ and such that for any initial condition $x_0\in \mathcal{G}_0$ there exists a unique global weak solution $u(\cdot)$ to the initial value problem \eqref{2_int.IVP}. Moreover, for any time $t\in\R$ the  maps

\[ \begin{array}{lrcl}
 \phi(t) :  & \mathcal{G}_0 & \longrightarrow & \Z \\
            &  x_0 & \longmapsto     & u(t),

\end{array} \]

\noindent
 are well defined and measurable.
\end{cor}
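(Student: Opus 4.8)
The plan is to deduce this global statement from the local results of Lemmas~\ref{2_Fmes}, \ref{2_gto} and \ref{2_genflow} by exhausting $\R$ with bounded intervals. For each $n\in\N^{*}$ set $I_n=(-n,n)$, an open bounded interval containing the origin. By hypothesis the curve $t\in\R\to\mu_t$ lies in $\mathfrak{P}(\Z_1)$, is weakly narrowly continuous in $\mathfrak{P}(\Z_{-1})$, satisfies the bound \eqref{2_condliou} and solves the Liouville equation \eqref{2_Liou} on $I_n$, and the assumptions \eqref{2_A1}--\eqref{2_A2} together with the uniqueness property of Definition~\ref{2_defweaksol} hold. Hence Proposition~\ref{2_mesureeta} and Lemmas~\ref{2_Fmes}--\ref{2_genflow} apply with $I=I_n$ and yield a Borel set $\mathcal{G}_n\subset\Z_1$ with $\mu_0(\mathcal{G}_n)=1$ such that every $x_0\in\mathcal{G}_n$ is the initial datum of a unique weak solution $u_n(\cdot)$ of \eqref{2_int.IVP} on $\overline{I_n}$, together with measurable flow maps $\phi_n(t):\mathcal{G}_n\to\Z$, $x_0\mapsto u_n(t)$, for every $t\in\overline{I_n}$.

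First I would define the global full-measure set $\mathcal{G}_0:=\bigcap_{n\in\N^{*}}\mathcal{G}_n$, which is a Borel subset of $\Z_1$; since each complement $\mathcal{G}_n^{c}$ is $\mu_0$-negligible, so is the countable union $\bigcup_n\mathcal{G}_n^{c}$, whence $\mu_0(\mathcal{G}_0)=1$. Next I would glue the local solutions. Fix $x_0\in\mathcal{G}_0$ and let $m<n$. The restriction $u_n|_{\overline{I_m}}$ still belongs to $L^{2}(I_m,\Z_1)\cap L^{\infty}(I_m,\Z)\cap W^{1,\infty}(I_m,\Z_{-1})$ and satisfies the integral equation \eqref{2_inteq} on $\overline{I_m}$ with the same initial datum $x_0$, so it is a weak solution on $\overline{I_m}$. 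The weak-uniqueness property of Definition~\ref{2_defweaksol}, applied on $I_m$, then forces $u_n|_{\overline{I_m}}=u_m$. The family $(u_n)_n$ is therefore consistent and unambiguously defines $u:\R\to\Z_{-1}$ by $u(t):=u_n(t)$ for any $n>|t|$. This $u$ is a global weak solution, in the sense that its restriction to every bounded interval containing the origin is a weak solution of \eqref{2_int.IVP}, and it is unique: any competing global solution restricts to a weak solution on each $\overline{I_n}$ and hence coincides with $u_n$ there.

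Finally I would set $\phi(t):\mathcal{G}_0\to\Z$, $x_0\mapsto u(t)$. For a fixed $t\in\R$, choosing any integer $n>|t|$ gives $\phi(t)=\phi_n(t)|_{\mathcal{G}_0}$ on $\mathcal{G}_0\subset\mathcal{G}_n$, because $u(t)=u_n(t)$ there. Since $\phi_n(t):\mathcal{G}_n\to\Z$ is measurable by Lemma~\ref{2_genflow} and $\mathcal{G}_0$ is a Borel subset of $\mathcal{G}_n$, the restriction $\phi(t)$ is measurable as well, which gives the claim.

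Every step is routine once the finite-interval machinery of Section~\ref{2_Probrep.sec} is in place; the only point requiring genuine care, and the conceptual heart of the gluing, is the consistency $u_n|_{\overline{I_m}}=u_m$. This rests on two facts: that restricting a weak solution to a smaller interval containing the origin again produces a weak solution, and that the weak-uniqueness hypothesis then identifies the two. Without uniqueness the distinct local solutions could disagree on overlaps and no coherent global curve, hence no well-defined global flow, could be extracted; so the uniqueness assumption is exactly what is needed to pass from the local flows $\phi_n(t)$ to the global flow $\phi(t)$.
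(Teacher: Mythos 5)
Your proposal is correct and follows essentially the same route as the paper: the paper's (much terser) proof likewise takes the increasing exhaustion $I_n=(-n,n)$, sets $\mathcal{G}_0=\bigcap_n\mathcal{G}_{I_n}$, obtains $\mu_0(\mathcal{G}_0)=1$ from the countable intersection of full-measure sets, and invokes Lemma \ref{2_genflow} for the measurability of $\phi(t)$. Your explicit gluing argument (restriction of a weak solution is a weak solution, then uniqueness on $I_m$ forces consistency) is left implicit in the paper but is exactly the intended justification.
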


\begin{proof}
Note that the set $\mathcal{G}$ defined in Lemma \ref{2_gto} depends on the interval $I$. So, taking an increasing  sequence of intervals $I_n=(-n,n)$ and defining $\mathcal{G}_0=\bigcap_{n\in \N} \mathcal{G}_{I_n}$, one
concludes that $\mathcal{G}_0$ is a Borel set satisfying  
$$
\mu_0(\mathcal{G}_0)=\mu_0(\bigcap_{n\in \N} \mathcal{G}_{I_n})=1\,.
$$
The measurability of the maps $\phi(t)$ is a consequence of Lemma \ref{2_genflow}.
\end{proof}

The above arguments  of measure theory allows to prove  the important property  of uniqueness  for the Liouville equations.

\begin{prop}\label{2_unicityLiouvile}

Consider the vector field $v$ defined in \eqref{2_int.v} and suppose that the related initial value problem \eqref{2_int.IVP} satisfies the uniqueness  property  of weak solutions  according to Definition \ref{2_defweaksol}. If  $t\in \R \to \mu_{t} \in \mathfrak{P}(\Z)$ and $t\in \R \to \nu_{t} \in \mathfrak{P}(\Z)$ are two weakly narrowly continuous curves on $\mathfrak{P}(\Z_{-1})$ satisfying
\eqref{2_condliou} and  the Liouville equation  \eqref{2_Liou} with $\mu_0=\nu_0$, then $\mu_t=\nu_t$ for all times $t\in \R$.

\end{prop}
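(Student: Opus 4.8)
The plan is to show that each solution curve is the push-forward of its common initial datum $\mu_0=\nu_0$ under a flow map that depends only on the vector field $v$, so that equal initial measures force equal curves. Since the hypotheses are stated on all of $\R$ while Proposition \ref{2_mesureeta} operates on bounded intervals, I would fix an arbitrary open bounded interval $I\ni 0$, prove $\mu_t=\nu_t$ for $t\in I$, and then let $I$ exhaust $\R$ (e.g.\ $I=(-n,n)$, $n\in\N$) to obtain the statement for all times.

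First I would apply Proposition \ref{2_mesureeta} separately to $\{\mu_t\}$ and $\{\nu_t\}$, which is legitimate since both curves are weakly narrowly continuous, concentrated a.e.\ on $\Z_1$, satisfy \eqref{2_condliou}, and solve the Liouville equation \eqref{2_Liou}. This yields two Borel probability measures $\eta^\mu,\eta^\nu$ on $\mathcal{X}=\Z_{-1}\times\mathscr{C}(\overline{I},\Z_{-1})$, each concentrated on the solution set of \eqref{2_int.IVP}, with $\mu_t=(e_t)_\sharp\eta^\mu$ and $\nu_t=(e_t)_\sharp\eta^\nu$ for every $t\in I$. By Lemma \ref{2_Fmes}, both measures assign full mass to the Borel set $\mathcal{F}$.

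The key step is to identify $\eta^\mu$ (and likewise $\eta^\nu$) in terms of $\mu_0$. On $\mathcal{F}$ one has $e_0(x,\gamma)=\gamma(0)=x=\varphi(x,\gamma)$, where $\varphi:(x,\gamma)\mapsto x$ is the projection of Lemma \ref{2_gto}; hence $(\varphi)_\sharp\eta^\mu=(e_0)_\sharp\eta^\mu=\mu_0$. By Lemmas \ref{2_gto} and \ref{2_genflow}, the weak uniqueness property of Definition \ref{2_defweaksol} guarantees that $\varphi|_{\mathcal{F}}:\mathcal{F}\to\mathcal{G}$ is a bijection with measurable inverse $\varphi^{-1}:x\mapsto(x,u_x)$, where $u_x$ is the unique weak solution issued from $x$. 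Since $\eta^\mu$ is carried by $\mathcal{F}$ and $\varphi^{-1}\circ\varphi=\mathrm{id}_{\mathcal{F}}$, pushing the relation $\mu_0=(\varphi)_\sharp\eta^\mu$ forward by $\varphi^{-1}$ gives $\eta^\mu=(\varphi^{-1})_\sharp\mu_0$, and the identical computation gives $\eta^\nu=(\varphi^{-1})_\sharp\nu_0$. Because the objects $\mathcal{F}$, $\varphi^{-1}$ and $e_t$ depend only on $v$ and not on the chosen curve, and because $\mu_0=\nu_0$, I then conclude $\eta^\mu=\eta^\nu$, whence $\mu_t=(e_t)_\sharp\eta^\mu=(e_t)_\sharp\eta^\nu=\nu_t$ for all $t\in I$; equivalently $\mu_t=(\phi(t))_\sharp\mu_0=(\phi(t))_\sharp\nu_0=\nu_t$ with $\phi(t)=e_t\circ\varphi^{-1}$ the flow of Lemma \ref{2_genflow}.

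The substance of the argument lies entirely in the preceding lemmas, so the only genuine obstacle has already been overcome there: the injectivity of $\varphi|_{\mathcal{F}}$, which is exactly where the weak uniqueness hypothesis enters, together with the measurability of $\mathcal{F}$, $\mathcal{G}$ and $\varphi^{-1}$. Granting these, the proposition reduces to the bookkeeping of push-forwards above; the single point demanding care is to verify that $\eta^\mu$ is truly recovered as $(\varphi^{-1})_\sharp\mu_0$, i.e.\ that no mass of $\eta^\mu$ escapes the graph of the flow — and this escape is precisely what the uniqueness of weak solutions forbids.
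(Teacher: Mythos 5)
Your proposal is correct and follows essentially the same route as the paper: both proofs invoke the probabilistic representation of Proposition \ref{2_mesureeta}, the concentration $\eta(\mathcal{F})=1$ from Lemma \ref{2_Fmes}, and the measurable flow $\phi(t)=e_t\circ\varphi^{-1}$ of Lemma \ref{2_genflow} to conclude $\mu_t=\phi(t)_\sharp\mu_0=\phi(t)_\sharp\nu_0=\nu_t$. The only cosmetic difference is that you identify the path-space measure itself as $\eta^\mu=(\varphi^{-1})_\sharp\mu_0$ before pushing forward by $e_t$, whereas the paper computes $\int f\,d\mu_t=\int_{\mathcal{G}} f\circ\phi(t)\,d\mu_0$ directly; your explicit exhaustion of $\R$ by bounded intervals is a point the paper handles in the preceding corollary.
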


\begin{proof}
According to the Remark \ref{2_remv}, the vector field $v$ and the measures $\mu_t,\nu_t$ satisfy the hypothesis of Proposition \ref{2_mesureeta}. Hence, there exist  two probability measures $\eta_{\mu}$ and $\eta_{\nu}$ on the space $\mathcal{X}$ satisfying respectively (i)-(ii) of Proposition \ref{2_mesureeta}.  Then for any bounded Borel function $f : \Z_{-1} \to \R$, we have:

\[ \int_{\Z_{-1}} f(x) d\mu_{t}(x) = \int_{\mathcal{X}} f(u(t)) d\eta_{\mu}(x,u) = \int_{\mathcal{F}} f(u(t)) d\eta_{\mu}(x,u)   \]

\[ = \int_{\mathcal{F}} f(\phi(t)(x)) d\eta_{\mu}(x,u) = \int_{\mathcal{G}} f \circ \phi(t)(x) d\mu_{0}(x)\,. \]

\noindent
The first and last equality arise from the  point (ii) in Proposition \ref{2_mesureeta} since  $(e_{t}) \sharp \eta_{\mu} = \mu_{t}$ and
$(e_{0}) \sharp \eta_{\mu} = \mu_{0}$. The second equality arises from the concentration property $\eta_{\mu}(\mathcal{F})=1$.
The third one arises from the existence of the measureable flow  $\phi(t)$ constructed in Lemma \ref{2_genflow}. Finally, we have that for any $t \in I$,

\[ \mu_{t} = \phi(t) \sharp \mu_{0} =  \phi(t) \sharp \nu_{0} = \nu_{t} ~,~ \]

\noindent
since the same arguments above hold true for the measures $\nu_{t}$. Hence, the two measures $\mu_t$ and $\nu_t$ are equal for all times and the Liouville equation \eqref{2_Liou} satisfies the uniqueness property.

\end{proof}

\begin{remark}

The maps $t\mapsto\phi(t)$ may be seen as a "generalized measurable flow" for the initial value problem \eqref{2_int.IVP}.
Since for each $x\in\mathcal{G}\subset \Z_1$ the function $t\mapsto \phi(t)(x)$  is the unique weak solution of the initial value problem \eqref{2_int.IVP} satisfying the initial condition  $ \phi(0)(x)=x$.

\end{remark}

\bigskip
\noindent
\textbf{Proof of the main Theorem \ref{2_main.thm}:}
We can now prove our main theorem.

\begin{proof}
Let $(\rho_{N})_{N \in \mathbb{N}}$ be a sequence of normal states on $\bigvee^{N} \Z$ as in Theorem \ref{2_main.thm}.
Using Proposition \ref{2_rho charac}, one concludes that for any
subsequence of $(\rho_{N})_{N \in \mathbb{N}}$  one can find an extraction $\psi$ and a family of Borel probability measures $\{\widetilde\mu_t\}_{t\in \R}$ on $\Z$ satisfying the  characteristic equation \eqref{2_eqchar}  and  such that for all $t\in\R$,

\[ \mathcal{M}(\widetilde{\rho}_{\psi(N)}(t) ~,~ N \in \mathbb{N} ) = \lbrace \widetilde\mu_{t} \rbrace ~.~ \]

\noindent
We check that the curve of measures $\lbrace \widetilde\mu_{t} \rbrace_{t \in \R}$ satisfies the assumption of Proposition \ref{2_Liouville and characteristic equation}. In fact, thanks to Lemma \ref{2_Controle de v} the vector field $v$ is continuous, bounded on bounded sets and satisfies the estimate \eqref{2_majorv}, i.e.:

\[ \exists C >0 ~,~ \forall t \in \R ~,~ \forall z \in \Z_{1} ~,~ ||v(t,z)||_{\Z_{- 1}}  \leq C (||z||_{\Z_{1}}^{2}.||z||_{\Z}^{2}+1)\,.  \]

\noindent
According to Proposition \ref{2_mu boule}, the measures $\widetilde\mu_{t}$ are supported on $\Z_{1}$ and satisfy for any open bounded time interval $I$ the estimate,

\begin{eqnarray*}
 \int_{I} \int_{\Z_{1}} ||v(t,x)||_{\Z_{- 1}} d\mu_{t}(x) dt &\leq& C \int_{I} \int_{\Z_{1}} (||x||_{\Z_{1}}^{2}||x||_{\Z}^{2}+1)  d\mu_{t}(x) dt
 \\
 & \leq&  C  \int_{I} \int_{\Z_{1}}  ( ||x||_{\Z_{1}}^{2}+1)  d\mu_{t}(x) dt  < + \infty ~,~
 \end{eqnarray*}

 \noindent
since $\lbrace\widetilde\mu_{t} \rbrace_{t \in \R}$  are also  supported  on the unit ball of $\Z$. Adding to this the fact that the set of measures  $\lbrace\widetilde \mu_{t} \rbrace_{t \in \R}$ is weakly narrowly continuous, then one can apply Lemma  \ref{2_equiv charac} and
Proposition  \ref{2_Liouville and characteristic equation} and  conclude that $\lbrace\widetilde\mu_{t} \rbrace_{t \in \R}$ verify actually the Liouville equation \eqref{2_Liou} with the vector field $v$ in \eqref{2_int.v}. Applying the uniqueness property  in Proposition \ref{2_unicityLiouvile} yields that $\lbrace\widetilde \mu_{t} \rbrace_{t \in I}$ is the unique solution of the Liouville  equation \eqref{2_Liou} and
\begin{equation}
\label{2_pushfor}
\widetilde \mu_{t}=\phi(t)_{\sharp} \mu_0\,,
\end{equation}
where $\phi(t)$ is the generalized flow given by Lemma \ref{2_genflow}.

\medskip
To finish the proof note that the above argument leading to \eqref{2_pushfor},   shows actually that  for any time $t\in\R$,
\begin{equation}
\label{2_uniqwig}
\mathcal{M}(\widetilde\rho_{N}(t),~N \in \mathbb{N})=\{\phi(t)_{\sharp} \mu_0\}\,.
\end{equation}
Indeed, for a given time $t\in\R$ if $\nu_t$ is a Wigner measure of $\{\widetilde\rho_{N}(t)\}_{N\in\N}$ then there exists a subsequence which depends in the time $t$  of $\{\widetilde\rho_{N_k}(t)\}_{k\in\N}$ such  that $\mathcal{M}(\widetilde\rho_{N_k}(t),~k\in \mathbb{N})=\{\nu_t\}$. Using the above argument, one deduces the existence of an extraction $\psi$ such that
$$
\mathcal{M}(\widetilde\rho_{\psi(N)}(t),~N\in \mathbb{N})=\{\phi(t)_{\sharp} \mu_0\}\,.
$$
Hence, $\nu_t=\phi(t)_{\sharp} \mu_0$ and this  proves   \eqref{2_uniqwig}. Finally, using the simple relation between $\rho_{N}(t)$ and $\widetilde\rho_{N}(t)$, one deduces all the claimed statements of Theorem \ref{2_main.thm}.

\end{proof}

\vspace{1in}

\begin{center}
\textbf{Acknowledgments}
\end{center}

I would like to thank Zied Ammari for his continued support during the redaction of this paper and for the confidence that he placed in me. The precision, pithiness, and clarity of his reasonings are wonderful qualities which amaze me as well as his abstraction and generalization capacities.

\begin{appendix}
\begin{center}
{\bf Appendix}
\end{center}
\section{Measurability arguments}
\label{2_appA}

We give some useful lemmas about measurability and Borel properties used previously. The Lemma \ref{2_measurabilityfunction} is an adaptation of the Lemma C.2 given in \cite{Ammari:2018aa} and the second one can be found in \cite[Theorem 3.9]{MR0226684},  but we recall it here for the reader's convenience.

\begin{lm}\label{2_measurabilityfunction}

Let $(M,d)$ be a metric space, let $(a,b) \in \R^{2}$, $a<b$. Then, for any measurable function $f:[a,b] \times M \to \R$ such that $\forall u \in M ~,~ f(.,u) \in L^{1}([a,b])$, the mapping $f_{int}$ given by,

\[
\begin{array}{lrcl}
 f_{int} :  &  M & \longrightarrow & \R \\
            &  x & \longmapsto     & \int_{a}^{b}f(s,x)ds

\end{array}
\]

\noindent
is measurable.

\end{lm}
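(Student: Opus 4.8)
The plan is to prove the measurability of the partial integral $f_{int}$ by the functional form of the monotone class theorem, first handling bounded integrands and then passing to the general integrable case by truncation. Throughout I read ``measurable'' as measurability with respect to the product $\sigma$-algebra $\mathscr{B}([a,b]) \otimes \mathscr{B}(M)$; this is the natural setting and the one the argument uses. In every application $M$ is separable (it is a subset of $\mathscr{C}(\overline{I},\Z_{-1})$, which is separable because $\overline{I}$ is compact and $\Z_{-1}$ is a separable Banach space), so this coincides with the Borel $\sigma$-algebra of the product topology, and in particular the continuous integrand appearing in Lemma \ref{2_Fmes} is covered. Note also that $f_{int}(x)=\int_a^b f(s,x)\,ds$ is well defined and finite for each $x\in M$, precisely because $f(\cdot,u)\in L^1([a,b])$ for every $u$.

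First I would establish the bounded case. Let $\mathcal{H}$ denote the collection of bounded product-measurable functions $g:[a,b]\times M\to\R$ for which the map $x\mapsto \int_a^b g(s,x)\,ds$ is Borel on $M$. This $\mathcal{H}$ is clearly a vector space containing the constants, since $\int_a^b c\,ds=c(b-a)$. It is stable under uniformly bounded monotone limits: if $g_n\in\mathcal{H}$ is increasing with bounded pointwise limit $g$, then dominated convergence on the finite interval $[a,b]$ gives $\int_a^b g_n(s,x)\,ds\to\int_a^b g(s,x)\,ds$ for every $x$, so the limit map is Borel and $g\in\mathcal{H}$. Finally $\mathcal{H}$ contains the indicators of the measurable rectangles $E\times F$ with $E\in\mathscr{B}([a,b])$ and $F\in\mathscr{B}(M)$, because $\int_a^b \mathbf{1}_{E\times F}(s,x)\,ds=\lambda(E)\,\mathbf{1}_F(x)$ with $\lambda$ the Lebesgue measure, which is Borel in $x$. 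Since these rectangles form a $\pi$-system generating $\mathscr{B}([a,b])\otimes\mathscr{B}(M)$, the functional monotone class theorem yields that $\mathcal{H}$ contains every bounded product-measurable function.

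To reach the general statement I would truncate. Writing $f=f^+-f^-$, where both parts are nonnegative, product-measurable, and integrable in $s$ for each fixed $u$, it suffices to treat $f\geq 0$. Set $f_n:=\min(f,n)$, which is bounded and product-measurable, so $x\mapsto \int_a^b f_n(s,x)\,ds$ is Borel by the previous step. Since $f_n\uparrow f$ pointwise, the monotone convergence theorem gives $\int_a^b f_n(s,x)\,ds\uparrow \int_a^b f(s,x)\,ds=f_{int}(x)$ for each fixed $x$, the limit being finite by the $L^1$ hypothesis. Hence $f_{int}$ is a pointwise limit of Borel functions, and therefore Borel; applying this to $f^+$ and $f^-$ and subtracting gives the measurability of $f_{int}$ for general $f$.

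The only genuine subtlety, and the step I would flag as the main point, is the measure-theoretic bookkeeping of the bounded case: checking that $\mathcal{H}$ verifies all the hypotheses of the functional monotone class theorem (in particular its closure under bounded monotone limits, which rests on dominated convergence over the finite interval $[a,b]$) and that the measurable rectangles indeed generate the product $\sigma$-algebra. Everything else, including the truncation and the decomposition into positive and negative parts, is routine.
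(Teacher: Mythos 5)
Your proof is correct and follows essentially the same route as the paper's: a functional monotone class argument for bounded integrands, followed by truncation and monotone/dominated convergence to reach the general $L^1$ case. The only difference is in the choice of generating class --- you use measurable rectangles $E\times F$ (for which the base case $\int_a^b \mathbf{1}_{E\times F}(s,x)\,ds=\lambda(E)\mathbf{1}_F(x)$ is immediate), whereas the paper takes the $\pi$-system of closed subsets of $[a,b]\times M$ and invokes Fubini--Tonelli for their indicators; your variant is if anything cleaner, and your remark that product-measurability coincides with Borel measurability of the product topology for the separable $M$ occurring in the applications correctly reconciles the two readings of the hypothesis.
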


\begin{proof}

Consider the set $\mathcal{F}$:

\[ \mathcal{F}= \lbrace f:[a,b] \times M \to \R ~ measurable ~, ~ \forall u \in M ~,~ f(.,u) \in L^{1}([a,b]) ~ and ~ f_{int} ~ is ~ measurable \rbrace\,. \]

\noindent
We want to prove that $\mathcal{F}$ contians all the  measurable functions such that for all  $ u \in M ~,~ f(.,u) \in L^{1}([a,b]) $. For that we will use the monotone class theorem for functions, see for instance \cite{MR2722836}, Theorem 6.1.2, p.~276.

Let $\mathcal{A}$ be the set of all closed subsets of $[a,b] \times M$, i.e $\mathcal{A} = \lbrace F \subset [a,b] \times M ~,~ F ~ closed \rbrace$. We need $\mathcal{A}$ to be a $\pi$-system to use the monotone class theorem. Recalling that a $\pi$-system is a set stable with respect to intersection, we clearly have that $\mathcal{A}$ is a $\pi$-system. We now check the arguments of the monotone class theorem:

\begin{itemize}

\item $\mathcal{F}$ is stable with respect to addition and scalar multiplication since $(f+\lambda g)_{int} = f_{int} + \lambda g_{int}$ due to the integral linearity.

\item Let $(f_{n})_{n \in \N} \in \mathcal{F}^{\N}$ be a sequence of non-negative functions in $\mathcal{F}$ that increase to a bounded function $f:[a,b] \times M \to \R$. We prove that $f \in \mathcal{F}$.

As the sequence of function is increasing and converge pointwise to the bounded function $f$, we obtain that:

\[ \exists C > 0 ~,~ \forall n \in \N ~,~ \forall (s,u) \in [a,b] \times M ~ |f_{n}(s,u)| \leq f(s,u) \leq C \]

So, for all $u \in M$, $f(.,u) \in L^{1}([a,b])$, the function $f$ is measurable as a limit of measurable functions and the dominated convergence theorem ensure that:

\[ \forall u \in M ~,~ \lim\limits_{n \to + \infty} f_{n,int}(u) = f_{int}(u) \]

So $f_{int}$ is measurable as a limit of measurable functions.

\item One now have to check that if $A \in \mathcal{A}$, then $ \mathbf{1}_{A} \in \mathcal{F}$. Let $A \in \mathcal{A}$, then $ \mathbf{1}_{A}$ is measuable since $A$ is a borelian 	and the Fubini-Tonelli theorem implies that the application $u \in M \to \int_{a}^{b}  \mathbf{1}_{A}(s,u) ds$ is measurable. Moreover, as $[a,b]$ is bounded and $ \mathbf{1}_{A}$ too, we have from the measurability of $ \mathbf{1}_{A}$ that $ \forall u \in M ~,~ \mathbf{1}_{A}(.,u) \in L^{1}([a,b])$ and then : $ \mathbf{1}_{A} \in \mathcal{F}$.

Finally, all the hypothesis of the monotone class theorem for functions are satisfied so $\mathcal{F}$ contains all bounded functions that are measurable with respect to the sigma algebra $\sigma(\mathcal{A})= \mathcal{B}([a,b] \times M)$.

\end{itemize}

Let $f$ be a measurable function $f:[a,b] \times M \to \R$ such that $\forall u \in M ~,~ f(.,u) \in L^{1}([a,b])$. Then, setting for $n \in \N$ the function:

\[ \forall (s,x) \in [a,b] \times M ~,~ f^{n}(s,x) = f(s,x) 1_{ \lbrace |f(s,x)| \leq n \rbrace } (s,x) \]

\noindent
We have that for every $n \in \N$ the function $f_{n}$ is measurable and  bounded. With what precedes, we obtain that :

\[ \forall n \in \N ~,~ f_{n} \in \mathcal{F} ~.~ \]

And so $f_{n,int}$ is measurable for all $n \in \N$. Applying the dominated convergence theorem, since for all $x \in M$ $|f_{n}(.,x)| \leq |f(.,x)| \in L^{1}([a,b])$, we obtain that :

\[ \lim\limits_{n \to + \infty} f_{n,int}(x) = \int_{a}^{b} \lim\limits_{n \to + \infty} f_{n}(s,x) ds = \int_{a}^{b} f(s,x)ds = f_{int}(x) \]

\noindent
and $f_{int}$ is a measurable map as a limit of measurable functions.

\end{proof}

The following general result of measure theory is non trivial and it is very useful in the proof of uniqueness of solutions for the Liouville equation of Section \ref{2_Probrep.sec}. A proof is given in \cite[Theorem 3.9]{MR0226684}.

\begin{lm}
\label{2_parthasarathy}

Let $X_{1}$, $X_{2}$ be two complete separable metric spaces and $E_{1} \subset X_{1}$, $E_{2} \subset X_{2}$ two sets, $E_{1}$ being a Borel set. Let $\varphi$ be a measurable one-to-one map of $E_{1}$ into $X_{2}$ such that $\varphi(E_{1})=E_{2}$.
Then $E_{2}$ is a Borel set of $X_{2}$.

\end{lm}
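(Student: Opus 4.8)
The statement is an instance of the classical Lusin--Souslin theorem, and the plan is to prove it in three stages: a reduction to a \emph{continuous} injection defined on a Polish space, the construction of a Souslin scheme adapted to $\varphi$, and an application of the Lusin separation theorem for analytic sets. Throughout, $E_1\subseteq X_1$ is Borel, $X_1,X_2$ are Polish, and $\varphi:E_1\to X_2$ is a one-to-one Borel map with $\varphi(E_1)=E_2$. First I would upgrade the topology of $X_1$. Since its Borel $\sigma$-algebra is countably generated, and the set $E_1$ together with the sets $\varphi^{-1}(V)$, for $V$ ranging over a countable base of $X_2$, are all Borel, one can refine the Polish topology of $X_1$ into a finer Polish topology with the \emph{same} Borel sets in which these countably many Borel sets become clopen. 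In this refined topology $E_1$ is clopen, hence itself Polish, and $\varphi$ becomes continuous, since preimages of basic open sets of $X_2$ are now open. Thus, without altering $E_2$, I may assume that $E_1$ is a Polish space and $\varphi:E_1\to X_2$ is a continuous injection.

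Fixing a complete compatible metric on $E_1$, I would next build a Lusin scheme $(F_s)_{s\in\N^{<\N}}$ of Borel subsets of $E_1$ with $F_\emptyset=E_1$, $F_s=\bigsqcup_{n\in\N}F_{s\frown n}$, $\overline{F_{s\frown n}}\subseteq F_s$, and $\mathrm{diam}(F_s)\le 2^{-|s|}$. Because $\varphi$ is injective and the sets $F_{s\frown n}$ ($n\in\N$) are pairwise disjoint, their images $\varphi(F_{s\frown n})$ form a countable family of pairwise disjoint \emph{analytic} subsets of $X_2$. The Lusin separation theorem, in its form for countably many pairwise disjoint analytic sets, then yields pairwise disjoint Borel sets separating them. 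Iterating this node by node along the tree, and intersecting at each stage with a ball of radius $2^{-|s|}$ about a point of $\varphi(F_s)$, I would produce Borel sets $B_s\subseteq X_2$ satisfying $\varphi(F_s)\subseteq B_s$, with the $B_{s\frown n}\subseteq B_s$ pairwise disjoint in $n$, and with $\mathrm{diam}(B_s)\le 2^{-|s|}$.

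The final step is to establish the identity
\[ E_2=\varphi(E_1)=\bigcap_{k\in\N}\ \bigcup_{|s|=k}B_s\,, \]
whose right-hand side is manifestly Borel. The inclusion $\subseteq$ is immediate from $\varphi(F_s)\subseteq B_s$. For the reverse inclusion, a point $y$ in the intersection selects at each level $k$ a \emph{unique} index $s$ of length $k$ with $y\in B_s$ (uniqueness by disjointness), and these nest into a single branch $\alpha\in\N^\N$; completeness of $E_1$ together with $\mathrm{diam}(F_{\alpha|k})\to0$ forces $\bigcap_k\overline{F_{\alpha|k}}$ to be a single point $x\in E_1$, while continuity of $\varphi$ and $\mathrm{diam}(B_{\alpha|k})\to0$ give $\varphi(x)=y$, so $y\in\varphi(E_1)$. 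This proves that $E_2$ is Borel in $X_2$.

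\textbf{Main obstacle.} The crux is the middle stage: invoking the Lusin separation theorem for analytic sets and arranging the bookkeeping so that the Borel sets $B_s$ nest and remain pairwise disjoint while still covering the images $\varphi(F_s)$. Injectivity of $\varphi$ enters precisely here, guaranteeing the disjointness of the analytic images at each node, and the simultaneous control of the diameters in $E_1$ and in $X_2$ is what makes the two inclusions of the final identity match, so that the branch-wise reconstruction recovers \emph{exactly} the image rather than a larger analytic set.
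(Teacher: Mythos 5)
The paper does not actually prove this lemma --- it is quoted verbatim from \cite[Theorem 3.9]{MR0226684} --- so your proposal can only be judged on its own terms. Your overall strategy is exactly the classical proof of the Lusin--Souslin theorem (refine the Polish topology so that $E_1$ becomes clopen and $\varphi$ continuous; build a Lusin scheme on $E_1$; separate the pairwise disjoint analytic images by Borel sets; recover $\varphi(E_1)$ as $\bigcap_{k}\bigcup_{|s|=k}B_s$), and the first and last stages are essentially sound. But the middle stage contains a genuine error. You claim to produce Borel sets $B_s$ with $\varphi(F_s)\subseteq B_s$ \emph{and} $\mathrm{diam}(B_s)\le 2^{-|s|}$ by intersecting the separating sets with a small ball around a point of $\varphi(F_s)$. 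These two requirements are incompatible: together they force $\mathrm{diam}(\varphi(F_s))\le 2^{-|s|}$, and a merely continuous $\varphi$ does not map sets of small diameter to sets of small diameter (continuity is not uniform; consider $x\mapsto 1/x$ on $(0,1)$). So the sets $B_s$ you describe need not exist, and the inclusion $\supseteq$ in your final identity, which leans precisely on $\mathrm{diam}(B_{\alpha|k})\to 0$, collapses. You even single out this ``simultaneous control of the diameters'' as the crux --- which it is, but it is exactly the point that fails as written.

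The gap is repairable in two standard ways. Either (a) refine the Lusin scheme further by intersecting each $F_{s\frown n}$ with the disjointified preimages $\varphi^{-1}(W_j)$ of a countable cover of $X_2$ by sets of diameter at most $2^{-|s|-1}$ --- legitimate now that $\varphi$ is continuous, and compatible with the requirement $\overline{F_{s\frown n}}\subseteq F_s$ --- so that $\varphi(F_s)$ itself has small diameter and your ball trick becomes licit; or (b) abandon the diameter control on $B_s$ altogether and instead impose $B_s\subseteq \overline{\varphi(F_s)}$ (intersect the separating set with this closure and with the parent $B_{s'}$). In case (b) one concludes as follows: $y\in B_{\alpha|k}\subseteq\overline{\varphi(F_{\alpha|k})}$ yields points $x_k\in F_{\alpha|k}$ with $\varphi(x_k)\to y$, while $x_k\to x$ because $x$ and $x_k$ both lie in $\overline{F_{\alpha|k}}$ and $\mathrm{diam}(F_{\alpha|k})\to 0$; continuity then gives $y=\varphi(x)$. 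Note also that either repair guarantees $B_s=\emptyset$ whenever $F_s=\emptyset$, which you need (and tacitly use) for Cantor's intersection theorem to produce the point $x$ in the first place.
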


\section{Vector field bound}
\label{2_appB}
A useful bound is given below  on the vector field $v:\R\times \Z_{1} \to \Z_{-1}$  of the initial value problem  \eqref{2_int.IVP}.  Recall that $v$ is  defined by \eqref{2_int.v}.

\begin{lm}
\label{2_Controle de v}
Assume \eqref{2_A1}-\eqref{2_A2}. Then the vector field $v: \R \times \Z_{1} \to \Z_{- 1}$ given by \eqref{2_int.v} is a continuous mapping bounded on bounded sets of $\R\times\Z_{1}$  and satisfying:
\begin{equation}
\label{2_majorv}
    \begin{aligned}
    && \exists C > 0 ~,~ \forall (t,z) \in \R \times \Z_{1} ~,~ ||v(t,z)||_{\Z_{- 1}} \leq C \, (||z||_{\Z_{1}}^{2}.||z||_{\Z}^{2}+1)\,.
    \end{aligned}
\end{equation}
\end{lm}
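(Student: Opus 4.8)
The plan is to strip away the time dependence first, and then reduce the whole statement to a single trilinear estimate controlled by \eqref{2_A2}. Since $e^{\pm itA}$ commutes with $A$, it restricts to a strongly continuous one-parameter group of isometries on $\Z_1=(Q(A),\|\cdot\|_{Q(A)})$ and extends by duality to a group of isometries on $\Z_{-1}$. Writing $w=e^{-itA}z$ we thus have $\|w\|_{\Z_1}=\|z\|_{\Z_1}$ and $\|w\|_{\Z}=\|z\|_{\Z}$, and, because the outer factor $e^{itA}$ in \eqref{2_int.v} is an isometry of $\Z_{-1}$,
\[
\|v(t,z)\|_{\Z_{-1}}=\|\partial_{\overline z}q_0(w)\|_{\Z_{-1}}.
\]
So it suffices to bound $\|\partial_{\overline z}q_0(w)\|_{\Z_{-1}}$ uniformly in $t$ by a polynomial in $\|w\|_{\Z_1}$ and $\|w\|_{\Z}$; the continuity statement will then come for free from the joint continuity of $(t,z)\mapsto e^{-itA}z$ on $\R\times\Z_1$ composed with the continuity of the polynomial map $w\mapsto\partial_{\overline z}q_0(w)$, exactly the chain-of-continuous-maps decomposition already used in the proof of Lemma \ref{2_Fmes}.

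Next I would make the derivative explicit. Differentiating \eqref{2_intr.q0} in the G\^ateaux sense and keeping the antilinear part gives, for every $u\in\Z_1$,
\[
\partial_{\overline z}q_0(w)[u]=\tfrac12\big(q(u\otimes w,w^{\otimes2})+q(w\otimes u,w^{\otimes2})\big)=q\big(P_2(u\otimes w),w^{\otimes2}\big),
\]
where $P_2$ is the symmetrizing projection \eqref{2_eq.projsym} and the last identity uses the symmetry of $w^{\otimes2}$. This is precisely the antilinear continuous form on $Q(A)$ that is identified with $\partial_{\overline z}q_0(w)\in\Z_{-1}$, so that $\|\partial_{\overline z}q_0(w)\|_{\Z_{-1}}=\sup_{\|u\|_{\Z_1}\le1}\big|\partial_{\overline z}q_0(w)[u]\big|$.

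The heart of the argument is then the relative bound \eqref{2_A2}. Rewritten as $|q(\phi,\phi)|\le\max(a,b)\,\langle\phi,(A_1+A_2+1)\phi\rangle=C\,\|\phi\|_{Q(A_1+A_2)}^2$ on $Q(A_1+A_2)$, it upgrades by polarization of the symmetric form $q$ to $|q(\phi,\psi)|\le C\,\|\phi\|_{Q(A_1+A_2)}\|\psi\|_{Q(A_1+A_2)}$. It remains to control the two graph norms entering $q(P_2(u\otimes w),w^{\otimes2})$. Using $(A_1+A_2)(a\otimes b)=(Aa)\otimes b+a\otimes(Ab)$ one computes
\[
\|w^{\otimes2}\|_{Q(A_1+A_2)}^2=2\langle w,Aw\rangle\,\|w\|_{\Z}^2+\|w\|_{\Z}^4\le 3\,\|w\|_{\Z_1}^2\,\|w\|_{\Z}^2,
\]
while, since $P_2$ is a contraction, $\|P_2(u\otimes w)\|_{Q(A_1+A_2)}\le\|u\otimes w\|_{Q(A_1+A_2)}\le C\,\|u\|_{\Z_1}\|w\|_{\Z_1}$. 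Combining these three ingredients yields $|\partial_{\overline z}q_0(w)[u]|\le C\,\|w\|_{\Z_1}^2\|w\|_{\Z}\,\|u\|_{\Z_1}$, hence $\|\partial_{\overline z}q_0(w)\|_{\Z_{-1}}\le C\,\|w\|_{\Z_1}^2\|w\|_{\Z}$; together with $\|w\|_{\Z_1}=\|z\|_{\Z_1}$, $\|w\|_{\Z}=\|z\|_{\Z}$ and Young's inequality this gives the polynomial estimate \eqref{2_majorv}. Boundedness on bounded sets of $\R\times\Z_1$ is then immediate, since the right-hand side does not depend on $t$ and $\|z\|_{\Z}\le\|z\|_{\Z_1}$.

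The step I expect to be genuinely delicate is the relative-boundedness/polarization passage: one must make sure that \eqref{2_A2}, stated as a form inequality on the (partly non-symmetric) space $Q(A_1+A_2)$, really does turn $\widetilde q$ into a bounded operator into the dual $Q'(A_1+A_2)$ as in \eqref{2_eq.qtilde}, that polarization of $q$ is legitimate there, and that the tensors $w^{\otimes2}$ and $P_2(u\otimes w)$ indeed lie in $Q(A_1+A_2)$ with the graph norms just computed (which requires $w,u\in Q(A)$ and the tensorial identity for $A_1+A_2$). Everything else — the unitary reduction in $t$, the computation of the G\^ateaux derivative, and the final continuity and boundedness bookkeeping — is soft and follows the pattern already present in the paper.
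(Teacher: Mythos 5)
Your strategy is the paper's own: reduce to $t=0$ by unitarity, write $\partial_{\overline z}q_{0}(z)[u]$ explicitly, and polarize \eqref{2_A2}. The only substantive difference is that you rescale the polarization into the product bound $|q(x,y)|\le C\|x\|_{Q(A_{1}+A_{2})}\|y\|_{Q(A_{1}+A_{2})}$, while the paper keeps the additive form $|q(x,y)|\le a(\langle x,(A_{1}+A_{2})x\rangle+\langle y,(A_{1}+A_{2})y\rangle)+b(\|x\|^{2}+\|y\|^{2})$; your graph-norm computations for $w^{\otimes2}$ and $P_{2}(u\otimes w)$ are correct, so you legitimately arrive at $\|v(t,z)\|_{\Z_{-1}}\le C\|z\|_{\Z_{1}}^{2}\|z\|_{\Z}$ (in fact a sharper intermediate estimate than the paper's, whose additive bound leaves the uncompensated term $a\|z\|_{\Z_{1}}^{2}\|\Phi\|_{\Z}^{2}$). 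The genuine gap sits in the step you dismissed as soft bookkeeping: Young's inequality does not turn $\|z\|_{\Z_{1}}^{2}\|z\|_{\Z}$ into $C(\|z\|_{\Z_{1}}^{2}\|z\|_{\Z}^{2}+1)$. Taking $\|z\|_{\Z}=\|z\|_{\Z_{1}}^{-1}\to0$ makes the left-hand side equal to $\|z\|_{\Z_{1}}\to\infty$ while the right-hand side stays bounded; what Young actually yields is $\|z\|_{\Z_{1}}^{2}\|z\|_{\Z}\le\tfrac12\|z\|_{\Z_{1}}^{2}(1+\|z\|_{\Z}^{2})$, and the term $\|z\|_{\Z_{1}}^{2}$ is not dominated by the right-hand side of \eqref{2_majorv}.

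This is not a defect of your argument alone: the paper's proof commits the same unjustified absorption in its final line (its displayed estimate contains the standalone term $a\|z\|_{\Z_{1}}^{2}$). In fact \eqref{2_majorv} as stated does not follow from \eqref{2_A1}--\eqref{2_A2}: for $q(x,y)=a\langle x,(A_{1}+A_{2})y\rangle_{\Z^{\otimes 2}}$ one has $\partial_{\overline z}q_{0}(z)[u]=a(\langle u,Az\rangle\|z\|_{\Z}^{2}+\langle z,Az\rangle\langle u,z\rangle)$, and choosing $z=\varepsilon(e_{0}+e_{\lambda})$ with $Ae_{0}=0$, $Ae_{\lambda}=\lambda e_{\lambda}$, $\lambda=\varepsilon^{-4}$, gives $|\partial_{\overline z}q_{0}(z)[e_{0}]|=a\varepsilon^{-1}\to\infty$ with $\|e_{0}\|_{\Z_{1}}=1$, while $\|z\|_{\Z_{1}}^{2}\|z\|_{\Z}^{2}+1\to3$. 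The statement you can and should prove is $\|v(t,z)\|_{\Z_{-1}}\le C\|z\|_{\Z_{1}}^{2}(1+\|z\|_{\Z}^{2})$, equivalently $\le C(\|z\|_{\Z_{1}}^{2}\|z\|_{\Z}^{2}+\|z\|_{\Z_{1}}^{2}+1)$, which your computation delivers verbatim; this corrected bound suffices for every later use of Lemma \ref{2_Controle de v}, since there the measures are carried by the unit ball of $\Z$ and satisfy $\int\|z\|_{\Z_{1}}^{2}\,d\widetilde\mu_{t}\le C$, and the curves in Lemma \ref{2_Fmes} lie in $\mathfrak{L}^{2,\infty}(\overline{I},\Z_{1})$.
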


\begin{proof}
We show that if $t$ is fixed, then the vector field $v(t,.)$ is continuous from $\Z_{1}$ to $\Z_{- 1}$. As the operator $e^{-itA}$ is unitary, we may suppose without loss of generality that $t=0$. A short computation shows that:

\[ \partial_{\overline{z}} q_{0}(z) = \frac{1}{2} ( q( z \otimes ., z^{\otimes 2}) +
 q( . \otimes z, z^{\otimes 2}))\,.
\]
The application $ \partial_{\overline{z}} q_{0}(z): u \in Q(A) \to \mathbb{C}$ is anti-linear and satisfies:

\[ \partial_{\overline{z}} q_{0}(z)[u]= \frac{1}{2} [q( z \otimes u, z^{\otimes 2}) + q( u \otimes z, z^{\otimes 2})]\,, \]

\[ =  \frac{1}{2} [( z \otimes u, \widetilde{q}( z^{\otimes 2})) + ( u \otimes z, \widetilde{q} (z^{\otimes 2}))] \,.
\]
The quadratic form $q$ is sesquilinear. The assumption \eqref{2_A2} yields a certain bound on $q(x,x)$ for $x \in \Z_{1}$. Using the following polarization formulas, one gets that  for any $(x,y) \in \Z_{1}^{2}$:

\[ q(x,y) = \frac{1}{4} \sum_{k=0}^{3} q(y+i^{k}x,y+i^{k}x)i^{k} ~,~ \]

\[ q(x,x)+q(y,y) = \frac{1}{4} \sum_{k=0}^{3} q(x+i^{k}y,x+i^{k}y) \,.
\]
We may now estimate $|q(x,y)|$:
\[
|q(x,y)| = |q(y,x)| \leq  \frac{1}{4} \sum_{k=0}^{3} |q(x+i^{k}y,x+i^{k}y)|\,,
\]

\[
 \leq  \frac{a}{4} \sum_{k=0}^{3} \langle x+i^{k}y,(A_{1}+A_{2})x+i^{k}y \rangle_{\Z^{\otimes 2} } +  \frac{b}{4} \sum_{k=0}^{3} \langle x+i^{k}y,x+i^{k}y \rangle_{\Z^{\otimes 2} } ~,~
   \]
since all the terms in the sum are non-negative.
But using the second polarization formula for a different quadratic form ( $\overline{q}\equiv(A_{1}+A_{2})$ or $\overline{q}\equiv Id_{\Z^{\otimes 2} }$) , one gets:

\[  \frac{a}{4} \sum_{k=0}^{3} \langle x+i^{k}y,(A_{1}+A_{2})x+i^{k}y \rangle_{\Z^{\otimes 2} } = a \big(\langle x,(A_{1}+A_{2})x \rangle_{\Z^{\otimes 2} } + \langle y,(A_{1}+A_{2})y \rangle_{\Z^{\otimes 2} }\big)\,\]
and
\[
 \frac{b}{4} \sum_{k=0}^{3} \langle x+i^{k}y,x+i^{k}y \rangle_{\Z^{\otimes 2} } = b( \langle x,x \rangle_{\Z^{\otimes 2} } + \langle y,y \rangle_{\Z^{\otimes 2} })\,.
 \]
So, we have the following estimate :

\begin{equation}
\label{2_contv}
 |q(x,y)| \leq a (\langle x,(A_{1}+A_{2})x \rangle_{\Z^{\otimes 2} } + \langle y,(A_{1}+A_{2})y \rangle_{\Z^{\otimes 2} }) + b( \langle x,x \rangle_{\Z^{\otimes 2} } + \langle y,y \rangle_{\Z^{\otimes 2} }) \,.
\end{equation}
Now we will use this estimate to obtain the right bound on the vector field $v$. Let $\Phi \in \Z_{1}$ such that $||\Phi||_{\Z_{1}}=1$, let $z \in \Z_{1}$, then taking $x= z \otimes \Phi$ and $ y= z^{\otimes 2}$ one obtains:
\[
|q(x,y)| \leq a ( \langle z \otimes \Phi,(A_{1}+A_{2})z \otimes \Phi \rangle_{\Z^{\otimes 2} } + \langle z^{\otimes 2} ,(A_{1}+A_{2}) z^{\otimes 2} \rangle_{\Z^{\otimes 2} }) + b ( \langle z \otimes \Phi, z \otimes \Phi \rangle_{\Z^{\otimes 2} } +\langle z^{\otimes 2} , z^{\otimes 2} \rangle_{\Z^{\otimes 2} })
  \]

\[
\leq a[ ||z||_{\Z_{1}}^{2} || \Phi ||_{\Z}^{2} + ||\Phi||_{\Z_{1}}^{2} ||z||_{\Z}^{2}+ 2 ||z||_{\Z_{1}}^{2} || z||_{\Z}^{2}] + b [ ||z||_{\Z}^{2}||\Phi||_{\Z}^{2} +  ||z||_{\Z}^{4}] \,.
\]
\medskip
As $||\Phi||_{\Z_{1}} = 1$ and  $||.||_{\Z} \leq ||.||_{\Z_{1}}$ then:
\[
|q(x,y)| \leq a (  ||z||_{\Z_{1}}^{2} + ||z||_{\Z}^{2} + 2||z||_{\Z_{1}}^{2} ||z||_{\Z}^{2}) + b(||z||_{\Z}^{2}+||z||_{\Z}^{4})\,.
  \]
Finally, we can find a constant $C > 0$ such that:
\[
|q(z \otimes \Phi,z^{\otimes 2})| \leq C( ||z||_{\Z_{1}}^{2} ||z||_{\Z}^{2} +1)  \,.
\]
Remember now that $v(0,z) = \frac{1}{2} ( q(z\otimes ., z^{\otimes 2}) +  q(.\otimes z, z^{\otimes 2}))$. So using the sesquilinearity of $q$, one obtains:
\[ |
|v(0,z)||_{\Z_{- 1}} \leq C( ||z||_{\Z_{1}}^{2} ||z||_{\Z}^{2} +1) ~.~
 \]
This proves that the anti-linear mapping  $\partial_{\overline{z}} q_{0}(z):u \to \partial_{\overline{z}} q_{0}(z)[u] $ is continuous and  bounded by $ C( ||z||_{\Z_{1}}^{2} ||z||_{\Z}^{2} +1)$. This in particular  ensures that the mapping  $z \in \Z_{1} \to \partial_{\overline{z}} q_{0}(z) \in \Z_{- 1}$ is  bounded on bounded sets of $\Z_{1}$ and satisfies the estimate \eqref{2_majorv}. The continuity of the vector field 
$v$ follows in a similar way since the difference $v(t,z)-v(s,w)$ can always be controlled with the 
estimate \eqref{2_contv}.

\end{proof}

\end{appendix}

\bibliographystyle{plain}

\end{document}